\title{EigenPrism: Inference for High-Dimensional Signal-to-Noise Ratios}
\author{Lucas Janson, Rina Foygel Barber, Emmanuel Cand\`{e}s}
\date{}
\newcommand{\one}[1]{{\mathbbm{1}}_{{#1}}}
\newcommand{\inner}[2]{\langle{#1},{#2}\rangle} % Inner product
\newcommand{\norm}[1]{\left\lVert{#1}\right\rVert}
\newcommand{\PP}[1]{\mathbb{P}\left\{{#1}\right\}} % Probability
\newcommand{\Pp}[2]{\mathbb{P}_{#1}\left({#2}\right)} % Probability
\newcommand{\EE}[1]{\mathbb{E}\left({#1}\right)} % Expectation
\newcommand{\Ep}[2]{\mathbb{E}_{#1}\left({#2}\right)}
\newcommand{\EEst}[2]{\mathbb{E}\left({#1}\ \middle| \ {#2}\right)} % Conditional expectation
\newcommand{\PPst}[2]{\mathbb{P}\left({#1}\ \middle| \ {#2}\right)} % Conditional probability
\def\R{\mathbb{R}}
\newcommand{\ident}[0]{\mathbf{I}}
\def\independenT#1#2{\mathrel{\rlap{$#1#2$}\mkern2mu{#1#2}}}
\newcommand\independent{\protect\mathpalette{\protect\independenT}{\perp}}
\newcommand{\iidsim}{\stackrel{\mathrm{iid}}{\sim}}
\newcommand{\indsim}{\stackrel{\independent}{\sim}}
\newcommand{\ignore}[1]{}
\newcommand{\eps}{\epsilon}
\newcommand{\pr}[1]{\mbox{$\mathbb{P}\left(#1\right)$}} 
\newcommand{\E}[1]{\mbox{$\mathbb{E}\left(#1\right)$}} 
\newcommand{\Ec}[2]{\mbox{$\mathbb{E}\left(\left.#1 \right| \,#2\right)$}} 
\newcommand{\iid}{\stackrel{\emph{i.i.d.}}{\sim}}
\newcommand{\bs}{\boldsymbol}
\newcommand{\var}{\ensuremath{\operatorname{Var}}}
\newcommand{\sd}{\ensuremath{\operatorname{SD}}}
\newcommand{\cov}{\ensuremath{\operatorname{Cov}}}
\newcommand{\val}{\ensuremath{\operatorname{val}}}
\newtheorem{2}{Theorem}
\newtheorem{1}[2]{Theorem}
\newtheorem{lemma}{Lemma}
\DeclareMathOperator*{\argmin}{arg\,min}
\newcommand{\eqnref}[1]{\eqref{eqn:#1}}
\newcommand{\Aset}{\mathcal{A}}
\numberwithin{equation}{section}
\definecolor{darkred}{rgb}{0.6, 0.0, 0.0}
\newcommand{\rev}[1]{{\color{black}#1}}
\begin{document}
\maketitle

\begin{abstract}
  Consider the following three important problems in statistical
  inference, namely, constructing confidence intervals for (1) the
  error of a high-dimensional ($p>n$) regression estimator, (2) the
  linear regression noise level, and (3) the genetic signal-to-noise
  ratio of a continuous-valued trait (related to the
  heritability). All three problems turn out to be closely related to
  the little-studied problem of performing inference on the
  $\ell_2$-norm of the \rev{signal} in high-dimensional linear
  regression. We derive a novel procedure for this, which is
  asymptotically correct \rev{when the covariates are
  multivariate Gaussian} and produces valid confidence intervals in
  finite samples as well. The procedure, called \emph{EigenPrism}, is
  computationally fast and makes no assumptions on coefficient
  sparsity or knowledge of the noise level. We investigate the width
  of the EigenPrism confidence intervals, including a comparison with
  a Bayesian setting in which our interval is just 5\% wider than the
  Bayes credible interval. We are then able to unify the three
  aforementioned problems by showing that the EigenPrism procedure
  with only minor modifications is able to make important
  contributions to all three. We also investigate the robustness of 
  coverage and find that the method applies in practice and in finite
  samples much more widely than \rev{just the case of multivariate Gaussian
  covariates}. Finally, we apply EigenPrism to a genetic dataset to
  estimate the genetic signal-to-noise ratio for a number of
  continuous phenotypes.

\smallskip
\noindent \textbf{Keywords.} EigenPrism, Heritability, Regression
error, Signal-to-noise ratio, Variance estimation
\end{abstract}

\section{Introduction}
\subsection{Problem Statement}
Throughout this paper we will assume the linear model
\begin{equation}
\label{linmod}
\bs{y} = \bs{X}\bs{\beta} + \bs{\varepsilon},
\end{equation}
where $\bs{y}$, $\bs{\varepsilon} \in \mathbb{R}^n$, $\bs{\beta} \in
\mathbb{R}^p$, and $\bs{X} \in \mathbb{R}^{n \times p}$. Denote the
$i^{th}$ row and $j^{th}$ column of $\bs{X}$ by $\bs{x}_i$ and
$\bs{X}_j$, respectively. \rev{We assume the $\bs{x}_i$ are drawn
  i.i.d. from a mean-zero distribution with covariance matrix $\bs{\Sigma}$.}

Our goal is to construct a two-sided confidence interval (CI) for the \rev{expected}
signal squared magnitude $\theta^2 := \|\rev{\bs{\Sigma^{1/2}}}\bs{\beta}\|_2^2$ (or
equivalently just $\theta$). Explicitly, for a given significance
level $\alpha \in (0,1)$, we want to produce statistics $L_{\alpha}$
and $U_{\alpha}$, computed from the data, obeying
\begin{equation}
\label{uncondprob}
\begin{split}
\pr{\theta^2 < L_{\alpha}} \le
\alpha/2, \\
\pr{\theta^2 > U_{\alpha}} \le
\alpha/2. \\
\end{split}
\end{equation}
In words, we want to be able to make the following statement: ``with $100(1-\alpha)$\%
confidence, $\theta^2$ lies between $L_{\alpha}$ and
$U_{\alpha}$.''

\subsection{Motivation}
\label{motivation}
This problem can be motivated first from a high level as an approach
to performing inference on $\bs{\beta}$ in high dimensions. Since
$p>n$, we cannot hope to perform inference on the individual elements
of $\bs{\beta}$ directly (without further assumptions, such as
sparsity), but there is hope for the one-dimensional parameter
$\theta$. Although $\theta$ is not often considered a parameter of
inference in regression problems, it turns out to be closely related
to a number of well-studied problems.
\begin{comment}
, as mentioned in the
Introduction. We briefly review some of these related problems in the
following paragraphs, but defer extended discussion to
Section~\ref{experiments}.
\end{comment}

Suppose one has an estimator $\hat{\bs{\beta}}$ for
$\bs{\beta}$. Perhaps the most important question to be asked is: how
close \rev{is} $\hat{\bs{\beta}}$ to $\bs{\beta}$? This question can be
answered statistically by estimating and/or constructing a CI for the
error of that estimate, namely,
$\|\hat{\bs{\beta}}-\bs{\beta}\|_2$. This is a fundamental statistical
problem arising in many applications. Consider, for example, a
compressed sensing (CS) experiment in which a doctor performs an MRI
on a patient. In MRI, the image is observed not in the spatial domain,
but in the frequency domain. If as many observations as pixels are
made, the result is the Fourier transform (with some added noise) of
the image, from which the original spatial pixels can be inferred. CS
theory suggests that one can instead use a number of observations
(rows of the Fourier matrix) that is a fraction of the number of
pixels, and still get very good recovery of the original image using
perhaps sophisticated $\ell_1$ methods \citep{candes06}. However, for
a specific instance, there is no good way to estimate how ``good'' the
recovery is. This can be important if the doctor is looking for a
specific feature on the MRI, such as a small tumor, and needs to know
if what he or she sees on the reconstructed image is accurate. In the
authors' experience, this is the most common question asked by
end-users of CS algorithms. Put another way, when the Nyquist sampling
theorem is violated, there is always a possibility of missing some of
the signal, so what reassurances can we make about the quality of the
reconstruction?

The estimation of the noise level $\sigma^2$ in a linear model is
another important statistical problem. Consider, for example,
performing inference on individual coefficients in the linear
model. When $n>p$, OLS theory provides an answer that depends on
$\sigma^2$ or at least an estimate of it. Indeed, one can find in almost
any introductory statistics textbook both estimation and inference
results for $\sigma^2$ in the case of $n>p$. However much recent work
has investigated the problem of performing inference on individual
coefficients in the high-dimensional setting of $n\le p$
\citep{berk2013, lockhart2014, taylor2014, javanmard2014,
  vandegeer2014, zhang2014, Lee2015}, and they all require knowledge of
$\sigma^2$. Unfortunately very few such results exist for the
high-dimensional setting of $n\le p$. Beyond regression coefficient
inference, $\sigma^2$ can be useful for benchmarking prediction
accuracy and for performing model selection, for instance using AIC,
BIC, or the Lasso. It also may be of independent interest to know
$\sigma^2$, for instance to understand the variance decomposition of $\bs{y}$.

A third topic is the study of genetic heritability
\citep{visscher2008}, which can be characterized by the following
question: what fraction of variance in a trait (such as height) is
explained by our genes, as opposed to our environment? Colloquially,
this can be considered a way of quantifying the nature versus nurture
debate.

It turns out that all three of these problems can be solved by
connection with our original problem of estimating and constructing
CIs for $\theta^2$. Indeed, in the MRI example, the doctor may split
the collected observations into two independent subsamples,
$(\bs{y}^{(0)},\bs{X}^{(0)})$ and $(\bs{y}^{(1)},\bs{X}^{(1)})$, and
construct an estimator $\hat{\bs{\beta}}$ from just $(\bs{y}^{(0)},\bs{X}^{(0)})$. Then
the vector $\tilde{\bs{y}} := \bs{y}^{(1)}-\bs{X}^{(1)}\hat{\bs{\beta}}$ follows
a linear model,
\begin{equation}
\label{HDreg}
\tilde{\bs{y}} = \bs{X}^{(1)}(\bs{\beta}-\hat{\bs{\beta}}) + \bs{\varepsilon},
\end{equation}
so that \rev{if $\bs{\Sigma}=I$,} inference on $\theta$ in this linear model corresponds exactly
to inference on the $\ell_2$ regression error of
$\hat{\bs{\beta}}$. Note that since the analysis is \emph{conditional}
on $\hat{\bs{\beta}}$, there is no restriction on how
$\hat{\bs{\beta}}$ is computed from $(\bs{y}^{(0)},\bs{X}^{(0)})$,
and so the method applies to \emph{any} coefficient estimation
technique. We defer the
connection between inference for $\theta^2$ and inference for
$\sigma^2$ and genetic variance decomposition to Section~\ref{relatedproblems}.

\subsection{Main Result}
Although we will ultimately
argue that our method applies more broadly, we will begin with
the following distributional assumptions,
\begin{equation}
\label{normassump}
\bs{x}_i \iid N(0, \bs{I}_p), \qquad \varepsilon_i \iid N(0, \sigma^2),
\end{equation}
with $\bs{X}$ independent of $\bs{\varepsilon}$. Note that $p>n$
ensures the design matrix will have a nontrivial null space, and thus
conditional on $\bs{X}$, the linear model~\eqref{linmod} (including
$\theta$) is unidentifiable (since any vector in the null space of
$\bs{X}$ can be added to $\bs{\beta}$ without changing the
data-generating process). This necessitates a random design
framework. The assumption of independence on the rows of the design
matrix is often satisfied in realistic settings when observations are
drawn independently from a population. However, the independence (and
multivariate Gaussianity) of the columns is rather stringent and just
a starting point\rev{---Sections~\ref{heritability}, \ref{robust}, and
  \ref{nfbc} demonstrate in simulations and on real data that in practice EigenPrism
  achieves nominal coverage even when the marginal distribution of the
entries of $\bs{X}$ are far from Gaussian, as well as in some cases
when $\bs{\Sigma}\neq I$}. We are treating the coefficient vector $\bs{\beta}$ as fixed,
not random.

Under these assumptions, we will develop in Section~\ref{method} an
estimator that is unbiased for $\theta^2$, is \rev{asymptotically} normally
distributed, and has an estimable tight bound on its variance. None of
these properties, including estimability of the variance, require
knowledge of the noise level $\sigma^2$ or any assumption, such as
sparsity, on the structure of the coefficient vector $\bs{\beta}$. From
these results, it is easy to generate valid CIs for $\theta^2$ (or
$\theta$), and we will show that such CIs are nearly as short as they can
be, and provide nominal coverage in finite samples under a variety of
circumstances (even beyond the assumptions made here).

\subsection{Related Work}
When $n > p$, ordinary least squares (OLS) theory gives us inference
for $\bs{\beta}$ and thus also for $\theta$. When $n \le p$, the
problem of estimating $\theta^2$ has been studied in
\cite{Dicker2014}. \cite{Dicker2014} uses the method of moments on two
statistics to estimate $\theta^2$ and $\sigma^2$ without assumptions
on $\bs{\beta}$\rev{, and with the same multivariate Gaussian random
  design assumptions used here}.  \cite{Dicker2014} also derives asymptotic
distributional results, but does not explore the estimation of the
parameters of the asymptotic distributions, nor the coverage of any CI
derived from it. The main contribution of our work is to provide
tight, \emph{estimable} CIs which achieve nominal coverage even in
finite samples.

Inference for high-dimensional regression error, noise level, and
  genetic variance decomposition are each individually well-studied, so we review some
  relevant works here. To begin with, many authors have studied
  high-dimensional regression error for specific coefficient
  estimators, such as the Lasso~\citep{Tibshirani1996}, often
  providing conditions under which this regression error asymptotes to
  0 (see for example \cite{Bayati2013,KnightFu2000}). To our knowledge
  the only author who has considered inference for a general estimator
  is \cite{Ward2009}, who does so using the Johnson--Lindenstrauss
  Lemma and assuming \emph{no noise}, that is, $\varepsilon_i\equiv 0$
  in the linear model~\eqref{linmod}. Thus the problem studied there
  is quite different from that addressed here, as we allow for noise
  in the linear model. Furthermore, because the Johnson--Lindenstrauss
  Lemma is not distribution-specific, it is conservative and thus
  Ward's bounds are in general conservative, while we will show that
  in most cases our CIs will be quite tight.

There has also been a lot of recent interest in estimating the
noise level $\sigma^2$ in high-dimensional regression
problems. \cite{Fan2012} introduced a refitted cross validation method
that estimates $\sigma^2$ assuming sparsity and a model selection
procedure that misses none of the correct variables. \cite{Sun2012}
introduced the scaled Lasso for estimating $\sigma^2$ using an
iterative procedure that includes the Lasso. \cite{Stadler2010} also
use an $\ell_1$ penalty to estimate the noise level, but in a finite
mixture of regressions model. \cite{Bayati2013} use the Lasso and
Stein's unbiased risk estimate to produce an estimator for
$\sigma^2$. All of these works prove consistency of their estimators,
but under conditions on the sparsity of the coefficient
vector. \rev{Indeed, it can be shown \citep{Giraud2012} that such a
  condition is needed when $\bs{X}$ is treated as fixed (which it is
  not in the present paper).} Under
the same sparsity conditions, \cite{Fan2012} and \cite{Sun2012} also
provide asymptotic distributional results for their estimators,
allowing for the construction of asymptotic CIs. What distinguishes our treatment of
this problem from the existing literature is that our estimator and CI for $\sigma^2$
make \emph{no} assumptions on the sparsity or structure of
$\bs{\beta}$.

An unpublished paper \citep{Owen2012} estimates $\theta^2$ using
a type of method of moments, with the goal of estimating genetic
heritability by way of a variance decomposition. Although
\citeauthor{Owen2012} gives conditions for consistency of his
esimator, no inference is discussed, and he points
out that the work is only valid for estimating heritability if the SNPs
are assumed to be independent. In general, heritability is a
well-studied subject in genetics, with especially accurate estimates coming from
studies comparing a trait within and between twins
(e.g. \cite{THG:8493904}). However, in order to better understand the
genetic basis of such traits, some authors have tried to directly
predict a trait from genetic information. Since most forms of genetic
information, such as SNP data, are much higher-dimensional than the
number of samples that can be obtained, the main approaches are either
to try and find a small number of important variables through
genome-wide association studies (e.g. \cite{weedon2008genome}) before
modeling, to estimate the kinships among subjects and use maximum
likelihood, assuming independence among SNPs and random effects, on
the trait covariances among subjects to estimate the
(narrow-sense) heritability (e.g. \cite{yang2010common,Golan2011a}), or
to assume random effects and use maximum likelihood to estimate the
signal-to-noise ratio in a linear model
(e.g. \cite{Kang2008,bonnet2014heritability,OwenPC}). However, attempts to explain
heritability by genetic prediction have fallen quite short
of the estimates from twin studies, leading to the famous conundrum of
\emph{missing heritability} \citep{manolio2009finding}. Our main
contribution to this field will be to consistently
estimate and provide inference for the signal-to-noise ratio
in a linear model, which is related to the heritability, without assumptions on
the coefficient vector (such as sparsity or random effects), knowledge
of the noise variance, or feature independence. This contribution may
be especially valuable given the increased popularity of the rare
variants hypothesis \citep{Pritchard2001} for missing heritability,
which conjectures that the effects of genetic variation on a trait may
not be strong and sparse, but instead distributed and weak (and their
corresponding mutations rare).

We note that neuroscientists have also done work estimating a
signal-to-noise ratio, namely the explainable variance in functional
MRI. That problem is made especially challenging due to correlations
in the noise, making it different from the i.i.d.~noise setting
considered in this paper. For this related problem,
\cite{Benjamini2013} are able to construct an unbiased estimator in
the random effects framework by permuting the measurement vector in
such a way as to leave the noise covariance structure unchanged.

\section{Constructing a Confidence Interval for $\theta^2$}
\label{method}
In this section we develop a novel method for constructing a valid CI
for $\theta^2$. This method does not require $\sigma^2$ to be
known. However, for pedagogical reasons, we begin with the simpler
situation in which $\sigma^2$ is known, which may arise in many signal
or image processing applications.

\subsection{Known $\sigma^2$}
\label{para}
Consider a sample of size $n$ from the linear model \eqref{linmod}. Then
\begin{equation*}
y_i \iid N(0, \theta^2+\sigma^2),
\end{equation*}
which implies
\begin{equation}
\label{chi2distr}
\frac{\| \bs{y} \|_2^2}{\theta^2 +
  \sigma^2} \sim \chi_n^2.
\end{equation}
Denote the $\tau^{th}$ quantile of the $\chi_n^2$ distribution by
$Q^{(n)}_{\tau}$. Then when $\sigma^2$ is known, a valid CI can be
obtained by setting
\begin{equation*}
L_{\alpha} = \frac{\| \bs{y} \|_2^2}{Q^{(n)}_{1-\alpha/2}} - \sigma^2, \qquad
U_{\alpha} = \frac{\| \bs{y} \|_2^2}{Q^{(n)}_{\alpha/2}} - \sigma^2,
\end{equation*}
that is, \eqref{uncondprob} is satisfied under this choice of
$L_{\alpha},U_{\alpha}$.  Note that the method of \cite{Ward2009} also
assumes $\sigma^2$ is known, and equal to zero, so we may consider
comparing it to the above. In particular we want to emphasize that
\cite{Ward2009}'s inference method is conservative due to the
generality of the Johnson--Lindenstrauss lemma, while
$[L_{\alpha},U_{\alpha}]$ contitutes an \emph{exact} $100(1-\alpha)\%$
CI. The same procedure can be generalized using the bootstrap on the
unbiased estimator 
\begin{equation}\label{eqn:T1def} T_1 := \|\bs{y}\|_2^2/n-\sigma^2\;.\end{equation}
See
Appendix~\ref{nonGaussKnownSigma} for details.

\subsection{Unknown $\sigma^2$}

\begin{comment}
  We now turn to the main procedure presented in this
work, which requires no knowledge of $\sigma^2$. In the previous
method, the statistic $T_1$ uses the sum of squared entries of
$\bs{y}$ and then adjusts for $\sigma^2$. It is clear that even
plugging in an estimate of $\sigma^2$ to $T_1$ and constucting a CI
could result in poor coverage, since the entire interval would be
shifted by the error in the estimate of $\sigma^2$ (usually plug-in
approximate CIs \emph{scale} with the error in the estimate being
plugged in, but do not shift). In this subsection we take a
\emph{weighted} sum of squared entries of a \emph{linear
  transformation of} $\bs{y}$ in such a way that $\sigma^2$ cancels in
expectation. Although the variance will depend on $\sigma^2$ (and thus
we would ultimately need to know $\sigma^2$ in order to compute an
exact CI), we will show that this variance can be very
well-approximated by observed quantities, even without the bootstrap,
allowing us to nevertheless construct CIs.
\end{comment}

\subsubsection{Theory}
Consider again the linear model \eqref{linmod} with assumptions
\eqref{normassump}, in particular that $\bs{X}$
has i.i.d. standard Gaussian elements. Recall that we assume $n < p$, and let $\bs{X} = \bs{UDV}^{\top}$
be a singular value decomposition (SVD) of $\bs{X}$, so that $\bs{U}$ is $n \times n$
orthonormal, $\bs{D}$ is $n \times n$ diagonal with non-negative,
non-increasing diagonal entries, and $\bs{V}$ is $p \times n$
orthonormal. Let
$\bs{z} = \bs{U}^{\top} \bs{y}$, and denote the diagonal vector of $\bs{D}$ by
$\bs{d}$. We emphasize that the singular values in $\bs{D}$ are
arranged along the diagonal in decreasing order, so that $d_1 \ge d_2 \ge \cdots d_n \ge 0$. Then
\begin{equation*}
\bs{z} = \bs{D}(\bs{V}^{\top}\bs{\beta}) + \bs{U}^{\top}\bs{\varepsilon},
\end{equation*}
and note that
\begin{equation*}
\begin{split}
\mathbb{E}\left(z_i^2 | \bs{d}\right) =&\, \mathbb{E}\left[\Big(d_i
  \bs{V}_i^{\top} \bs{\beta} + \bs{U}^{\top}_i \bs{\varepsilon}\Big)^2 \Big| \bs{d}\right], \\
=&\, d_i^2 \mathbb{E}\left[\Big(\bs{V}^{\top}_i \bs{\beta}\Big)^2 \Big| \bs{d} \right] +\rev{2} d_i
\mathbb{E}\left(\bs{V}_i^{\top} \bs{\beta}
  \bs{U}_i^{\top} \bs{\varepsilon} \big| \bs{d} \right) +
\mathbb{E}\left[\Big(\bs{U}_i^{\top} \bs{\varepsilon}\Big)^2 \Big| \bs{d}
\right], \\
=&\, d_i^2 \theta^2/p + \sigma^2, \\
\end{split}
\end{equation*}
where the third equality follows from the fact that in our model the
columns of $\bs{V}$ are uniformly
distributed on the unit sphere, and independent of $\bs{d}$.

To give some intuition for what follows, assume $n$ is even and
consider the expectation, conditional on $\bs{d}$, of the difference
between the sum of squares of the first half of the entries of
$\bs{z}$ and the sum of squares of the second half of the entries of $\bs{z}$,
\begin{equation*}
\begin{split}
\mathbb{E}\left(\left.\sum_{i=1}^{n/2} z_i^2 - \sum_{i=n/2+1}^{n}
    z_i^2\right|\bs{d}\right) =&\, \left(\sum_{i=1}^{n/2} d_i^2 \theta^2/p +
  \frac{n}{2}\sigma^2\right) - \left(\sum_{i=n/2+1}^{n} d_i^2 \theta^2/p +
  \frac{n}{2}\sigma^2\right), \\
=&\, \frac{\theta^2}{p}\sum_{i=1}^{n/2} \left(d_i^2 - d_{i+n/2}^2\right). \\
\end{split}
\end{equation*}
Note that the terms containing $\sigma^2$ in the first line cancel
out, but because the singular values $d_i$ of $\bs{X}$ are in decreasing
order, a term proportional to $\theta^2$ remains. We generalize this
idea below.

Let $\lambda_i = d_i^2/p$ be the eigenvalues of $\bs{X}\bs{X}^{\top}/p$,
let $\bs{w} \in \mathbb{R}^n$ be a vector of weights (which need not
be nonnegative), and consider the statistic $S = \sum_{i=1}^n w_i
z_i^2$. We can compute its expectation, conditional on $\bs{d}$, as
\begin{equation}
\begin{split}
\label{condexp}
\Ec{S}{\bs{d}} & = \mathbb{E}\left(\sum_{i=1}^{n} w_i z_i^2 \Big|
  \bs{d}\right) \\ 
& = \sum_{i=1}^n w_i \Big(\lambda_i \theta^2 + \sigma^2 \Big) \\
& = \theta^2 \sum_{i=1}^n w_i \lambda_i + \sigma^2 \sum_{i=1}^n w_i. \\
\end{split}
\end{equation}
Based on this
calculation, constraining $\sum_{i=1}^{n} w_i = 0$ and
$\sum_{i=1}^{n} w_i \lambda_i = 1$ makes $S$ an unbiased estimator of $\theta^2$
 (even
conditionally on $\bs{d}$). We can
also compute its conditional variance (see Appendix~\ref{minmaxvar}
for a detailed computation),
\begin{equation}
\label{varExact}
\begin{split}
\var(S | \bs{d}) = & \,2 \sigma^4 \sum_{i=1}^n w_i^2 + 4 \sigma^2
\theta^2 \sum_{i=1}^n w_i^2 \lambda_i + 2 \theta^4 \left[\frac{p}{p+2} \sum_{i=1}^n w_i^2 \lambda_i^2 - \frac{\left(\sum_{i=1}^n w_i \lambda_i\right)^2}{p+2}\right], \\
\end{split}
\end{equation}
which, under the aforementioned constraint $\sum_{i=1}^{n} w_i
\lambda_i = 1$ can be rewritten as
\begin{equation}
\label{varUB1}
\begin{split}
= & \,2 \sigma^4 \sum_{i=1}^n w_i^2 + 4 \sigma^2
\theta^2 \sum_{i=1}^n w_i^2 \lambda_i + 2 \theta^4 \left(\frac{p}{p+2} \sum_{i=1}^n w_i^2 \lambda_i^2 - \frac{1}{p+2}\right), \\
\le & \,2 \sum_{i=1}^n w_i^2 \left(\lambda_i\theta^2 + \sigma^2\right)^2, \\
= & \,2\left(\theta^2+\sigma^2\right)^2\sum_{i=1}^n w_i^2 \left(\lambda_i \rho +
1-\rho\right)^2, \\
\end{split}
\end{equation}
where 
\begin{equation}
\label{rho}
\rho = \frac{\theta^2}{\theta^2+\sigma^2}
\end{equation}
is the fraction of the variance of the $y_i$ accounted for by the
signal (recall that $\var(y_i) = \theta^2+\sigma^2$). The inequality
will be quite tight when $p$ is large and
$\frac{1}{p\sum_{i=1}^nw_i^2\lambda_i^2} \ll 1$. By noting that this
variance bound, as a function of $\rho$, is a quadratic equation with
positive leading coefficient, it follows that it is maximized either
at $\rho=0$ or at $\rho=1$. This leads to one more upper-bound,
\begin{equation}
\label{varUB}
\var(S|\bs{d}) \le 2 \left(\theta^2 + \sigma^2
\right)^2 \cdot \max \left( \sum_{i=1}^n w_i^2,
\sum_{i=1}^n w_i^2 \lambda_i^2 \right).
\end{equation}

The above equation has two striking features. The first is that it
depends on $\theta^2$ and $\sigma^2$ only through the sum $\theta^2 + \sigma^2$, for
which we have an excellent estimator given by $||\bs{y}||_2^2/n$. The
second feature is that it separates into the product of two terms: one
term that does not depend on $\bs{w}$, and a second term that is
known (in that it contains nothing that needs to be estimated) and
(strictly) \emph{convex} in $\bs{w}$. Thus we can use convex
optimization to find the vector $\bs{w}$ that minimizes the
upper-bound \eqref{varUB} on the variance subject to the two linear
equality constraints mentioned earlier, $\sum_{i=1}^{n} w_i = 0$ and
$\sum_{i=1}^{n} w_i \lambda_i = 1$, which ensure that $S$
remains unbiased for $\theta^2$. Figure~\ref{weights} shows an example
of such an optimized weight vector when $n=200$ and $p=2000$.
\begin{figure}\centering
\includegraphics[width=0.45\textwidth]{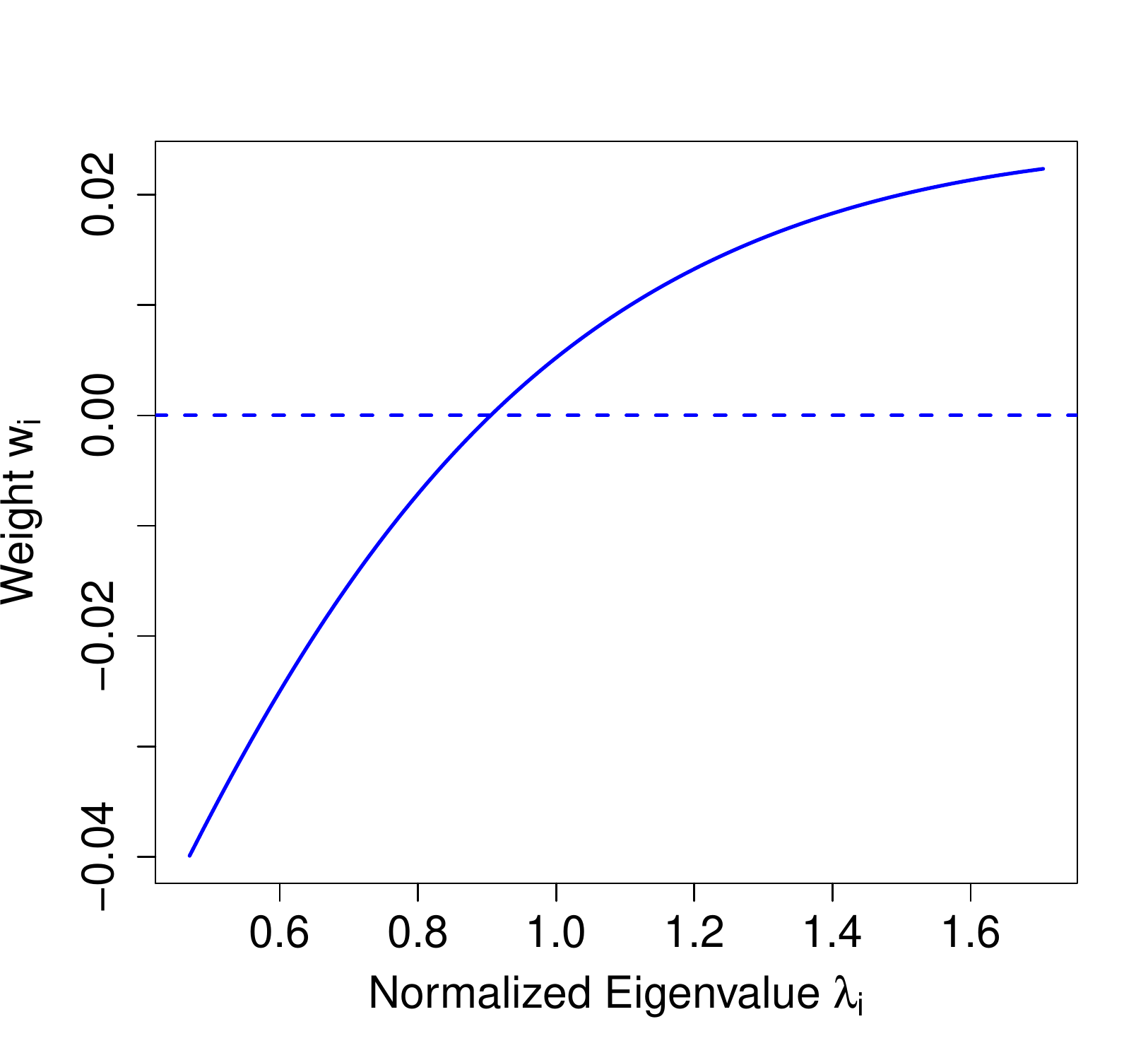}
\caption{Plot of weights $w_i$ as a function of normalized eigenvalues
  $\lambda_i$ for $n=200$ and $p=2000$.}
\label{weights}
\end{figure}
Note that instead of just giving some positive weight to large
$\lambda_i$'s and some negative weight to small $\lambda_i$'s, the
optimal weighting is a smooth function of the $\lambda_i$. This makes
sense, as the $z_i$'s with large associated $\lambda_i$ have a larger
signal-to-noise ratio, and should be given greater weight. Denote the
statistic $S$ constructed using these constrained-optimal weights by
$T_2$. Explicitly, let $\bs{w}^*$ be the
solution to the following convex optimization program $\mathcal{P}_1$:
\begin{equation}
\label{wopt}
\argmin_{\bs{w}\in\mathbb{R}^n} \; \max \left( \sum_{i=1}^n w_i^2, \sum_{i=1}^n
  w_i^2 \lambda_i^2 \right) \quad
\text{such that}  \, \sum_{i=1}^n w_i = 0,  \; \sum_{i=1}^n w_i \lambda_i = 1
\end{equation}
and denote by $\val(\mathcal{P}_1)$ the minimized objective function value. Then the
statistic for our main procedure in this paper, which we call the
\emph{EigenPrism} procedure, is the following,
\begin{equation}
\begin{split}
\label{minmax}
T_2 :=&\, \sum_{i=1}^{n} w^*_i z_i^2, \\
\Ec{T_2}{\bs{d}} =&\, \theta^2, \\
\sd(T_2|\bs{d}) \lesssim&\, \sqrt{2\val(\mathcal{P}_1)} \, \frac{\|\bs{y}\|_2^2}{n}, \\
\end{split}
\end{equation}
where the only approximation in the variance is the replacement of
$\theta^2+\sigma^2$ by its estimator $\|\bs{y}\|_2^2/n$.

With these calculations in place, we now define our
$(1-\alpha)$-confidence interval for $\theta^2$, by assuming that
$T_2$ follows an approximately normal distribution (discussed later
on).  We construct lower and upper endpoints
\[
L_{\alpha} := \max\left(T_2 - z^\star_{1-\alpha/2}\cdot  \sqrt{2\val(\mathcal{P}_1)} \, \frac{\|\bs{y}\|_2^2}{n}, 0\right)\;,
\quad
U_{\alpha}:=T_2 + z^\star_{1-\alpha/2}\cdot  \sqrt{2\val(\mathcal{P}_1)} \, \frac{\|\bs{y}\|_2^2}{n}\;,
\]
where the value of $L_{\alpha}$ is clipped at zero since it holds
trivially that $\theta^2>0$, and where $z^\star_{1-\alpha/2}$ is the
$(1-\alpha/2)$ quantile of the standard normal distribution.

{\bf Remark.} The idea of constructing the $z_i$'s as contrasts has
been used in the heritability literature before,
e.g.~\cite{Kang2008,bonnet2014heritability,OwenPC}, but in a strict
random effects framework. In particular, when the entries of
$\bs{\beta}$ are i.i.d. Gaussian, the $z_i$'s become independent. With
independent $z_i$'s whose distribution depends only on the signal
($\theta^2$) and noise ($\sigma^2$) parameters, the authors are able
to apply maximum likelihood estimation, with associated asymptotic
inference results for the signal, noise, or signal-to-noise ratio (we
note that \cite{bonnet2014heritability} generalize such estimators
somewhat to the case of a Bernoulli-Gaussian random effects
model). The crucial difference between our work and theirs is that we
make no assumptions (e.g., Gaussianity, sparsity) on the coefficient
vector, and thus not only are the $z_i$'s not independent in our
setting, but their dependence (and thus the full likelihood) is a
function of the products $\beta_i\beta_j$, and thus a maximum
likelihood approach in this setting would still be overparameterized.

Next, we discuss the coverage and width properties of this constructed
confidence interval.

\subsubsection{Coverage}
Now that we are equipped with an unbiased estimator and a computable
variance (upper-bound), and have constructed a confidence interval
(CI) using a normal approximation, there are two main questions to
answer in order to determine whether these CIs will exhibit the
desired coverage properties.  In particular, we would like to know if
substituting $\theta^2+\sigma^2$ with $\|\bs{y}\|_2^2/n$ substantially
affects the variance formula, and we would like to know if $T_2$ is
approximately normally distributed (so that we can construct arbitrary
CIs from just the second moment). For the first question, since
$\|\bs{y}\|_2^2$ is a rescaled $\chi_n^2$ random variable, \rev{for
  nominal coverage of $1-\alpha$, the coverage actually achieved can
  be closely approximated by $\PP{|N(0,1)|\le
    z_{1-\alpha/2}\cdot\chi_n^2/n}$ (where the $N(0,1)$ and the
  $\chi_n^2$ are independent), assuming exact
  normality. Table~\ref{tab:coverage} shows that for nominal 95\%
  coverage, one would need fewer than 20 samples to achieve less than 90\% coverage.}
\begin{table}[ht]\centering
\begin{tabular}{|r|c|c|c|c|c|c|c|}
\hline
$n$ & 10 & 20 & 50 & 100 & 500 & 1000 & 5000 \\
\hline
Coverage & 87.5\% & 91.0\% & 93.3\% & 94.1\% & 94.8\% & 94.9\% & 95.0\% \\
\hline
\end{tabular}
\caption{Values of $\PP{|N(0,1)|\le z_{0.975}\cdot\chi_n^2/n}$ for a range
of $n$.}
\label{tab:coverage}
\end{table}
\begin{comment}
we can
compute its coefficient of variation,
\begin{equation}
\label{T1var}
\frac{\sd(\|\bs{y}\|_2^2/n)}{\theta^2+\sigma^2}
=
\frac{(\theta^2+\sigma^2)\cdot\sd(\chi_n^2/n)}{\theta^2+\sigma^2}
= \sqrt{\frac{2}{n}}.
\end{equation}
Thus for even a modest high-dimensional sample size of $n=1000$, the
standard deviation of $\|\bs{y}\|_2^2/n$ is less than 5\% of its
expected value. 
\end{comment}
For the second question, \rev{the following theorem establishes the
  asymptotic normality of $T_2$.
\begin{2}\label{T2normal}
Under the linear model~\eqref{linmod} with Gaussian random design and
errors given in Equation~\eqref{normassump}, the estimator $T_2$ as
defined in Equation~\eqref{minmax} is asymptotically
normal as $n,p\rightarrow\infty$ and $n/p \rightarrow \gamma\in
(0,1)$.  This holds for any values of $\theta^2$ and $\sigma^2$,
including values that vary with $n$. Explicitly,
\[\frac{T_2-\theta^2}{\sd(T_2|d)} \stackrel{d}{\longrightarrow} N(0,1).\]
\proof The proof is given in Appendix~\ref{asymptoticnormality}.
\end{2}
For finite-sample results,} we defer
to the simulation results of Section~\ref{regrerror} to show that for problems of
reasonable size ($\min\{n,p\} \gtrsim 100$), CIs constructed as if
$T_2$ were \emph{exactly} normal with variance \emph{exactly} given by
Equation~\eqref{minmax} \emph{never} result in below-nominal coverage.

\subsubsection{Width}
\label{sec:width}
Once we have confirmed that our CIs provide the proper coverage, the
next topic of interest is their widths. It is not hard to obtain a
closed-form asymptotic upper-bound for $\var(T_2)$ (the
details are worked out in Appendix~\ref{CFvarUB}). In particular, letting
$Y_{\gamma}$ denote a random variable with Mar{\v c}enko--Pastur (MP) distribution
with parameter $\gamma$ \citep{Marcenko1967}, and $M_{\gamma}$ denote the median of
$Y_{\gamma}$, define the constants,
\begin{equation}\label{eqn:ABgamma}
\begin{split}
A_{\gamma} =&\, \mathbb{E}\left[Y_{\gamma} \cdot (\mathbbm{1}_{Y_{\gamma} \ge M_{\gamma}}-\mathbbm{1}_{Y_{\gamma}< M_{\gamma}})\right], \\
B_{\gamma} =&\, \E{Y_{\gamma}^2}. \\
\end{split}
\end{equation}
Then in the limit as $n,p \rightarrow \infty$ and $n/p \rightarrow \gamma \in (0,1)$,
\begin{equation}
\label{varbound}
\sqrt{n} \cdot \frac{\sqrt{\var(T_2)}}{\theta^2+\sigma^2} \le \sqrt{2}
\cdot \max\left(\frac{1}{A_{\gamma}},\frac{\sqrt{B_{\gamma}}}{A_{\gamma}}\right).
\end{equation}

We can draw a few conclusions from Equation~\eqref{varbound}. The most
obvious is that for $n,p \rightarrow \infty$, $n/p \rightarrow \gamma
\in (0,1)$, $\sigma^2$ asymptotically bounded above and $\theta^2$
asymptotically bounded \emph{below}, the error of $T_2$, as a fraction
of its estimand $\theta^2$, converges to 0 in probability at a rate of
$n^{-1/2}$. Note that we make no assumptions at all on the structure
of $\bs{\beta}$, and just require that $\theta^2$ does not asymptote
at 0. The equation also lets us compute a conservative upper-bound on
the asymptotic relative efficiency (ARE), defined as the asymptotic
ratio of standard deviations (although it is often defined by
variances elsewhere), of $T_2$ with respect to $T_1$ from
Section~\ref{para} (see~\eqref{eqn:T1def}), the latter of which uses exact knowledge of
$\sigma^2$ and has standard deviation characterized \rev{by the
  $\chi^2_n$ distribution}. While we may not be able to formulate a
closed-form expression for it in terms of expectations due to the
constrained minimization functional, the standard deviation bound for
$T_2$ in Equation~\eqref{minmax} will also converge to a constant
times $\sd(T_1)$ under the same asymptotic conditions, where the
constant depends only on the MP distribution. This is because the
optimal weights are a smooth function of the $\lambda_i$. Due to fast
convergence to the MP distribution, we can numerically approximate
this \emph{exact} asymptotic ratio. Figure~\ref{T1T2are} shows this
estimate of the ARE of $T_2$ to $T_1$ as a function of $\gamma$. Note
that the standard deviation bound for $T_2$ in
Equation~\eqref{minmax}, used to compute the curve in
Figure~\ref{T1T2are}, is still an upper-bound for the ARE of $T_2$
with respect to $T_1$, but it reflects the ratio of CI widths between
the EigenPrism procedure and a CI constructed from $T_1$ with
knowledge of $\sigma^2$.
\begin{figure}\centering
\includegraphics[width=0.45\textwidth]{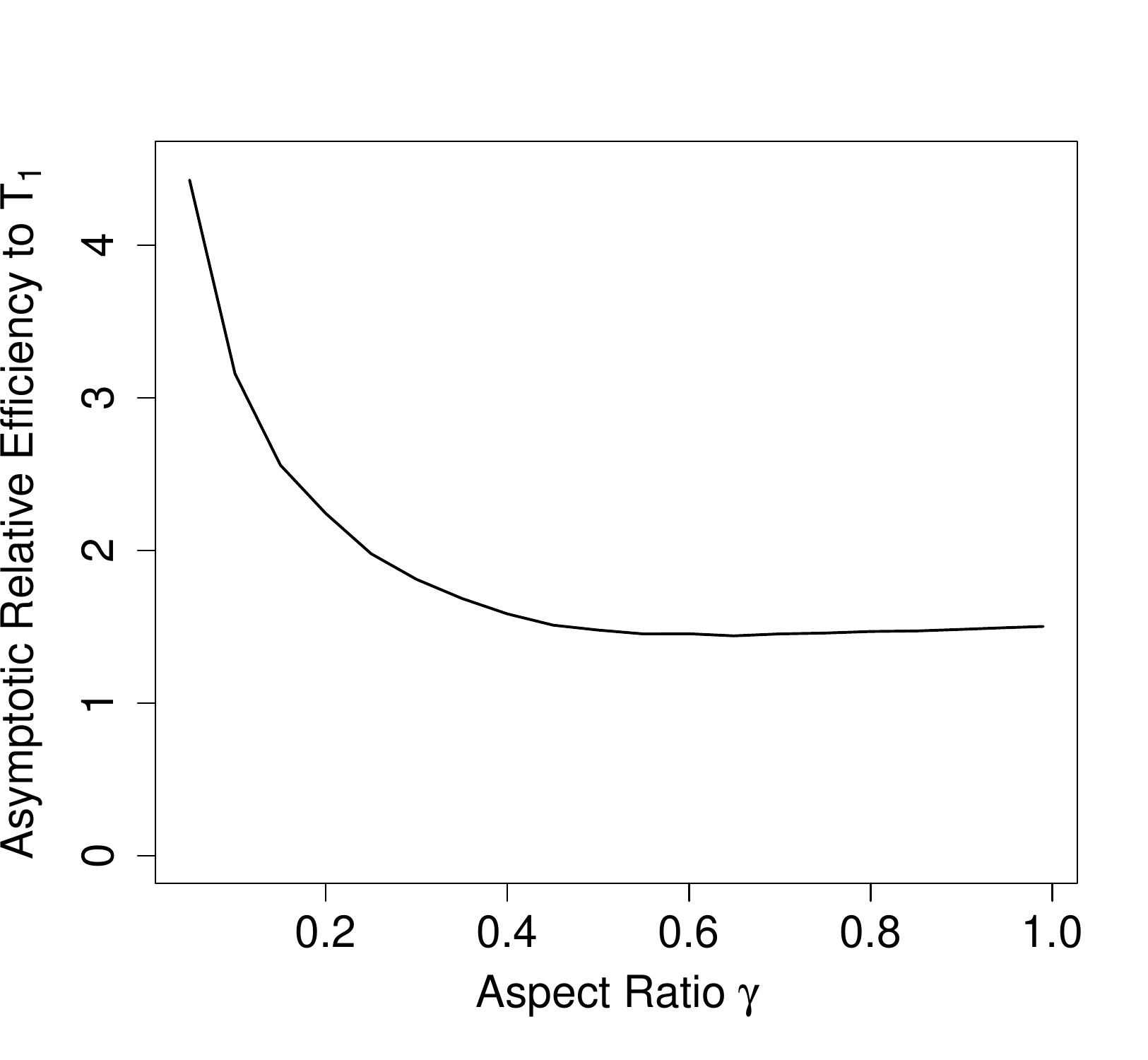}
\caption{Estimate of the asymptotic relative efficiency of
  $T_2$ to $T_1$.}
\label{T1T2are}
\end{figure}
The figure demonstrates how close in width the EigenPrism procedure
comes to an exact CI for $T_1$ which knows $\sigma^2$. In particular,
for $\gamma \gtrsim 0.25$, the EigenPrism CIs are at most twice as
wide as those for $T_1$.

 Another notable feature of
  Figure~\ref{T1T2are} is how large the ARE becomes as $\gamma
  \rightarrow 0$. This is a symptom of an important property of not
  just our procedure, but the frequentist problem as a whole. First,
  it is clear that if all the $\lambda_i\equiv 1$, our procedure
  fails, as the $z_i^2$ no longer provide any contrast between
  $\theta^2$ and $\sigma^2$, and no linear combination of them will
  produce an unbiased statistic for $\theta^2$. Intuitively, note that
  $\E{z_i^2} \propto \lambda_i\rho + (1-\rho)$, so that the problem of
  estimating $\rho$ is that of estimating the slope and intercept of a
  regression line. But in regression, when the predictor variable
  assumes a constant value, as it would when $\lambda_i\equiv 1$, it
  becomes impossible to estimate the slope and intercept. To understand better
  how our procedure performs when the spread of the $\lambda_i$
  approaches zero, consider the case when
  $\lambda_1=\cdots=\lambda_{n/2}=1+a$ and
  $\lambda_{n/2+1}=\cdots=\lambda_n=1-a$. In this case $\sd(\lambda_i)
  = a$, and it is easy to show that
\begin{equation*}
\val(\mathcal{P}_1) = \frac{1+a^2}{a^2n},
\end{equation*}
so if $\sqrt{n}\sd(\lambda_i)\rightarrow 0$, then $a^2n\rightarrow 0$
and so $\val(\mathcal{P}_1)\rightarrow\infty$.

Returning to our original
model in which $\bs{X}$ is i.i.d. $N(0,1)$, the $\lambda_i$'s will be
approximately MP-distributed with parameter
$\gamma=n/p$. 
\begin{figure}\centering
\includegraphics[width=0.45\textwidth]{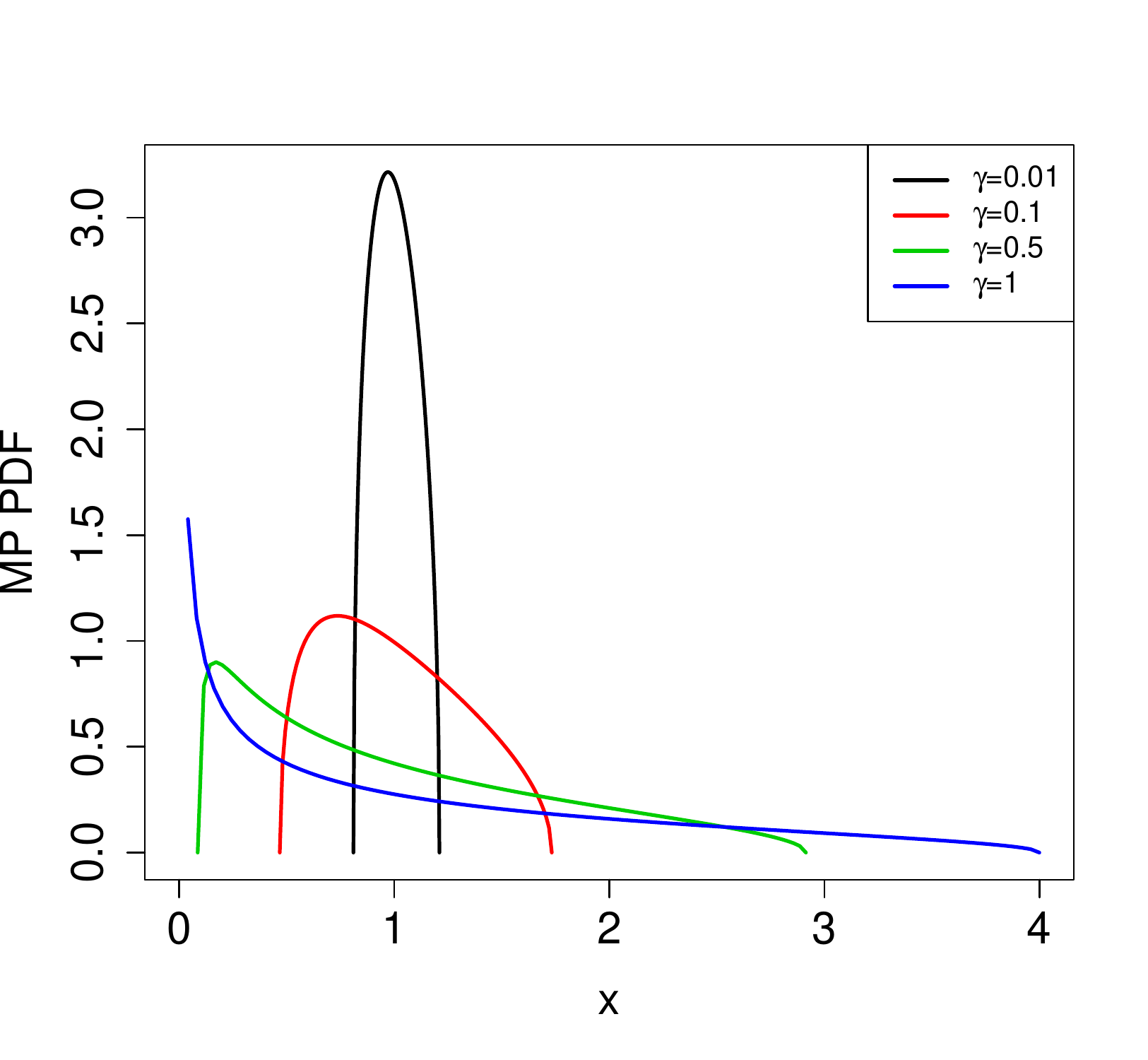}
\caption{Probability density function (PDF) of Mar{\v c}enko--Pastur
  distribution for various values of $\gamma$.}
\label{MPdistr}
\end{figure}
Figure~\ref{MPdistr} shows visually how the width of the MP
distribution depends on $\gamma$, and analytically,
$\sd(Y_{\gamma})=\sqrt{\gamma}$. We show in the following theorem
(proved in Appendix~\ref{lambdasone}) that if the $\lambda_i$'s are
too close to 1 and $n\ll p$, it is impossible for \emph{any procedure}
to reliably distinguish between the case of $\rho=0$ (pure noise) and
$\rho=1/2$ (variance equally split between signal and noise).

\begin{1}
\label{thm:lambdas}
Let $p\geq n\geq 1$. Suppose that
\begin{equation}\label{eqn:Z_model}\bs{Z} = \theta \cdot \bs{DV}^{\top} \bs{a} + \sigma \cdot \bs{\varepsilon}\end{equation}
where $\theta>0$, $\sigma>0$, and a unit vector $\bs{a}$ are all fixed
but unknown, $\bs{D}\in\mathbb{R}^{n\times n}$ is a known nonnegative
diagonal matrix with $\sum_{i=1}^n \lambda_i:= \sum_{i=1}^n
D_{ii}^2/p= n$, $\bs{V}\in\mathbb{R}^{p\times n}$ is a random
Haar-distributed orthonormal matrix, and $\bs{\varepsilon}\sim N(0,\ident_n)$
independent of $\bs{V}$.  Consider the simple scenario where we are
trying to distinguish between only two possibilities, denoted by
distributions $P_0$ and $P_1$:
\[P_0: \text{ $\bs{Z}$ follows \eqnref{Z_model} with }\theta=0\text{ and }\sigma=1,\text{\quad vs.\quad} P_1: \text{ $\bs{Z}$ follows \eqnref{Z_model} with }\theta=\sigma=\frac{1}{\sqrt{2}}\;.\]
Then for any test $\psi: \mathbb{R}^n\rightarrow \{0,1\}$, the power to correctly distinguish between these two distributions is bounded as
\[\mathbb{P}_{\bs{Z}\sim P_0}\left[\psi(\bs{Z})=1\right] + \mathbb{P}_{\bs{Z}\sim P_1}\left[\psi(\bs{Z})=0\right] \geq 1 -\left(
  \sqrt{\frac{\sum_{i=1}^n (\lambda_i-1)^2}{8}} +
  \sqrt{\frac{n/p}{4\pi}}\right)\;.\]
\end{1}

In other words, every test $\psi$ has high error, with
\[\mathbb{P}(\text{Type I error}) + \mathbb{P}({\text{Type II error}}) \geq 1 - \left(
  \sqrt{\frac{\sum_{i=1}^n (\lambda_i-1)^2}{8}} +
  \sqrt{\frac{n/p}{4\pi}}\right)\;,\] so that if the $\lambda_i$ are
tightly distributed around 1 and $n\ll p$, the problem of estimating
$\rho$, and thus $\theta$, is extremely difficult. Note that for
approximately MP-distributed $\lambda_i$ with $\gamma=n/p \approx 0$,
both $\sum_{i=1}^n (\lambda_i-1)^2$ and $n/p$ are quite small,
explaining the spike in ARE in Figure~\ref{T1T2are} as
$\gamma\rightarrow 0$.

Another way to evaluate how short the EigenPrism CIs are, compared to how
short they could be, is to compare to a Bayesian procedure on a
Bayesian problem. This is done in Section~\ref{regrerror}.

\subsubsection{Computation}
As a procedure intended for use in high-dimensional settings, it is of
interest to know how the EigenPrism procedure scales with large
problem dimensions. There are essentially two parts to the procedure:
the SVD, and the optimization~\eqref{wopt} to choose $\bs{w}^*$. Due
to the strict convexity of the optimization problem, it is extremely
fast to solve \eqref{wopt} and in all of our simulations the runtime
was dominated by the SVD computation. In Appendix~\ref{cvx} we include
a snippet of Matlab code in the popular convex optimization language
CVX \citep{cvx,gb08} that reformulates the optimization
problem~\eqref{wopt} as a second-order cone problem. Even if the
optimization becomes extremely high-dimensional, note that the optimal
weights $\bs{w}^*$ are a smooth function of their associated
eigenvalues $\lambda_i$. Thus we can approximate $\bs{w}^*$ extremely
well by subsampling the $\lambda_i$, computing a lower-resolution
optimal weight vector, and then linearly interpolating to obtain the
higher-resolution, high-dimensional $\bs{w}^*$. For the SVD, note that
$\bs{V}$ never needs to be computed. Thus, the computation scales as
$n^2p$ with a small constant of proportionality, as the SVD of
$\bs{X}\bs{X}^{\top}$ is all that is needed.

\section{Derivative Procedures}
\label{relatedproblems}
In this section, we go into more detail about the three related problems of
performing inference on estimation error of a high-dimensional
regression estimator, noise level in a high-dimensional linear model, and genetic
signal-to-noise ratio, including simulation results. MATLAB code for
the numerical results in this paper is available on the first author's
website.

\subsection{High-Dimensional Regression Error}
\label{regrerror}
We have already shown in Section~\ref{motivation} that the problem of
inference for high-dimensional regression error is equivalent, with a
change of variables, to that of inference on $\theta$. Under
assumptions~\eqref{normassump}, our framework even allows for selection of a subset of
$\hat{\bs{\beta}}$, for instance if the doctor sees an anomaly in a
region of the reconstructed image, he or she may only care about error
in that region. In that case, for a subset of indices $R$ (with
corresponding complement $R^c$), Equation~\eqref{HDreg} can be
rewritten as
\begin{equation*}
\tilde{\bs{y}} = \bs{X}_R^{(1)}(\bs{\beta}_R-\hat{\bs{\beta}}_R) +
\bs{X}_{R^c}^{(1)}(\bs{\beta}_{R^c}-\hat{\bs{\beta}}_{R^c})+
\bs{\varepsilon} = \bs{X}_R^{(1)}(\bs{\beta}_R-\hat{\bs{\beta}}_R) + \tilde{\bs{\varepsilon}},
\end{equation*}
where $\tilde{\bs{\varepsilon}}$ is an i.i.d.~Gaussian vector
independent of $\bs{X}_R^{(1)}$, so that defining
$\theta=\|\bs{\beta}_R-\hat{\bs{\beta}}_R\|_2$ puts this problem
squarely into the EigenPrism framework, \emph{regardless} of the fact
that $R$ may be chosen after observing $\hat{\bs{\beta}}$ (recall that $\hat{\bs{\beta}}$
was fitted on an independent subset of the data, $(\bs{X}^{(0)},\bs{y}^{(0)})$).

We note that
the requirement that the columns of $\bs{X}$ be independent in order
to perform inference on $\|\hat{\bs{\beta}}-\bs{\beta}\|_2^2$ cannot
be relaxed. However, with a known
covariance $\bs{\Sigma}$, one could instead perform inference on
$\|\bs{\Sigma}^{1/2}(\hat{\bs{\beta}}-\bs{\beta})\|_2^2$. Of course,
inference for either $\|\hat{\bs{\beta}}-\bs{\beta}\|_2^2$ or
$\|\bs{\Sigma}^{1/2}(\hat{\bs{\beta}}-\bs{\beta})\|_2^2$ is sufficient if
the ultimate goal is to invert the CI to test a global null
hypothesis on the coefficient vector.

What remains to be seen then is (1) that coverage is not lost by
approximating $\theta^2+\sigma^2$ by $\|\bs{y}\|_2^2$ and by assuming
$T_2$ is normal, and (2) how short the resulting CIs are relative
to how short they could be. To investigate (1), we fixed $p$ at
$10^4$, $\theta^2+\sigma^2 = 10^4$, varied $n$ on a log scale between
0 and $p$, and varied $\rho$ (recall Equation~\ref{rho}) between 0 and
1 by taking equally spaced values of $\log(\rho/(1-\rho))$. Note
that due to rotational symmetry, the direction of $\bs{\beta}$ is
irrelevant. We ran $10^4$ simulations of the EigenPrism procedure to
generate 95\% CIs and compared coverage across the settings in
Figure~\ref{minmaxcoverage}. \rev{We also simulated CIs using
  the results of \cite{Dicker2014} by simply plugging in its
  estimators for $\theta^2$ and $\sigma^2$ to its asymptotic variance
  formula (which depends on the exact parameters).}
\begin{figure}[!t]
  \centering
    \subfigure[]{\label{minmaxcoverage}
      \includegraphics[width=0.45\textwidth]{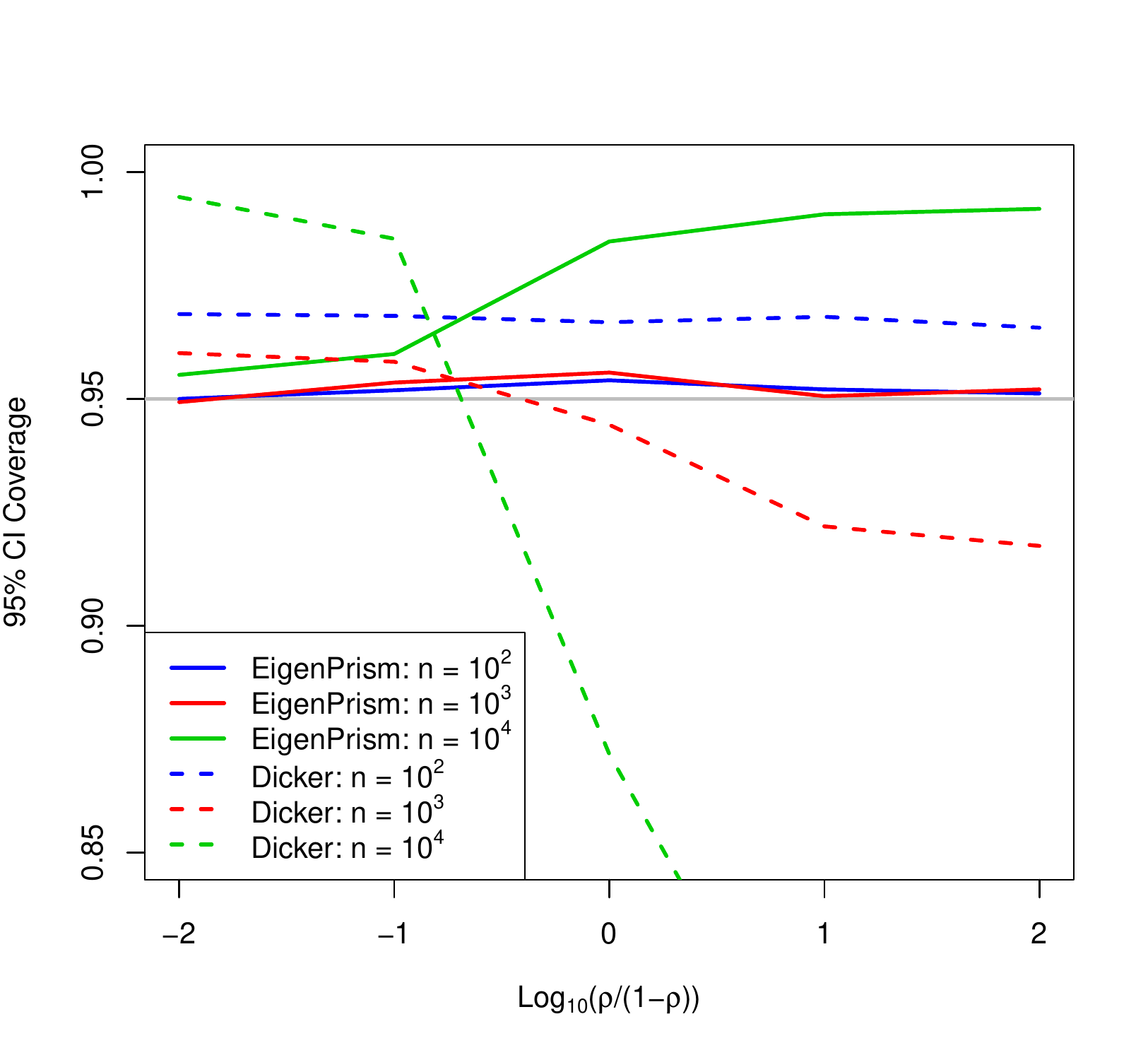}
    }
    \qquad
    \subfigure[]{\label{figbayes}
      \includegraphics[width=0.45\textwidth]{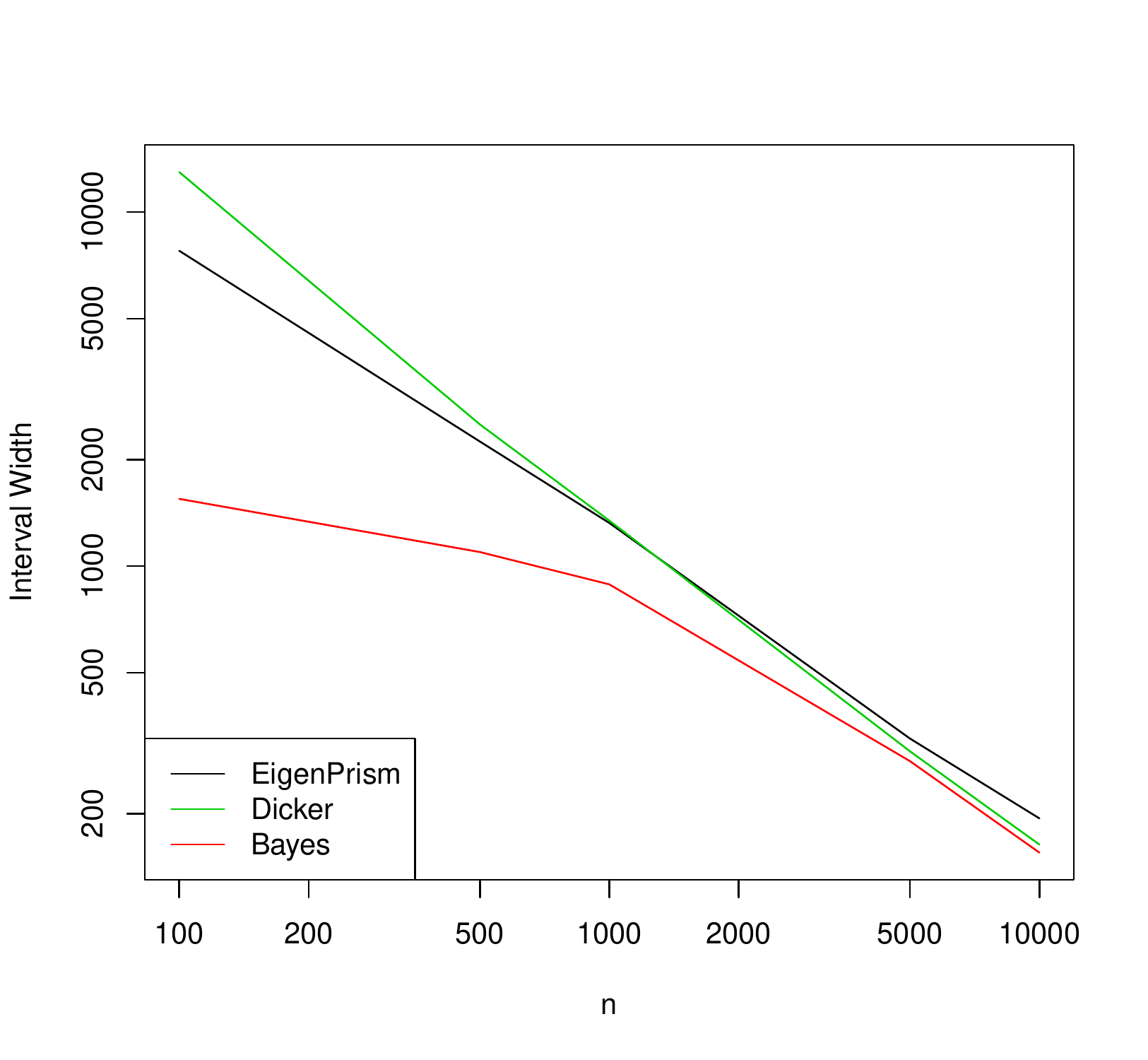}
    }
    \caption{(a) Coverage of 95\% EigenPrism \rev{and Dicker} confidence
      intervals as a function of $\rho$ for $p=10^4$ and
      $\theta^2+\sigma^2=10^4$. (b) EigenPrism confidence interval\rev{,
      Dicker confidence interval,} and
      Bayes credible interval widths as a function of $n$ for $p=10^4$
      and $\bs{\beta}$ sampled according to the Bayesian model in
      Equation~\eqref{bayes}.}
\label{coveragewidth}
\end{figure}
Note that the \rev{EigenPrism} CIs achieve at least nominal coverage
in all cases\rev{, while the Dicker procedure is less reliable, especially
for large $\rho$}. One
setting in which we see \rev{EigenPrism} over-cover is when $n \approx p$ and
$\rho\approx 1$. This can be explained by the variance upper-bound for
$T_2$ in Equation~\eqref{varUB1}, which is tight when $p$ is large and
$(p\sum_{i=1}^n(w^*_i\lambda_i)^2)^{-1} \ll
1$. Figure~\ref{tightBound} shows that, except when $n\approx p$, we
indeed have $(p\sum_{i=1}^n(w^*_i\lambda_i)^2)^{-1} \ll 1$.
\begin{figure}\centering
\includegraphics[width=0.45\textwidth]{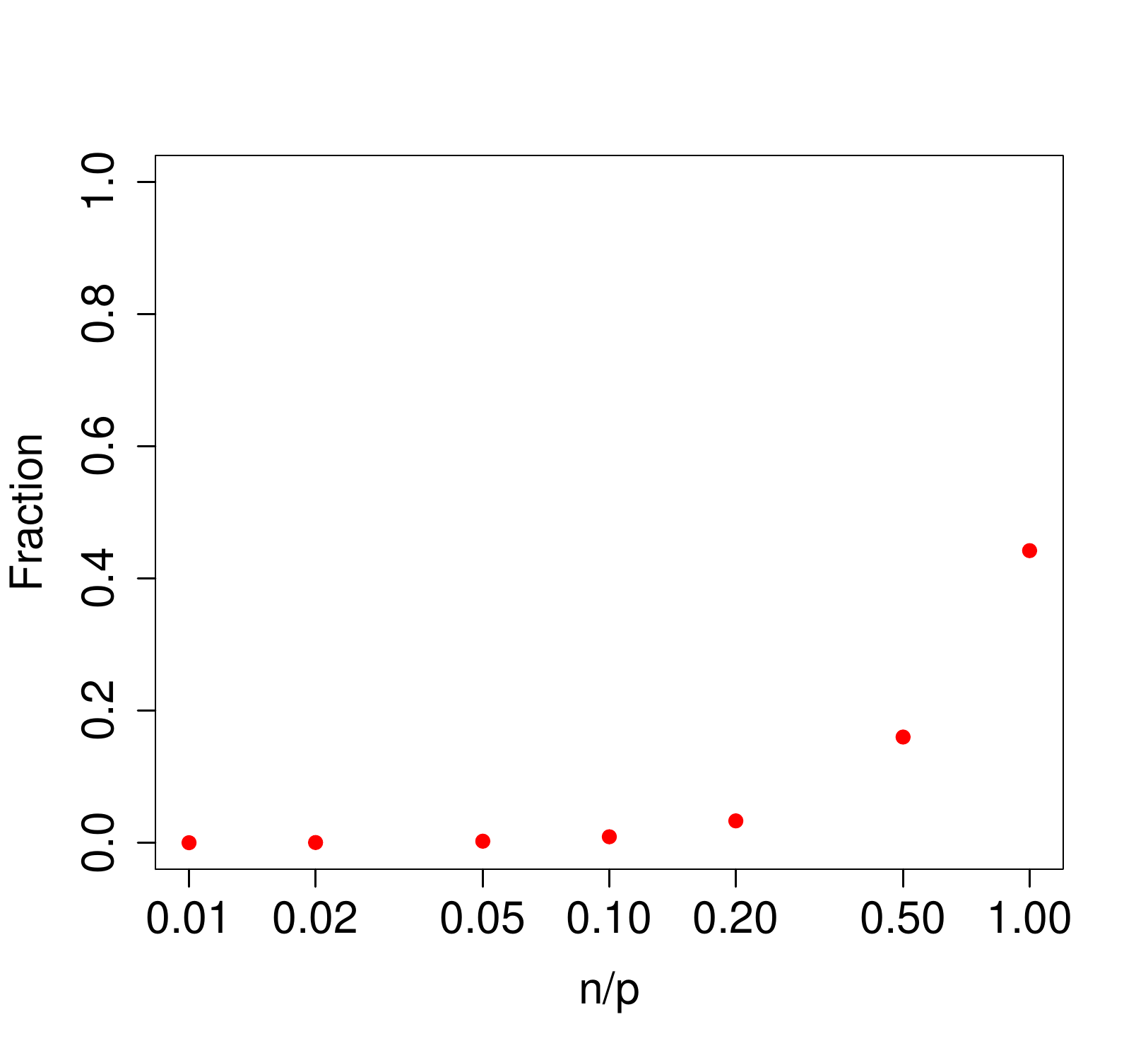}
\caption{Plot of the fraction $\frac{1}{p\sum_{i=1}^n(w^*_i\lambda_i)^2}$ as a
  function of $n/p$ (on the log scale) for $p=10^4$.}
\label{tightBound}
\end{figure}

To investigate (2), we simulated the EigenPrism procedure on a Bayesian
model and compared the EigenPrism widths to those obtained by computing equal-tailed
Bayes credible intervals (BCI) from a Gibbs-sampled posterior. The
details of the Bayesian setup are given in Appendix~\ref{appbayes},
but the resulting CI widths are summarized in Figure~\ref{figbayes}
for $p = 10^4$ and a range of $n$. \rev{Again, we also compared to
  Dicker CIs.} Each point on the plot represents 1000
simulations. \rev{Although the Dicker CIs become slightly shorter than
  EigenPrism's for large $n$, we note (as evidenced by Figure~\ref{minmaxcoverage})
  that this is exactly the regime in which the Dicker CIs have
  unreliable coverage.
%EigenPrism achieved nominal (95\%) coverage for all $n$, while the
%Dicker CIs had coverage ranging from 82\% for $n=10^2$ to 69\% for
%$n=10^4$. 
We will see later in Section~\ref{robust} that even for small $n$ and $\rho$, the
Dicker CIs quickly lose coverage as correlations are added to the
design matrix, while EigenPrism's coverage is in fact quite robust. 
%This lack of coverage is perhaps not surprising, as Dicker's
%asymptotic variance formula is approximated by plugging in estimators
%with significant uncertainty, and underlines the advantage of
%EigenPrism's variance formula that only requires plugging in an
%estimator with very little uncertainty.
} The \rev{other}
salient features of this plot are that the EigenPrism CI widths decrease
at a steady $\sqrt{n}$-rate, while the BCI widths start much
lower and appear to asymptote around the EigenPrism CI width curve. The
fact that the BCI widths are much shorter for small $n$ can be
explained by the information contained in the priors, which is
important for two reasons. In any frequentist-Bayesian comparison of
methods, there is always the phenomenon that small $n$ means the data
contains little information, so the prior information given to the
Bayesian method makes it heavily favored over the frequentist
method. However, as we saw in Section~\ref{sec:width}, the
frequentist problem is fundamentally limited not just by $n$ but by
$\sd(\lambda_i)$ as well, and here since $p$ is fixed, small $n$
corresponds to small $\sd(\lambda_i)$ as well, adding an extra layer
of challenge for the EigenPrism procedure. As $n$ increases though, the BCIs rely more
heavily on the data, and come much closer in width to the EigenPrism CIs,
with the \rev{average} relative width increase bottoming-out at about 5\% for $n =
5000$. The relative uptick in the EigenPrism CI widths for $n
\approx p$ can again be explained by the upper-bound in Equation~\eqref{varUB1}.

\begin{comment}
Briefly restating the procedure from Section~\ref{motivation} for
performing inference on regression error, consider an estimator
$\hat{\bs{\beta}}$ for $\bs{\beta}$, perhaps computed on a dataset
$(\bs{y}^{(0)},\bs{X}^{(0)})$. Then drawing a sample
$(\bs{y}^{(1)},\bs{X}^{(1)})$ independent of
$(\bs{y}^{(0)},\bs{X}^{(0)})$, the observed vector
$\tilde{\bs{y}} := \bs{y}^{(1)}-\bs{X}^{(1)}\hat{\bs{\beta}}$ follows
a linear model,
\begin{equation*}
\tilde{\bs{y}} = \bs{X}^{(1)}(\bs{\beta}-\hat{\bs{\beta}}) + \bs{\varepsilon},
\end{equation*}
so that the EigenPrism procedure for $\theta^2$ here will generate a CI
for the $\ell_2$ regression error of $\hat{\bs{\beta}}$. We emphasize
a few desirable properties of this approach. First, the EigenPrism CI is
\emph{conditional} on $\hat{\bs{\beta}}$, which means that this 
procedure works well for \emph{any} $\hat{\bs{\beta}}$, not just on average
across some random generating process for $\hat{\bs{\beta}}$. Second,
but related, the EigenPrism procedure works no matter how $\hat{\beta}$
is obtained (as long as it did not depend on
$(\bs{y}^{(1)},\bs{X}^{(1)})$). This includes any complicated
nonparametric procedure, possibly suffering from severe
overfitting. Lastly, we reiterate that no knowledge of $\sigma^2$ nor
assumption on the structure of $\hat{\bs{\beta}}-\bs{\beta}$ is
required for this procedure. 
\end{comment}

\subsection{Inference on $\sigma^2$}
\label{estsigma2}
We can use almost exactly the same EigenPrism procedure for $\sigma^2$
as we did for $\theta^2$. Recall Equation~\eqref{condexp},
\begin{equation*}
\Ec{S}{\bs{d}} = \theta^2\sum_{i=1}^nw_i\lambda_i + \sigma^2\sum_{i=1}^nw_i.
\end{equation*}
To make $S$ unbiased for $\theta^2$, we constrained
$\sum_{i=1}^nw_i=0$ and $\sum_{i=1}^nw_i\lambda_i=1$. However by
switching these linear constraints, so that $\sum_{i=1}^nw_i=1$ and
$\sum_{i=1}^nw_i\lambda_i=0$, we make $S$ unbiased for $\sigma^2$. The
variance formulae and upper-bounds in
Equations~\eqref{varExact}--\eqref{varUB} still hold, so that we can
construct $T_3$ (and an associated CI). Let $\bs{w}^{**}$ be the
solution to the following convex optimization program $\mathcal{P}_2$:
\begin{equation*}
\argmin_{\bs{w}\in\mathbb{R}^n} \; \max \left( \sum_{i=1}^n w_i^2, \sum_{i=1}^n
  w_i^2 \lambda_i^2 \right) \quad
\text{such that}  \, \sum_{i=1}^n w_i = 1, \; \sum_{i=1}^n w_i \lambda_i = 0
\end{equation*}
and denote by $\val(\mathcal{P}_2)$ the minimized objective function
value. Then the EigenPrism procedure for performing inference on
$\sigma^2$ reads
\begin{equation*}
\begin{split}
T_3 :=&\, \sum_{i=1}^{n} w^{**}_i z_i^2, \\
\Ec{T_3}{\bs{d}} =&\, \sigma^2, \\
\sd(T_3|\bs{d}) \lesssim&\, \sqrt{2\val(\mathcal{P}_2)}\, \frac{\|\bs{y}\|_2^2}{n}, \\
\end{split}
\end{equation*}
where again, the only approximation in the variance is the replacement of
$\theta^2+\sigma^2$ by its estimator $\|\bs{y}\|_2^2/n$. \rev{The analogue
to Theorem~\ref{T2normal} holds and is proved in
Appendix~\ref{asymptoticnormality}:
\[\frac{T_3-\sigma^2}{\sd(T_3|d)} \stackrel{d}{\longrightarrow} N(0,1).\]}
Finally, as before, we construct the lower and upper endpoints to obtain an approximate
$(1-\alpha)$-CI for $\sigma^2$ via
\[
L_{\alpha} := \max\left(T_3 - z^\star_{1-\alpha/2}\cdot  \sqrt{2\val(\mathcal{P}_2)} \, \frac{\|\bs{y}\|_2^2}{n}, 0\right)\;,
\quad
U_{\alpha}:=T_3 + z^\star_{1-\alpha/2}\cdot  \sqrt{2\val(\mathcal{P}_2)} \, \frac{\|\bs{y}\|_2^2}{n}\;.
\]

 Note that if
the columns of $\bs{X}$ have a known covariance matrix $\bs{\Sigma}$,
the exact same machinery goes through by replacing $\bs{X}$ by
$\bs{X}\bs{\Sigma}^{-1/2}$ and replacing $\bs{\beta}$ by
$\bs{\Sigma}^{1/2}\bs{\beta}$.

Turning to simulations, we aim to show that the EigenPrism CIs for
$\sigma^2$ have at least nominal coverage. We take the same setup as
in Figure~\ref{minmaxcoverage} but instead construct 95\% CIs for
$\sigma^2$. Figure~\ref{sg2minmaxcoverage} shows the result, and as
before we see that \rev{EigenPrism's} coverage never dips below nominal levels in any
of the settings\rev{, while for small $\rho$ the Dicker CI's coverage
  can be unreliable, especially for large $n$}.
\begin{figure}[!t]
  \centering
  \includegraphics[width=0.45\textwidth]{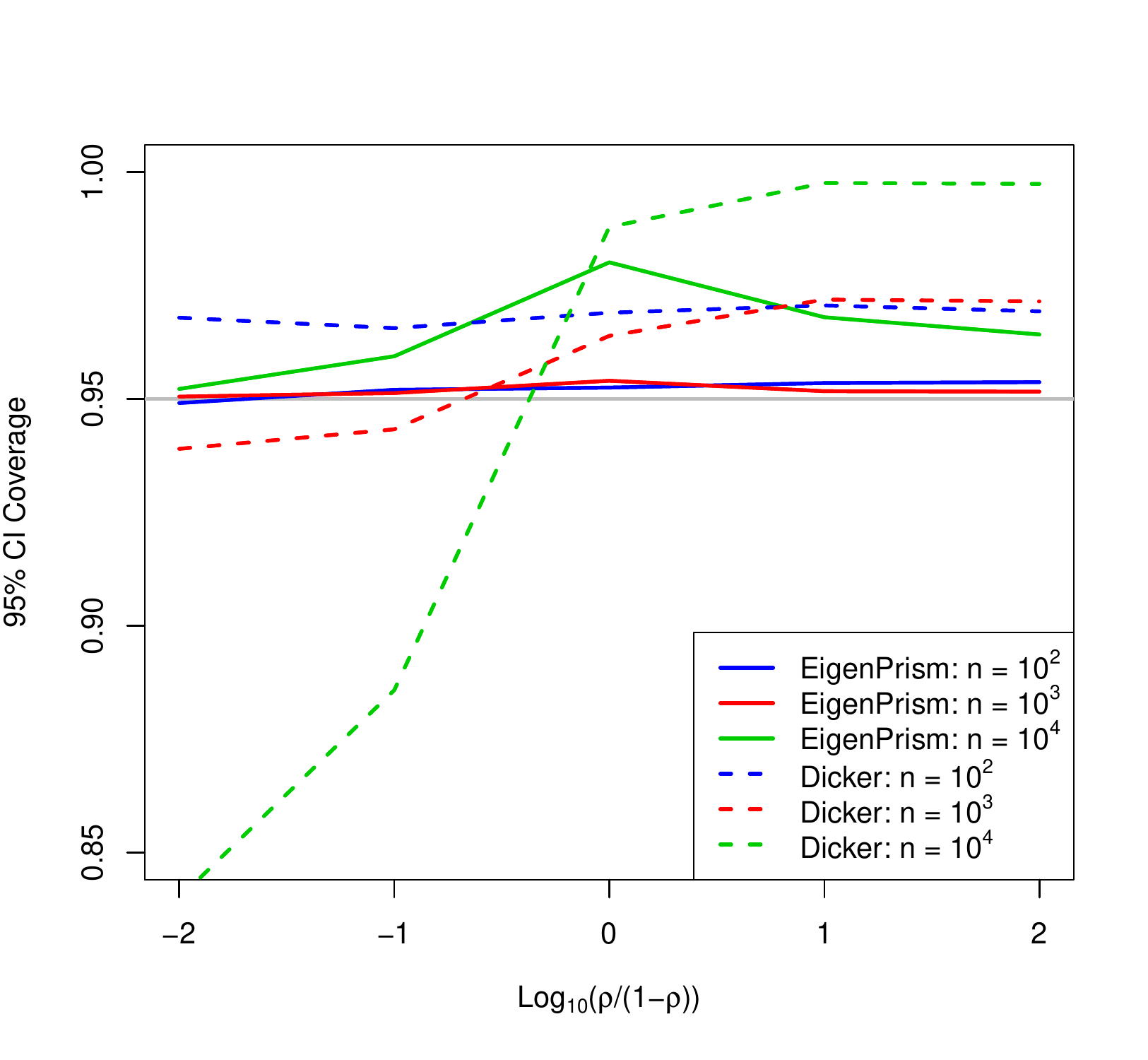}
  \caption{Coverage of 95\% EigenPrism \rev{and Dicker} confidence
    intervals for $\sigma^2$ as a function of $\rho$ for $p=10^4$ and
    $\theta^2+\sigma^2=10^4$. Each point represents $10^4$ simulations, and the grey
    line denotes nominal coverage.}
\label{sg2minmaxcoverage}
\end{figure}
\begin{figure}[!t]
  \centering
    \subfigure[]{\label{sg2SL_RCV}
      \includegraphics[width=0.45\textwidth]{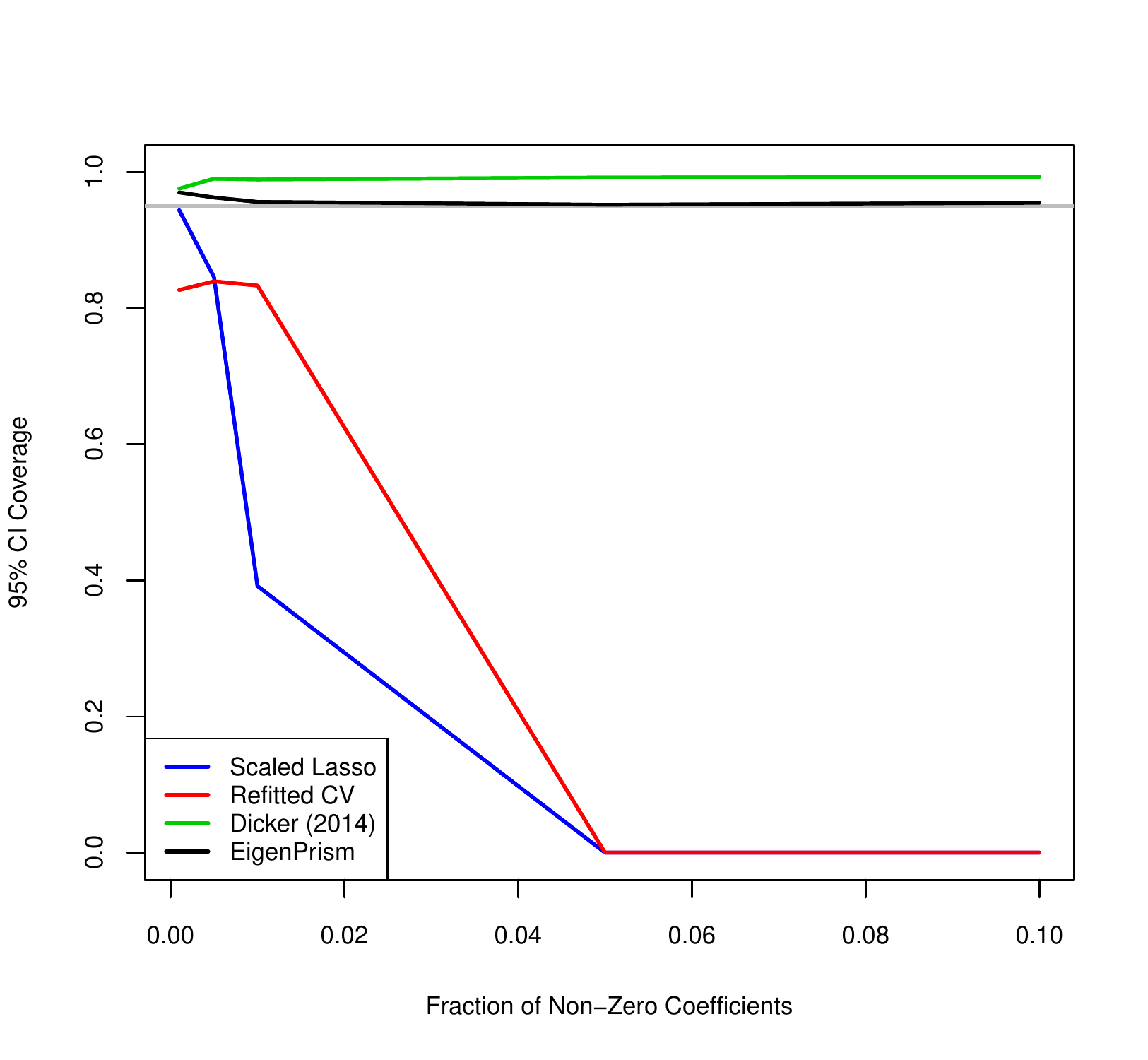}
    }
    \qquad
    \subfigure[]{\label{sg2SL_RCV_widths}
      \includegraphics[width=0.45\textwidth]{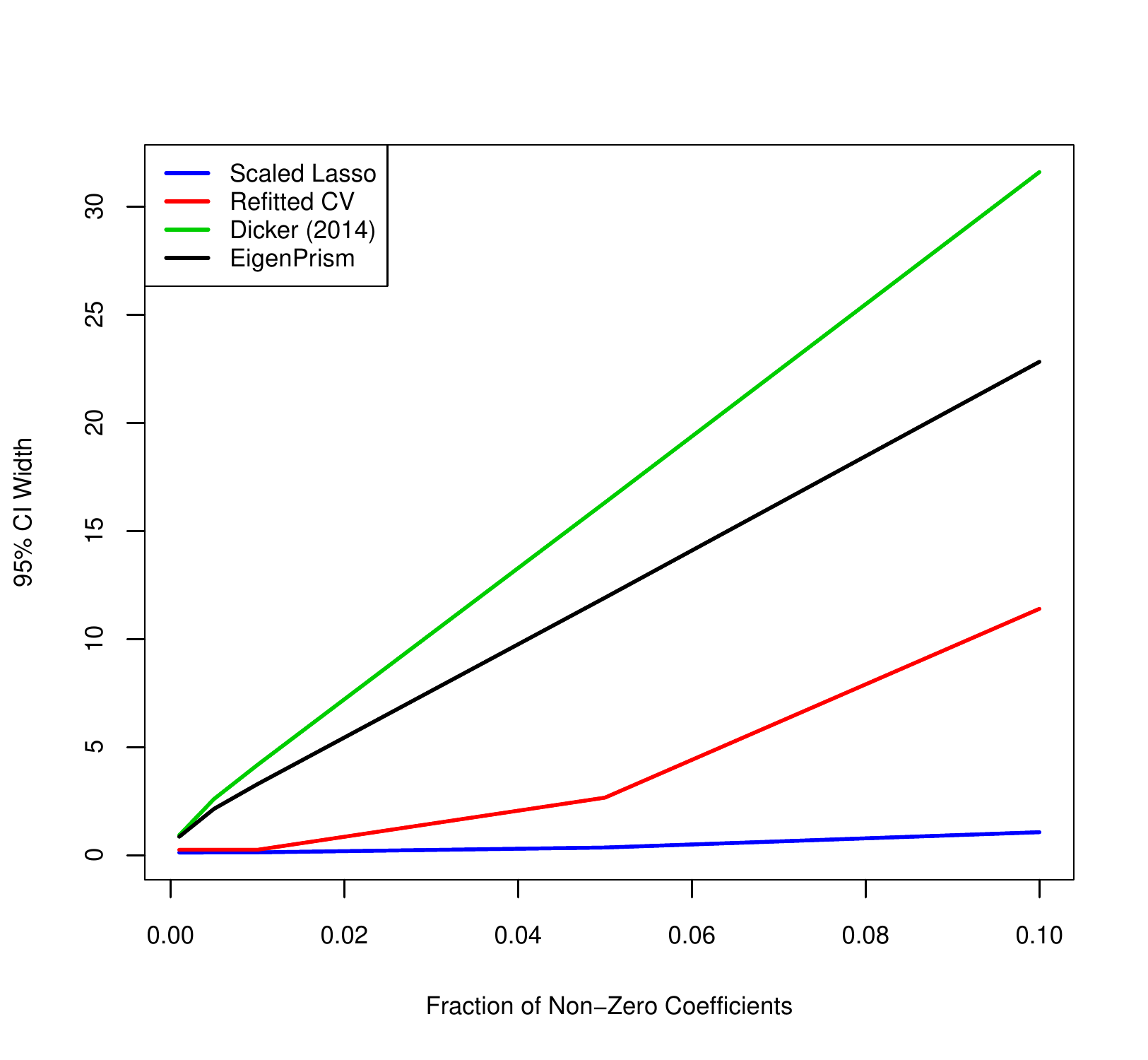}
    }
    \caption{(a) Coverage and (b) width of scaled lasso, refitted
    cross validation, \rev{plug-in CI from \cite{Dicker2014} described
      in Section~\ref{regrerror},} and EigenPrism confidence intervals when $n=500$,
    $p=1000$, $\sigma^2=1$, and non-zero entries of $\bs{\beta}$ equal
    to 1. Each point represents $10^4$ simulations, and the grey
    line denotes nominal coverage.}
\label{sg2_compare}
\end{figure}
We performed a similar experiment with a Bayesian model to compare
EigenPrism CI widths for $\sigma^2$ with those of equal-tailed BCIs, but
found a less-desirable comparison than in the $\theta$
case. In particular, the most favorable simulations showed the EigenPrism
CI approximately 30\% wider than the BCI, which can likely be attributed to the more-informative prior (Inverse Gamma) on $\sigma^2$
than that on $\theta^2$ (nearly Exponential) in the Bayesian
model \eqref{bayes}. Although we would have liked to try an
Exponential prior for $\sigma^2$, due to a lack of conjugacy the
resulting Gibbs sampler was computationally intractable. We note that
except in special cases, it can be very computationally challenging to
construct BCIs, especially in high dimensions.

We point out that only two other $\sigma^2$ estimators in the
literature provide any inference results, namely the scaled Lasso
\citep{Sun2012} and the refitted cross validation (CV) method of
\cite{Fan2012}. In particular, under some sparsity conditions on the
coefficient vector, the authors find aymptotic normal approximations
to their estimators. To compare our CIs with theirs, we compared them
on the same simulations, but quickly found that scaled Lasso and
refitted CV CIs only achieve nominal coverage in extremely sparse
settings. \rev{We also compared the plug-in CI for the estimator in \cite{Dicker2014}. This coverage comparison is shown in Figure~\ref{sg2SL_RCV}.} 
%This is exemplified by Figure~\ref{sg2SL_RCV}, which shows
%coverage of scaled Lasso, refitted CV, and EigenPrism CIs for a range
%of sparsity levels in a finite-sample setting. 
The scaled Lasso CIs
only achieve nominal coverage when 1 out of the 1000 coefficients are
non-zero, and quickly drop off to less than half of nominal coverage
by 1\% sparsity. The refitted CV CIs undercover by about 10\% even in
the sparsest settings, and also fall off further in coverage as
sparsity decreases. The EigenPrism \rev{and Dicker} CIs achieve at least nominal
coverage at all sparsity levels
examined. \rev{Figure~\ref{sg2SL_RCV_widths} shows average CI widths
  for the same simulations. The much smaller widths of the scaled Lasso and
  refitted CV CIs align with their lack of coverage, reflecting the
  fact that the bias and variance of their
  estimators can be poorly characterized in finite samples. The
  Dicker CIs are consistently wider than EigenPrism's, with
  the inflation factor nearly 40\% at the right-hand side of the plot.}

\subsection{Genetic Variance Decomposition}
\label{heritability}
Consider a linear model for a centered continuous phenotype ($y_i$)
such as height, as a function of a centered SNP array ($\bs{x}_i$). The variance
can be decomposed as
\begin{equation}
\label{vardecomp}
\E{y_i^2} = \mathbb{E}\left[(\bs{x}_i^{\top}\bs{\beta})^2\right] + \sigma^2.
\end{equation}
Under linkage disequilibrium, assuming
column-independence is unrealistic. However, a wealth of genomic data
has resulted in this column dependence possibly being estimable from
outside data sets (e.g. \cite{Abecasis2012}), so
we may instead take $\bs{x}_i\iid N(\bs 0,\bs{\Sigma})$ with
$\bs{\Sigma}$ known (we will discuss a relaxation of the normality in Section~\ref{robust}). Then Equation~\eqref{vardecomp} reduces to
\begin{equation*}
\E{y_i^2}= \|\bs{\Sigma}^{1/2}\bs{\beta}\|_2^2 + \sigma^2,
\end{equation*}
which provides a formula for the linear model's signal-to-noise ratio,
\begin{equation*}
\textsc{SNR} =
\frac{\|\bs{\Sigma}^{1/2}\bs{\beta}\|_2^2}{\|\bs{\Sigma}^{1/2}\bs{\beta}\|_2^2 +
  \sigma^2}.
\end{equation*}
The \textsc{SNR} is connected to the genetic heritability in that, for
the simplified approximation to a linear model with additive
i.i.d. noise, it quantifies what fraction of a continuous phenotype's
variance can be explained by SNP data. We note that there are many
different definitions of heritability, and the
\textsc{SNR} aligns most closely with the narrow-sense, or additive,
heritability, as we do not allow for interactions or dominance
effects. The extent of the connection between the two definitions
depends on how complete the SNP array is---if every SNP is measured,
they correspond exactly.

Although until now we have been working with $\|\bs{\beta}\|_2^2$,
while the \textsc{SNR} estimation problem seems to call for
$\|\bs{\Sigma}^{1/2}\bs{\beta}\|_2^2$, the above problem turns out to
fit right into our framework. Explicitly, the linear model can be
rewritten as
\begin{equation*}
\bs{y} = \left(\bs{X}\bs{\Sigma}^{-1/2}\right)\left(\bs{\Sigma}^{1/2}\bs{\beta}\right) + \bs{\varepsilon},
\end{equation*}
where now the rows of $(\bs{X}\bs{\Sigma}^{-1/2})$ are
i.i.d.~$N(\bs{0},\bs{I})$, and $\theta^2$ corresponds to the new
quantity of interest: $\|\bs{\Sigma}^{1/2}\bs{\beta}\|_2^2$. Since
$\E{\|\bs{y}\|_2^2/n} = \|\bs{\Sigma}^{1/2}\bs{\beta}\|_2^2 +
\sigma^2$ now, applying our methodology to $\bs{X}\bs{\Sigma}^{-1/2}$
gives a natural estimate for \textsc{SNR}, namely,
\begin{equation*}
\widehat{\textsc{SNR}} := \frac{T_2}{\|\bs{y}\|_2^2/n}.
\end{equation*}
Continuing, as we have done throughout this paper, to treat
$\|\bs{\Sigma}^{1/2}\bs{\beta}\|_2^2 + \sigma^2$ as if it is known and
equal to $\|\bs{y}\|_2^2/n$, our distributional results for $T_2$
extend to give us an approximate confidence interval for
\rev{$\widehat{\textsc{SNR}}$}.

We turn again to simulations to demonstrate the performance of the
EigenPrism procedure described above for constructing \textsc{SNR}
CIs. One major consideration is that of course, SNP data is discrete,
not Gaussian. However, we will show in Section~\ref{robust} that the
EigenPrism procedure works well empirically even under non-Gaussian
marginal distributions. Here, we run experiments for $n=10^5$, $p =
5\times10^5$, $\theta^2+\sigma^2 = 10^4$, Bernoulli(0.01) design with
independent columns, $\bs{\beta}$ having 10\% non-zero entries, and
$\textsc{SNR}$ varying from nearly 0 to nearly
1. Figure~\ref{fig:heritability} shows the EigenPrism CI coverage and
average widths.
\begin{figure}[!t]
  \centering
    \subfigure[]{\label{heritcover}
      \includegraphics[width=0.45\textwidth]{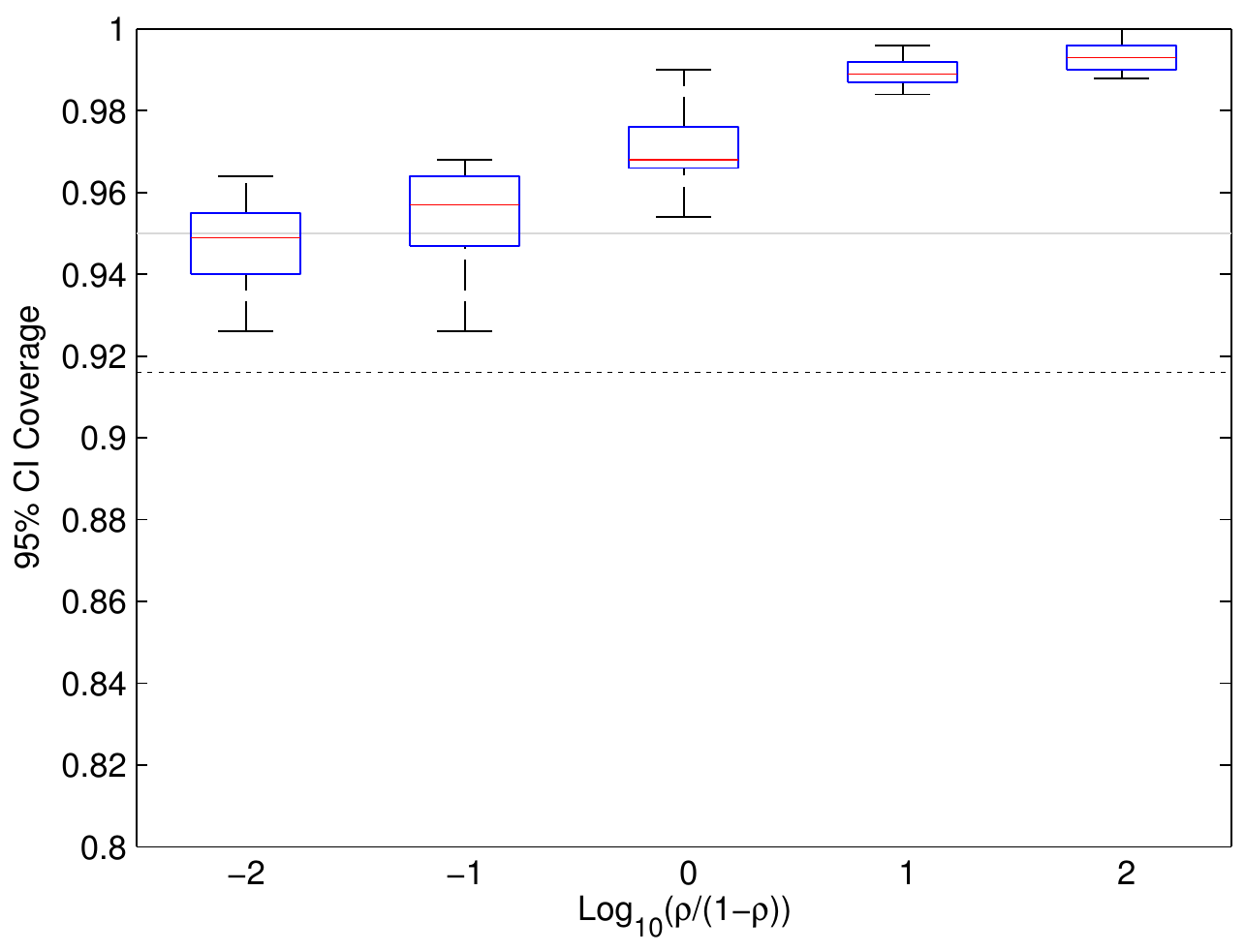}
    }
    \qquad
    \subfigure[]{\label{heritwidths}
      \includegraphics[width=0.45\textwidth]{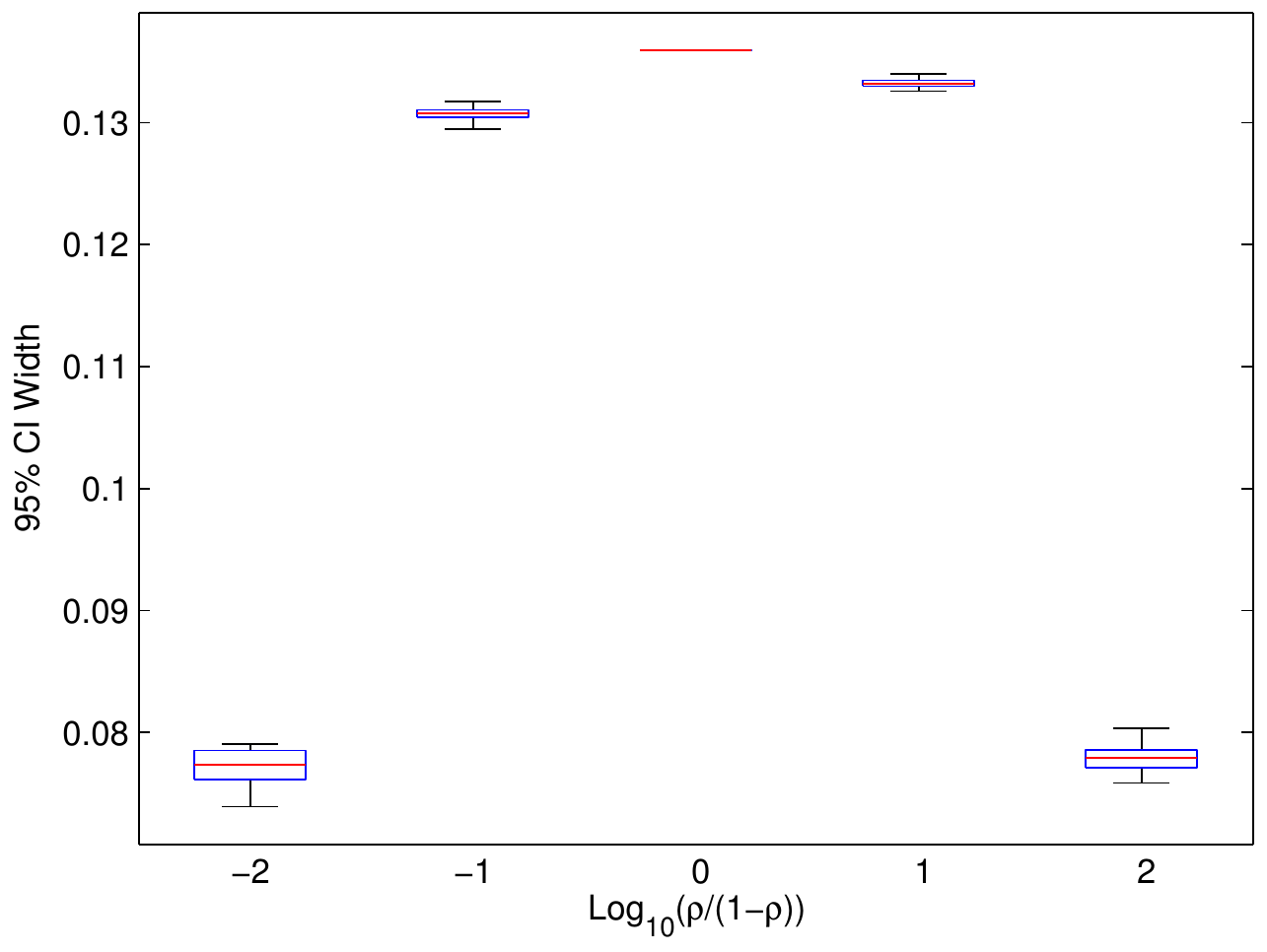}
    }
    \caption{(a) Coverage, and (b) average widths, of EigenPrism
      \textsc{SNR} 95\% confidence intervals. Experiments used
      $n=10^5$, $p = 5\times10^5$, $\theta^2+\sigma^2 = 10^4$,
      Bernoulli(0.01) design, and $\bs{\beta}$ with 10\% non-zero, Gaussian
      entries. Each boxplot summarizes the (a) coverage and (b) width
      for 20 different $\bs{\beta}$'s, each of which is estimated with
      500 simulations. The whiskers of the boxplots extend to the
      maximum and minimum points. The black dotted line in (a) is the
      95\% confidence lower-bound for the lowest whisker in each plot
      assuming all CIs achieve exact coverage, and the grey line shows nominal
      coverage.}
\label{fig:heritability}
\end{figure}
Note that although our CIs are conservative, we never lose coverage,
and at worst our 95\% CI would give the \textsc{SNR} to within an
error of $\pm 6.\rev{8}\%$.

\section{Robustness and 2-Step Procedure}
\label{minmaxexper}
In this section we follow up our investigation of the EigenPrism
framework by considering its robustness to model misspecification and
presenting a 2-step procedure that can improve the CI widths of the
vanilla EigenPrism procedure.

\subsection{Robustness}
\label{robust}
An important practical question is how robust the EigenPrism CI is to
model misspecification. In particular, our theoretical calculations
made some fairly stringent assumptions, and we explore here their
relative importances. Some standard assumptions
% (as loosening them
%often creates a much more challenging model to work with) 
that we rely
on are that the model is indeed linear and the noise is
i.i.d.~Gaussian and independent of the design matrix. These
assumptions are all present, for instance, in OLS theory, and we
assume that problems substantially deviating from satisfying them are
not appropriate for our procedure. As explained in
Section~\ref{motivation}, the random design assumption is necessitated
by the high-dimensionality ($p>n$) of our problem, and within the
random design paradigm, the assumption of i.i.d.~rows is still broadly
applicable, for instance whenever the rows represent samples drawn
independently from a population.

The not-so-standard assumption we make is that the columns of $\bs{X}$
are also independent, and all of $\bs{X}$'s entries are $N(0,1)$ (note
that each column of a real design matrix can always be standardized so
that at least the first two marginal moments match this
assumption). These assumptions are important because they ensure that
the columns of $\bs{V}$ are uniformly distributed on the unit sphere,
so that we can characterize both the expectation and variance of their
inner product with $\bs{\beta}$. Although we will see that the
marginal distribution of the elements of $\bs{X}$ is not very
important as long as $n$ and $p$ are not small, in general the
independence of the columns is crucial. We note that there is work in
random matrix theory showing that for certain random matrices which
are not i.i.d.~Gaussian, the eigenvectors are still in some sense
asymptotically uniformly distributed on the unit sphere (see for
example \cite{bai2007}). This suggests that EigenPrism CIs, at
least asymptotically, may work well in a broader context than shown so
far.

Before explaining further,
we feel it is important to recall that for two of the three inference
problems this work addresses (inference for $\sigma^2$ and
signal-to-noise ratio), the EigenPrism procedure extends to easily account for any
known covariance matrix among the columns of
$\bs{X}$. However in the vanilla example of simply constructing CIs for $\theta^2$,
correlation among the columns of $\bs{X}$ can cause serious
problems. To first order, we need
$\Ec{\|\bs{V}^{\top}\bs{\beta}\|_2^2}{\bs{d}} \approx \rev{\theta^2}n/p$, or else $T_2$
wil\rev{l} be biased and the resulting shifted interval will have poor
coverage. From a practical perspective, unless $\bs{\beta}$ is
adversarially chosen, it may seem unlikely that $\bs{\beta}$ will be
particularly aligned or misaligned (orthogonal) to the directions in
which $\bs{X}$ varies. In particular, if we make a random effects
assumption and say that the entries of $\bs{\beta}$ are
i.i.d.~$N(0,\tau^2)$, then the EigenPrism procedure will achieve nominal
coverage. A slightly more subtle problem occurs if $\bs{\beta}$ is
chosen not adversarially, but sparse in the basis of $\bs{X}$'s
principal components. In this case, although $T_2$ is approximately
unbiased, the variance estimate could be far too small, resulting
again in degraded coverage.

\begin{figure}[!htbp]
  \centering
    \subfigure[]{\label{moments_dense}
      \includegraphics[trim=0cm 0.16cm 0cm 0.3cm, clip=true, width=0.45\textwidth]{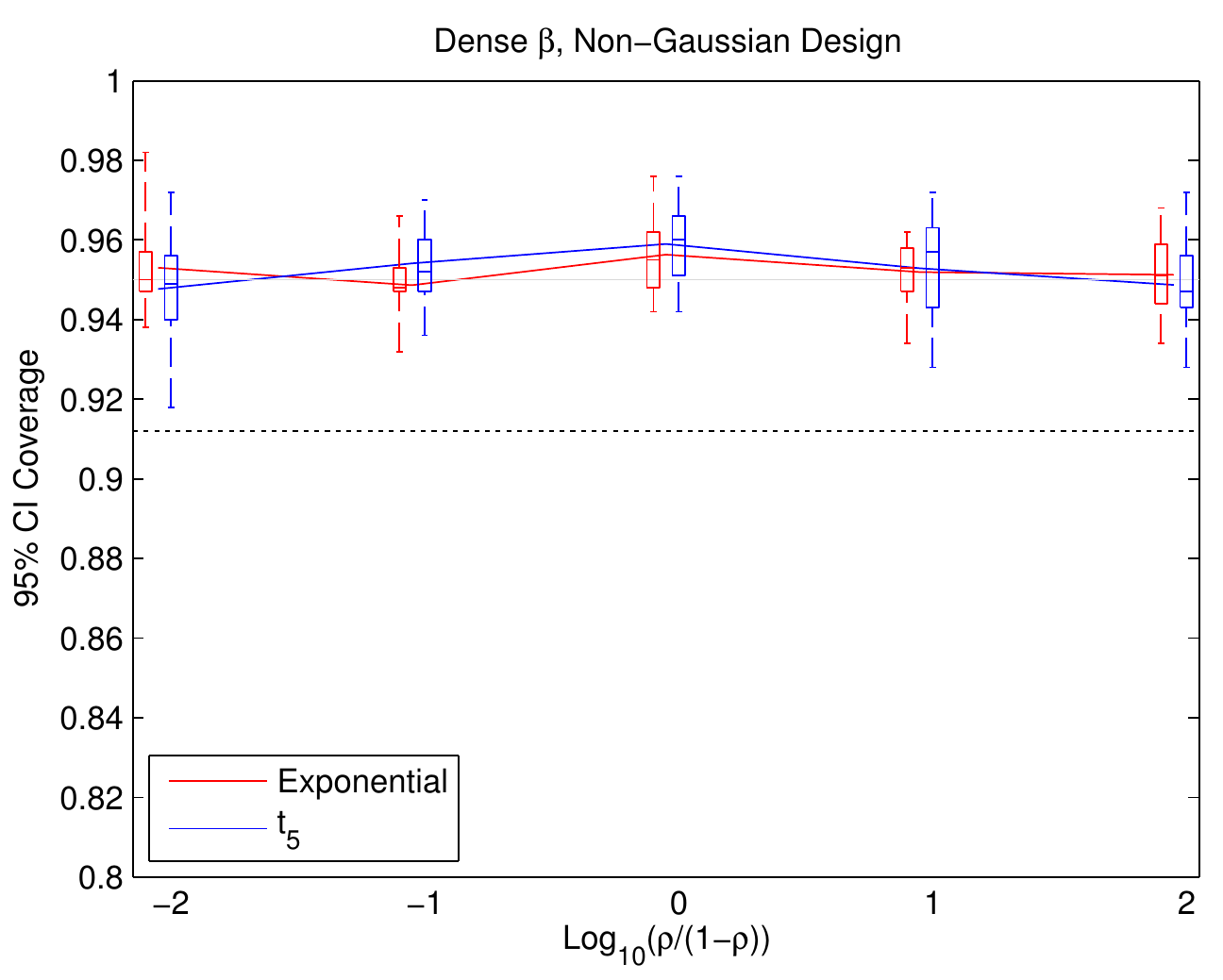}
    }
    \qquad
    \subfigure[]{\label{moments_sparse}
      \includegraphics[trim=0cm 0.16cm 0cm 0.3cm, clip=true, width=0.45\textwidth]{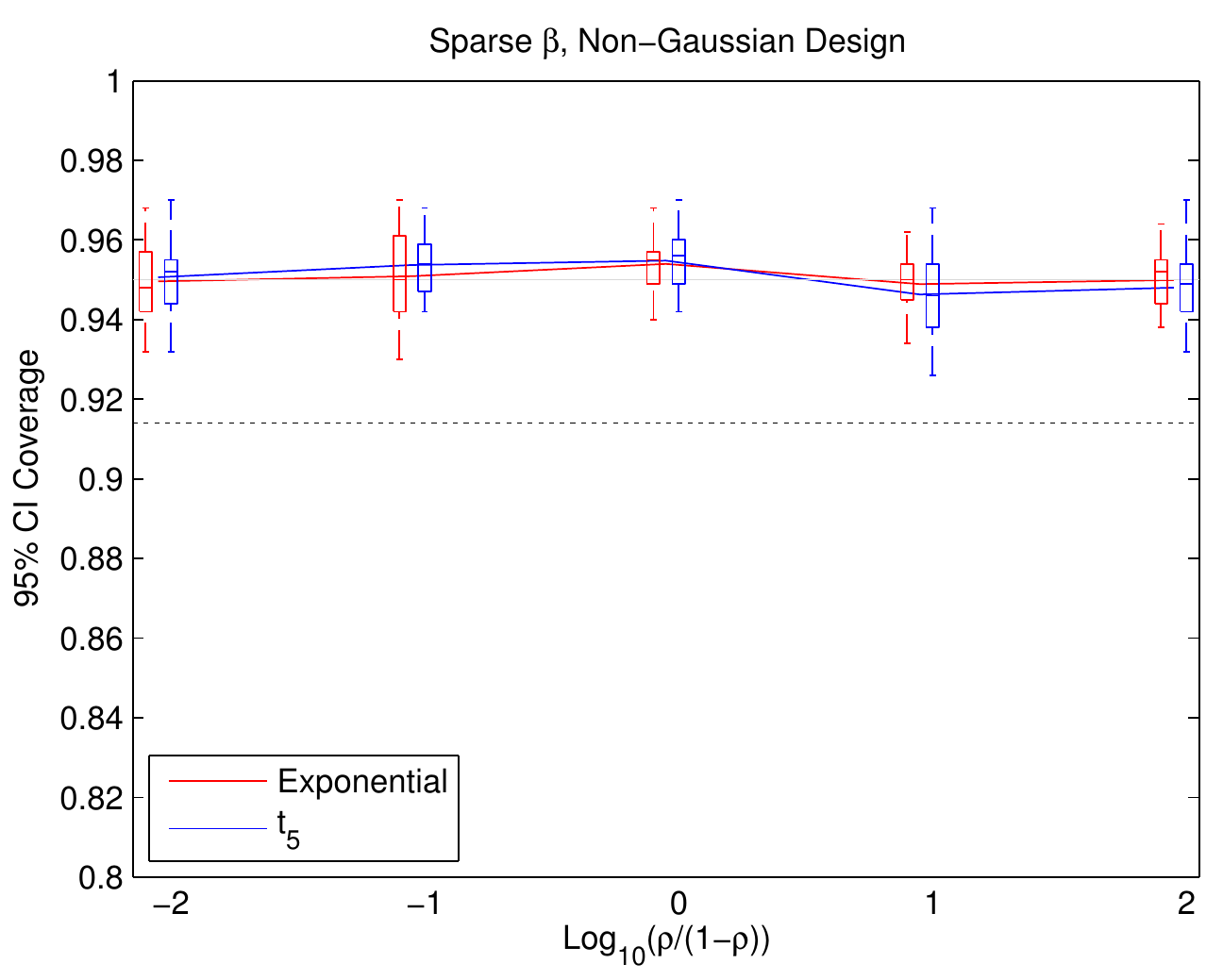}
    }
    \subfigure[]{\label{sparse_dense}
      \includegraphics[trim=0cm 0.16cm 0cm 0.3cm, clip=true, width=0.45\textwidth]{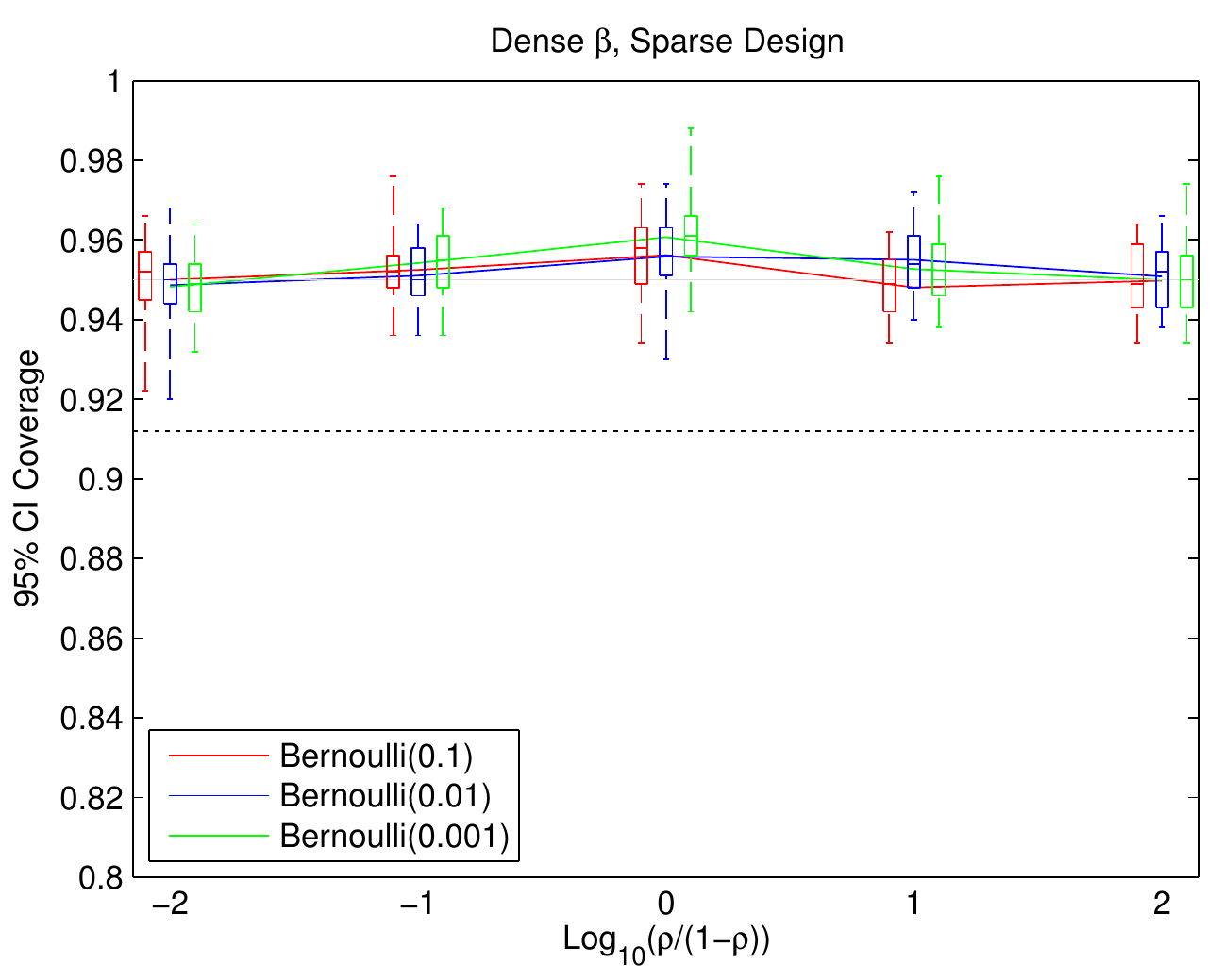}
    }
    \qquad
    \subfigure[]{\label{sparse_sparse}
      \includegraphics[trim=0cm 0.16cm 0cm 0.3cm, clip=true, width=0.45\textwidth]{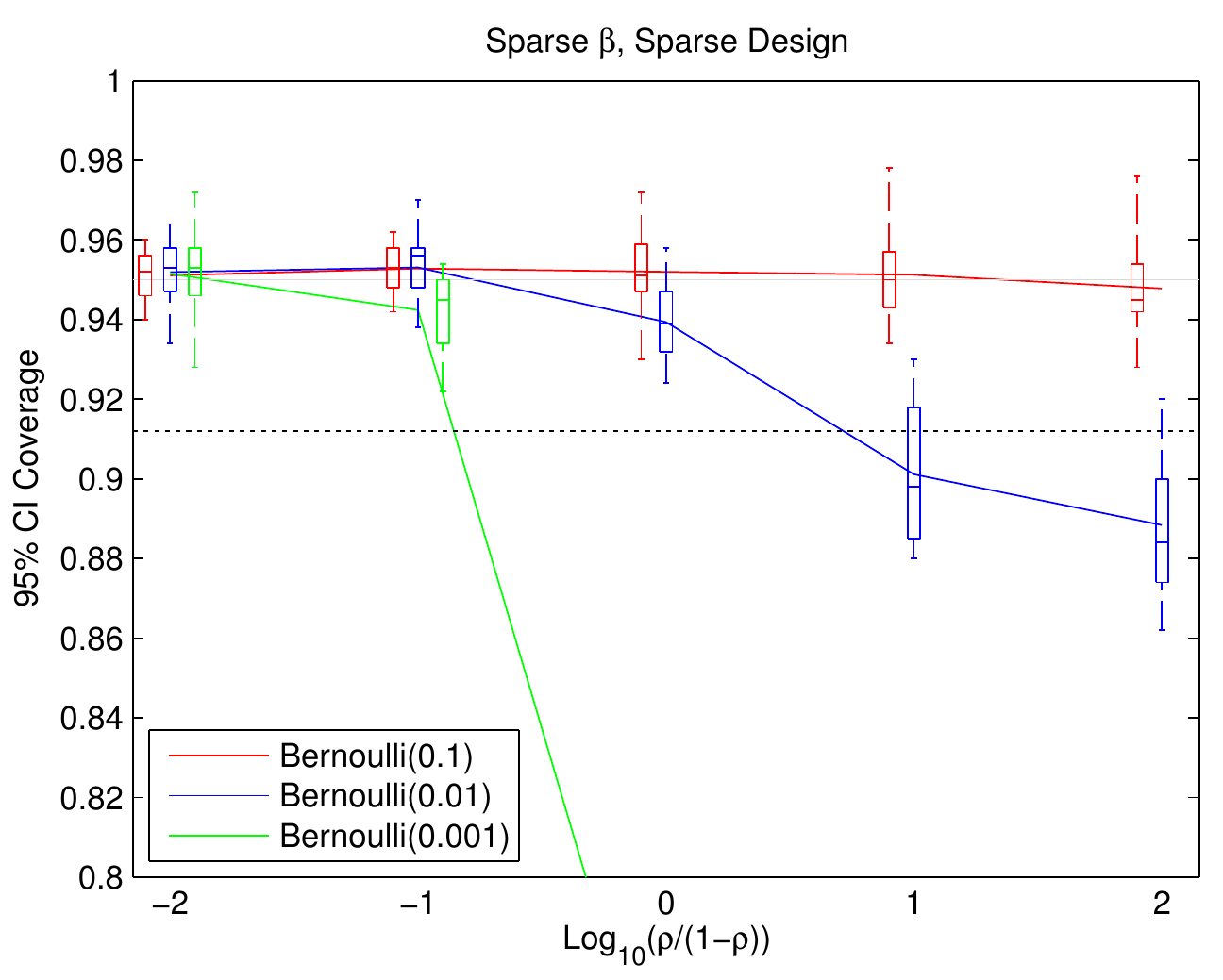}
    }
    \subfigure[]{\label{correl_dense}
      \includegraphics[trim=0cm 0.16cm 0cm 0.3cm, clip=true, width=0.45\textwidth]{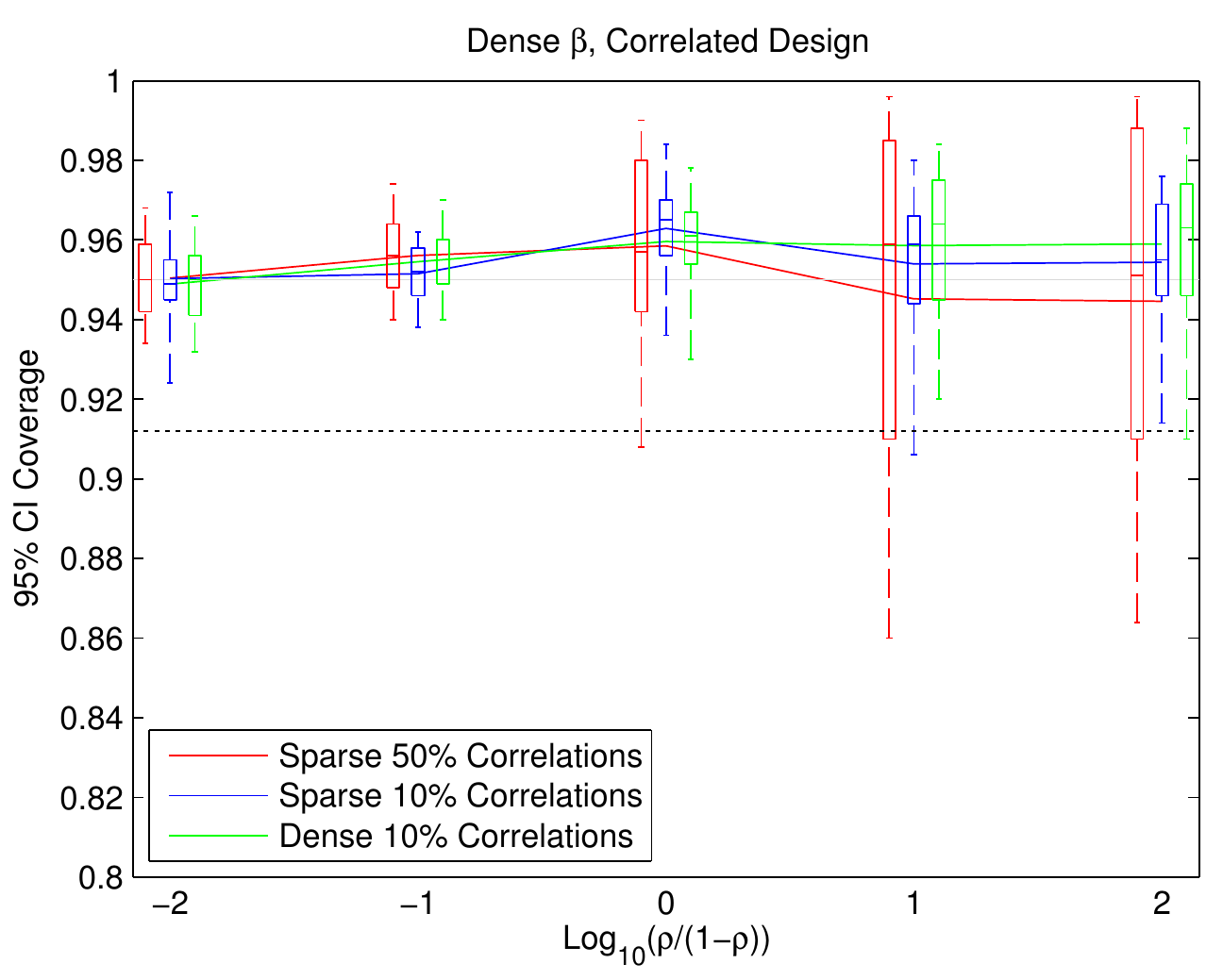}
    }
    \qquad
    \subfigure[]{\label{correl_sparse}
      \includegraphics[trim=0cm 0.16cm 0cm 0.3cm, clip=true, width=0.45\textwidth]{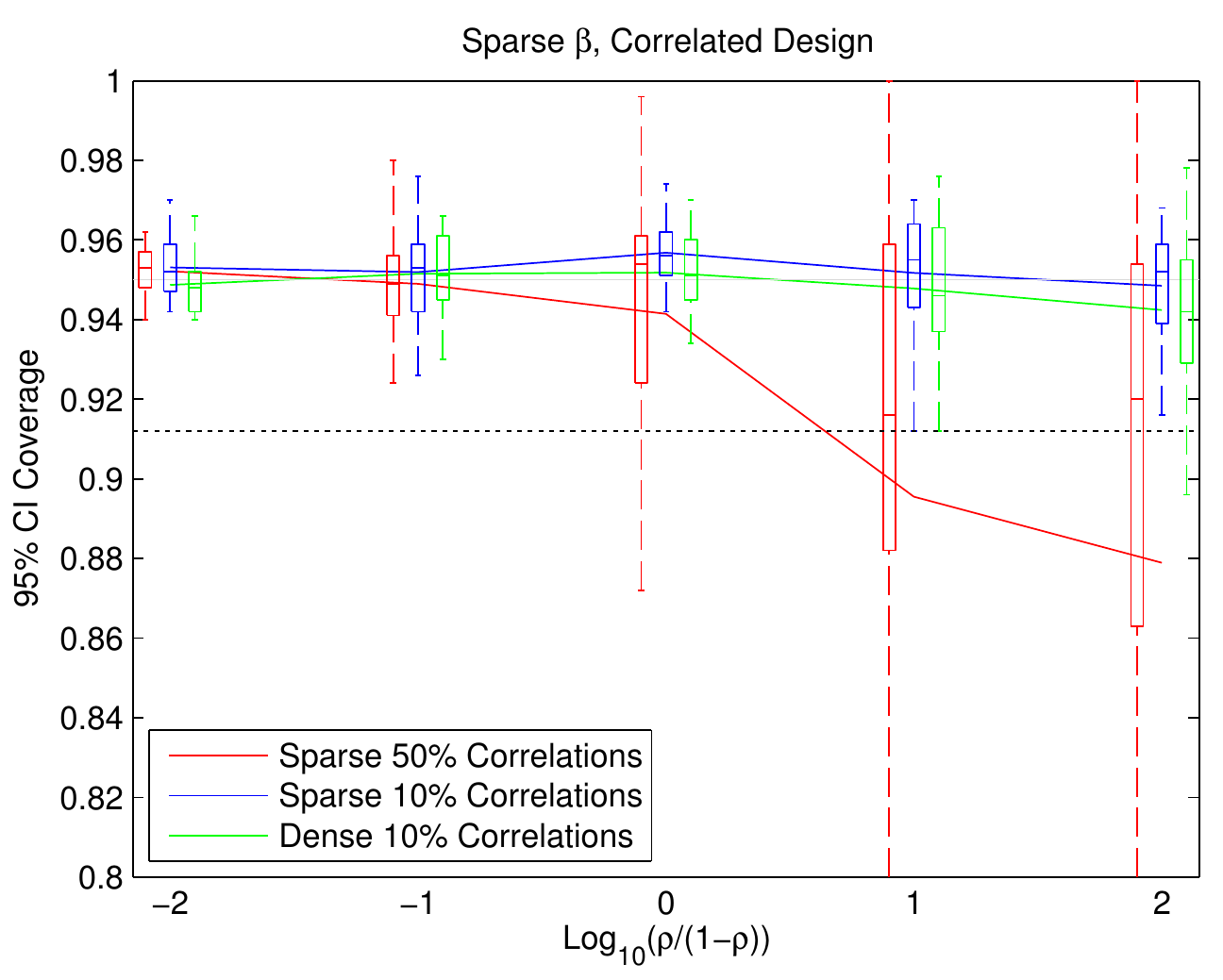}
    }
    \caption{The first column of
      plots ((a), (c), (e)) generates $\beta_i\iid N(0,1)$ and
      renormalizes to control $\theta^2$, while the
      second column of plots ((b), (d), (f)) does the same but then
      sets 99\% of the $\beta_i$ to zero before renormalizing. The
      first two rows of plots ((a), (b), (c), (d)) use $X_{ij}$ i.i.d.~from
      some non-Gaussian distribution renormalized to have mean 0 and
      variance 1. The third row of plots ((e), (f))
      uses marginally standard Gaussian $\bs{X}$ but with correlations
      among the columns; see Appendix~\ref{corrMat} for detailed constructions.
      See text for detailed boxplot constructions and interpretation
      of the dashed line.}
\label{robustness}
\end{figure}
To investigate how wrong the model has to be to make our CIs
undercover, we construct EigenPrism CIs on data coming from models not
satisfying our assumptions. In particular, we ran the EigenPrism
procedure on design matrices with either i.i.d.~entries with very
different higher-order moments than a Gaussian, i.i.d.~entries that
were sparse, or Gaussian entries and correlated columns. Since the
direction of $\bs{\beta}$ becomes relevant in all these cases, we
performed experiments with both dense and sparse $\bs{\beta}$, and in
each regime measured coverage for 20 different $\bs{\beta}$'s. The
results of simulations with $n=10^3$, $p=10^4$, and
$\theta^2+\sigma^2=10^4$ are plotted in
Figure~\ref{robustness}. Each
boxplot summarizes the coverage for 20 different $\bs{\beta}$'s, each
of which is estimated with 500 simulations. The whiskers of the
boxplots extend to the maximum and minimum points, and the black
dotted line is the 95\% confidence lower-bound for the lowest whisker
in each plot assuming all CIs achieve exact coverage. As can be seen
from Figures~\ref{moments_dense} and \ref{sparse_dense}, when
$\bs{\beta}$ is dense, the marginal moments and sparsity of the
entries of $\bs{X}$ do not affect coverage. Figures~\ref{correl_dense}
and \ref{correl_sparse} show that even small unaccounted-for
correlations among the columns of $\bs{X}$ do not greatly affect
coverage, although larger correlations, as expected, can result in
serious undercoverage for certain
$\bs{\beta}$'s. \rev{As a comparison, we also simulated the Dicker CIs
  in the setting of Figures~\ref{correl_dense} and
  \ref{correl_sparse}, wherein coverage never exceeded 40\% for
  any $\bs{\beta}$ or correlation structure.} Figures~\ref{moments_sparse} and \ref{sparse_sparse}
show that when $\bs{\beta}$ is sparse, coverage is much more sensitive
to sparsity in $\bs{X}$, although if $\bs{X}$ is not sparse, coverage
remains robust to higher-order moments of the design
matrix. Figure~\ref{qq} demonstrates the crucial difference when
$\bs{X}$ is sparse by showing a few realizations of quantile-quantile
plots comparing the distribution of the entries of $\bs{V}_1$ to a
Gaussian distribution, for Bernoulli(0.1)- and
Bernoulli(0.001)-marginally-distributed $\bs{X}$.
\begin{figure}[!t]
  \centering
    \subfigure[]{\label{qqnormt3}
      \includegraphics[width=0.45\textwidth]{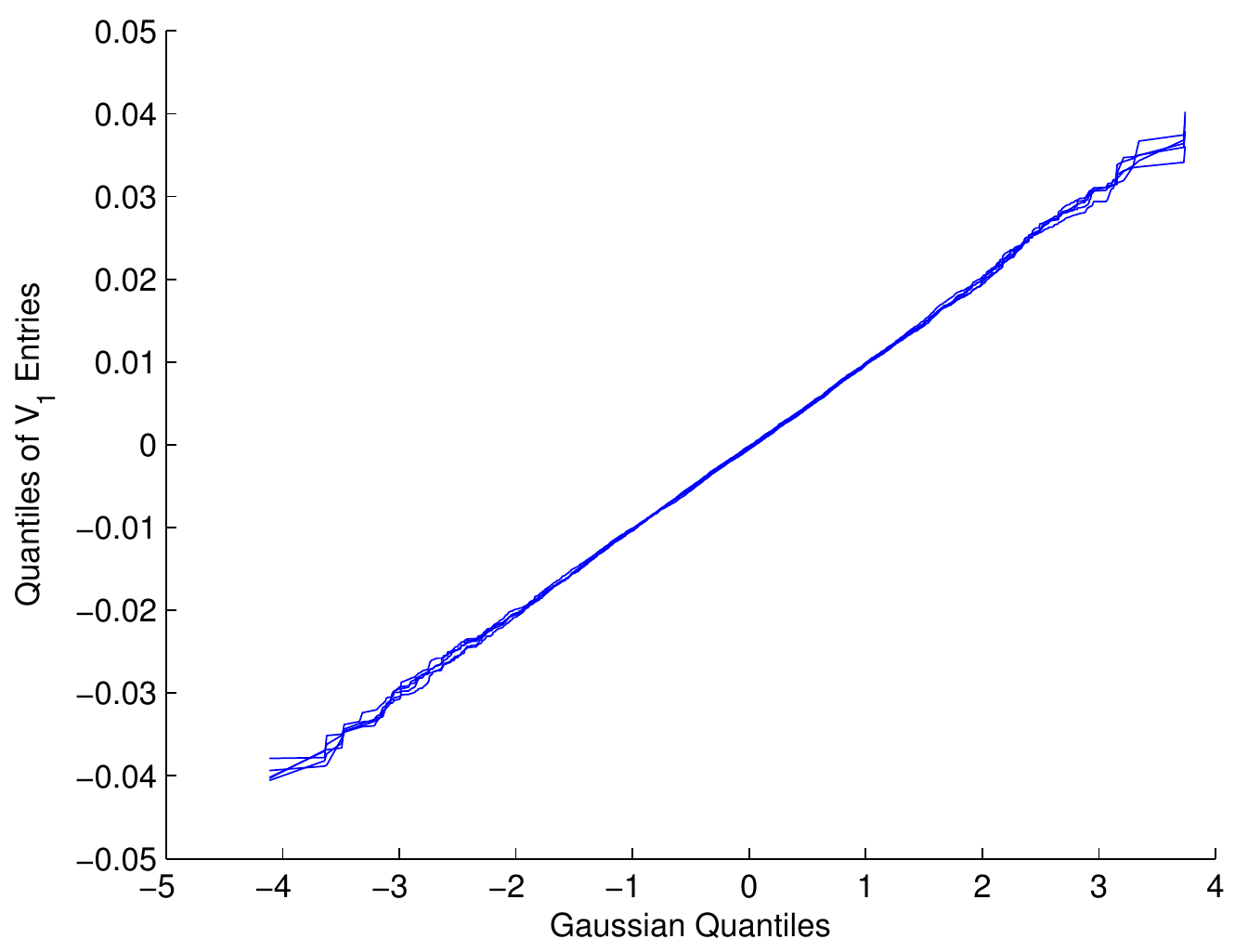}
    }
    \qquad
    \subfigure[]{\label{qqnormt5}
      \includegraphics[width=0.45\textwidth]{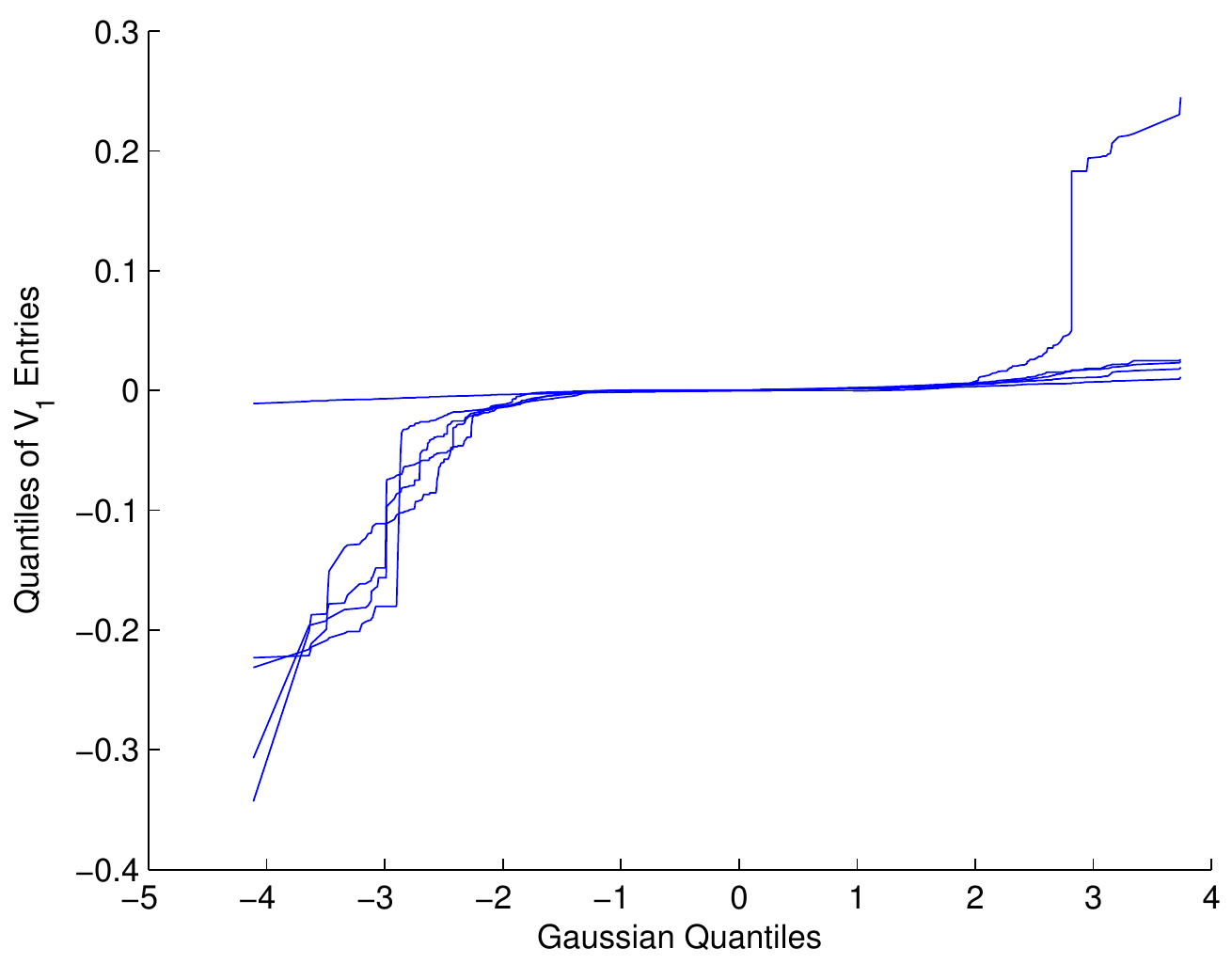}
    }
    \caption{Quantile-quantile plots measuring the Gaussianity of 5
      realizations of the entries of $\bs{V}_1$ for (a)
      Bernoulli(0.1)-distributed $\bs{X}$ and (b) Bernoulli(0.001)-distributed $\bs{X}$.}
\label{qq}
\end{figure}
The figure shows that the distribution for Bernoulli(0.1) is very
nearly Gaussian, but that this is far from the case for
Bernoulli(0.001), and thus it is the problem described at the end
  of the preceding paragraph that causes problems.

\subsection{2-Step Procedure}
\label{2step}
Note that in the variance upper-bound of Equation~\eqref{varUB}, the
unknown $\rho$ is maximized over to remove it from the equation. This
leads not only to conservative CIs, but suboptimal $\bs{w}^*$ as well,
since $\bs{w}^*$ are obtained by minimizing this upper-bound, as
opposed to the more accurate function of $\rho$. However by the end of
the EigenPrism procedure, we have produced estimates of both $\theta$
and $\theta^2+\sigma^2$, suggesting the possibility of a 2-step
plug-in procedure to remove the need for the upper-bound in
Equation~\eqref{varUB}. Explicitly, in the first step, we run the
EigenPrism procedure to obtain an estimate
$\hat{\rho}=T_2/(\|\bs{y}\|_2^2/n)$ of $\rho$. In the second step, we
re-run the procedure treating $\rho=\hat{\rho}$ as known, and thus
minimize the bound \eqref{varUB1} to compute $\bs{w}^*$. Although the
2-step procedure indeed produces shorter CIs than the EigenPrism
procedure, it does not achieve nominal coverage with the same
  consistency, as shown in Figure~\ref{fig2step}.

\begin{figure}[!t]
  \centering
    \subfigure[]{\label{twostepcoverage}
      \includegraphics[width=0.45\textwidth]{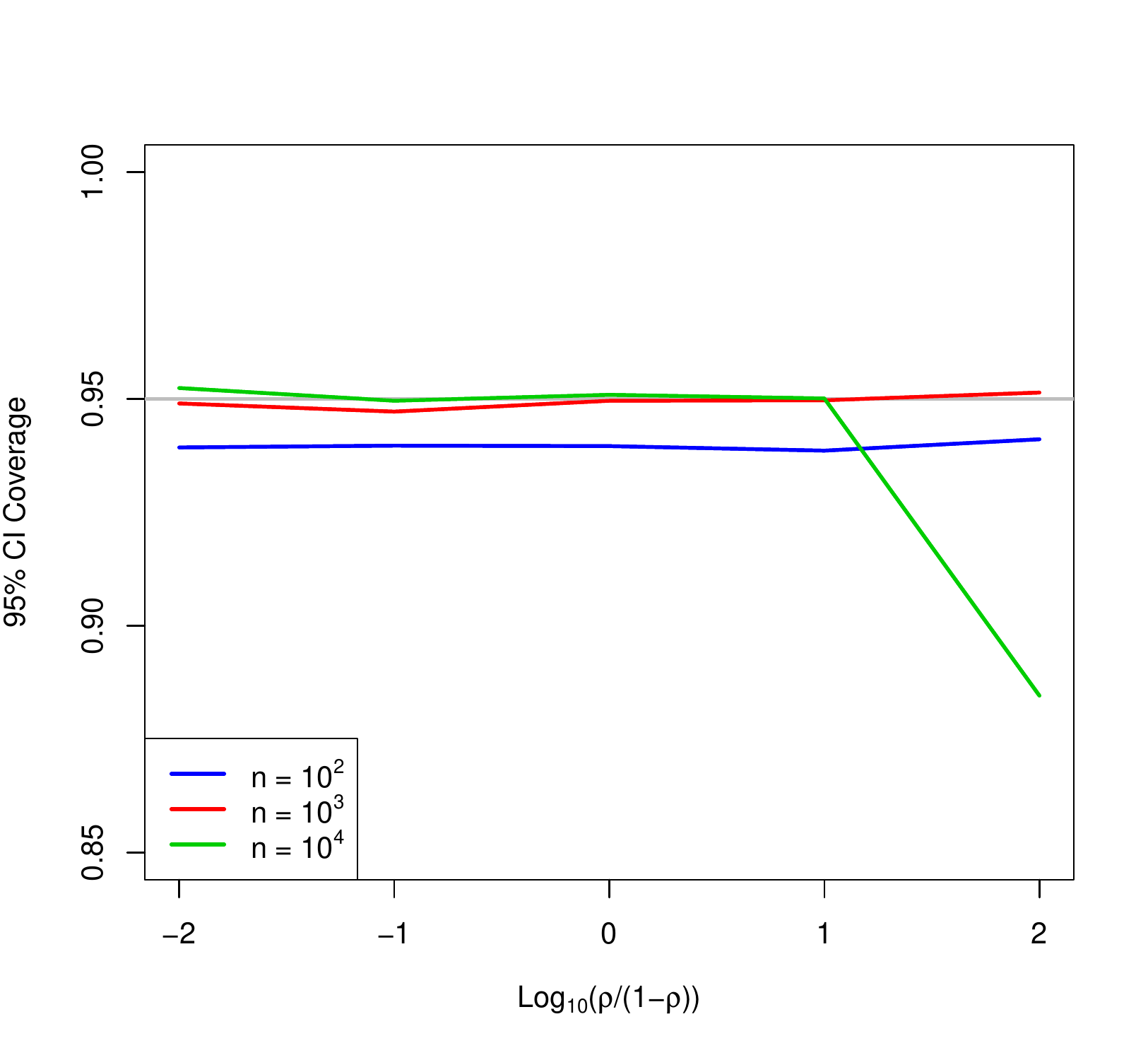}
    }
    \qquad
    \subfigure[]{\label{twostepratio}
      \includegraphics[width=0.45\textwidth]{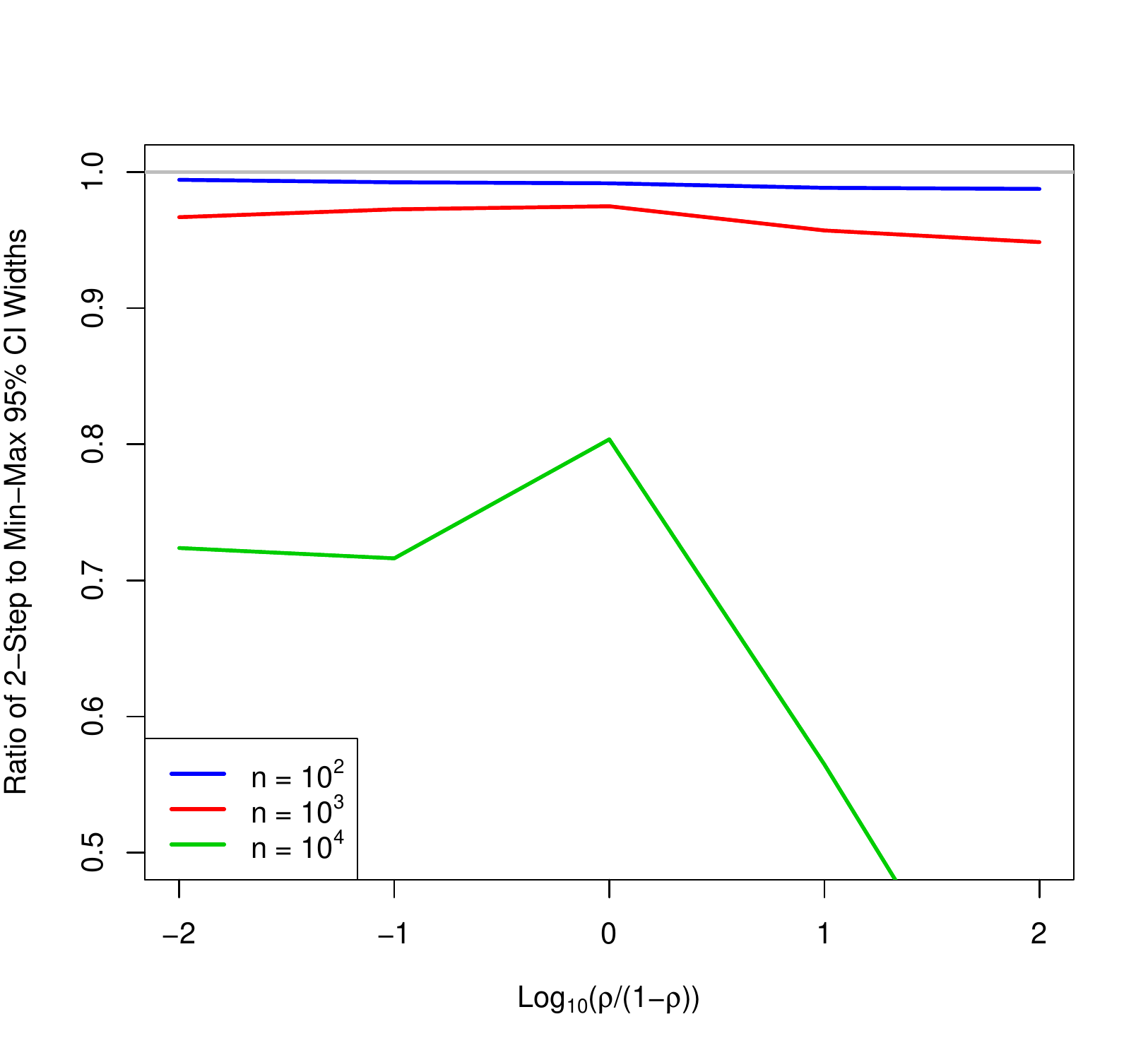}
    }
    \caption{(a) Coverage and (b) width relative to EigenPrism of 2-step
      confidence intervals for $p=10^4$ and $\theta^2+\sigma^2=10^4$ across a range of
      $\rho$ and $n$, with each point representing $10^4$
      simulations. Grey lines show (a) nominal coverage and (b)
      reference ratio of 1.}
\label{fig2step}
\end{figure}
There are two particularly surprising aspects of this plot. The first
is that the 2-step procedure produces substantial gains in width even
for $\rho$ values near 0 and 1. This is surprising because the
upper-bound \eqref{varUB} that is eliminated by the 2-step procedure
is tight when $\rho$ is nearly 0 or 1, however it is still not
exact. The slightly loose variance upper bound turns out to have an
optimizing $\bs{w}$ that is substantially different from the exact
variance formula. The second surprising feature is that the width
improvement is in fact \emph{smallest} for $\rho$ not near the
endpoints $0$ or $1$. This can be explained by the clipping at 0. For
$\rho\approx 0$, most CIs, both EigenPrism and 2-step, are cut nearly
in half by clipping, so the fractional width improvement achieved by
the 2-step procedure is fully realized. For $\rho\approx 1$, both
intervals are rarely clipped, and again the 2-step procedure realizes
its full width improvement. However, for $\rho$ not close to 0 or 1,
many EigenPrism CIs are only slightly shrunk by clipping, so that the
shorter 2-step intervals shorten the right side of the interval but
leave the unclipped left side about the same, so that much less than
the full width improvement is realized.

Although the 2-step procedure can provide substantial gains in width,
it loses the robustness of the EigenPrism procedure, as shown in the
slight undercoverage for $n=100$ and the substantial undercoverage for
large $\rho$ and $n=p$. Therefore, in practice, we recommend use of the 2-step
procedure instead of the EigenPrism procedure when $n\not\approx p$ or when the
statistician is confident that $\rho$ is not close to 1.

\section{Variance Decomposition in the Northern Finland Birth Cohort}
\label{nfbc}
We now briefly show the result of applying EigenPrism to a
dataset of SNPs and continuous phenotypes to perform
inference on the $\textsc{SNR} = \|\bs{\Sigma}^{1/2}\bs{\beta}\|_2^2/(\|\bs{\Sigma}^{1/2}\bs{\beta}\|_2^2 + \sigma^2)$. The data we use
comes from the Northern Finland Birth Cohort 1966 (NFBC1966)
\citep{sabatti2009,jarvelin2004}, made available through the dbGaP
database (accession number phs000276.v2.p1). The data consists of 5402
SNP arrays from subjects born in Northern Finland in 1966, as well
as a number of phenotype variables measured when the subjects were 31 years
old. After cleaning and processing the data (the details of which are
provided in Appendix~\ref{processing}), 328,934 SNPs remained. The
resulting $5402 \times 328,934$ design matrix $\bs{X}$ contained
approximately 58\% 0's (homozygous wild type), 34\% 1's
(heterozygous), and 8\% 2's (homozygous minor
allele).

In order to use EigenPrism directly, we would need to know
$\bs{\Sigma}$, as simply using $\bs{I}_p$ presents two possible
problems:
\begin{itemize}
\item[(1)] If $\bs{X}$ is not whitened before taking the SVD, the
  columns of $\bs{V}$ may be far from Haar-distributed, rendering
  our bias and variance computations incorrect.
\item[(2)] If $\bs{\Sigma}=\bs{I}_p$, then the ostensible target of
  our procedure is $\|\bs{\beta}\|_2^2/(\|\bs{\beta}\|_2^2 +
  \sigma^2)$, which may differ substantially from $\textsc{SNR} =
  \|\bs{\Sigma}^{1/2}\bs{\beta}\|_2^2/(\|\bs{\Sigma}^{1/2}\bs{\beta}\|_2^2
  + \sigma^2)$.
\end{itemize}
Unfortunately, the problem of estimating the covariance
matrix of a SNP array is extremely challenging (and the subject of much
current research) due to the fact that $n\ll p$, even if we use
outside data, so we prefer to avoid it here. In order to simply treat
the covariance matrix as diagonal, we must consider the two problems
above. There is a widely-held belief that the SNP locations that are
important for any given trait are relatively rare (see, for example,
\cite{yang2010common,Golan2011a}), and thus spaced far enough apart on
the genome to be treated as independent. This precludes problem (2)
above, since with nonzero coefficients spaced far apart, we have
$\|\bs{\Sigma}^{1/2}\bs{\beta}\|_2^2 \approx
\|\bs{\beta}\|_2^2$  (we take the columns of $\bs{X}$ to be standardized,
so the diagonal of $\bs{\Sigma}$ is all ones). For problem (1), we
know that far apart SNPs are very nearly independent, so we may expect
that the true $\bs{\Sigma}$ is roughly diagonal, and we already showed in
Section~\ref{robust} that the EigenPrism procedure is robust to some
small unaccounted-for covariances when constructing CIs for
$\|\bs{\beta}\|_2^2$. To ensure that problems (1) and (2) do not cause
EigenPrism to break down, we perform a series of diagnostics before
applying it to the real data.

Given the approximation of $\bs{\Sigma}$ as diagonal, we
first performed a series of simulations to ensure EigenPrism's accuracy
was not affected. Specifically, we ran the EigenPrism procedure (with
adjustments described in the paragraph below) on
artificially-constructed traits, but using the same standardized
design matrix $\bs{X}$ from the NFBC1966 data set. For 20 different
$\bs{\beta}$ vectors, we generated 500 independent Gaussian noise
realizations and recorded the coverage of 95\% EigenPrism CIs for
\textsc{SNR}. The noise variance was 1, and the $\bs{\beta}$'s were
chosen to have 300 nonzero entries with uniformly distributed
positions and all nonzero entries equal to $\sqrt{0.3/[(1-0.3)\cdot
  300]}$ (so that $\textsc{SNR} = 0.3$ if
$\bs{\Sigma}=\bs{I}_p$). Table~\ref{nfSim} shows the coverage over the
20 $\bs{\beta}$'s, and they are indeed all quite close to 95\%, even
though this simulation was \emph{conditional} on $\bs{X}$. Recomputing
the target \textsc{SNR} using other estimates of $\bs{\Sigma}$, such
as hard-thresholding the empirical covariance at 0.1, changed the
value of \textsc{SNR} very little, so that coverage was largely
unaffected.
\begin{table}[ht]\centering
\begin{tabular}{|r|c|c|c|c|c|c|c|c|c|c|c|}
\hline
Coverage & 90\% & 91\% & 92\% & 93\% & 94\% & 95\% & 96\% & 97\% & 98\% & 99\% & 100\% \\
\hline
Count & 1 & 0 & 0 & 2 & 1 & 0 & 8 & 7 & 1 & 0 & 0 \\
\hline
\end{tabular}
\caption{Coverage of 20 \textsc{SNR} 95\% CIs constructed for simulated
    traits using the NFBC1966 design matrix. Each coverage is an
    average over 500 simulations.}
\label{nfSim}
\end{table}
\begin{comment}
\begin{figure}[!t]
  \centering
  \includegraphics[width=0.45\textwidth]{}
  \caption{Coverage of \textsc{SNR} CIs constructed for simulated
    traits using the NFBC1966 design matrix. Each coverage is an
    average over 500 simulations.}
\label{nfSim}
\end{figure}
\end{comment}

\begin{figure}[!t]
  \centering
  \includegraphics[width=0.45\textwidth]{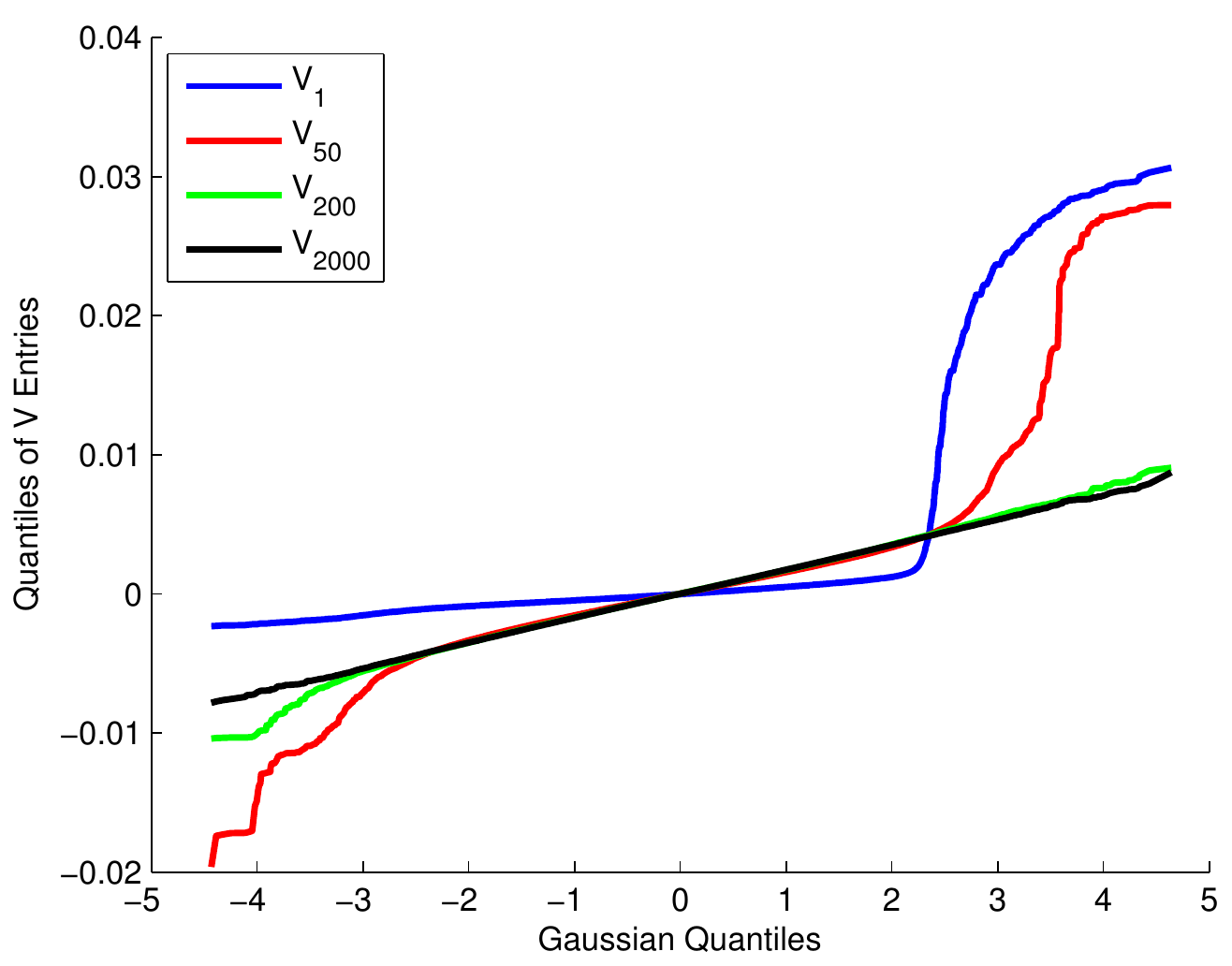}
  \caption{Distribution of the entries of some eigenvectors of the
    NFBC1966 design matrix.}
\label{nfV}
\end{figure}
A second diagnostic was to examine the columns of
$\bs{V}$ to check for Gaussianity related to the phenomenon mentioned in
Section~\ref{robust}. Indeed, we find that some of the columns of
$\bs{V}$ are quite non-Gaussian, as shown in
Figure~\ref{nfV}. However, this phenomenon is localized to only the
columns of $\bs{V}$ corresponding to the very largest
$\lambda_i$. Applying the unaltered EigenPrism procedure could cause
two problems. First, if the the first columns of
$\bs{V}$ are not Haar-distributed, $T_2$ could be biased and/or
higher-variance than our theory accounts for. Second, recalling the
interpretation of EigenPrism as a weighted regression of $z_i^2$ on
$\lambda_i$, the fact that the problematic eigenvectors correspond to
the largest eigenvalues means that they have high \emph{leverage},
which exacerbates any unwanted bias or variance they create. Luckily,
both problems can be remedied by running the EigenPrism
\textsc{SNR}-estimation procedure (with $\bs{\Sigma}=\bs{I}_p$ after
standardizing the columns of $\bs{X}$) with the added constraint to
the optimization program in Equation~\eqref{wopt} that the first
entries of $\bs{w}$ are equal to zero. Explicitly, as the non-Gaussianity
of the columns of $\bs{V}$ appears to dissipate after around the
100$^{th}$ column, we set $w_1 = \cdots = w_{100} = 0$. The choice of
100 is somewhat subjective, but we tried other values and obtained
very similar results. Because the resulting weights still obey the
original constraints, the estimator of $\|\bs{\beta}\|_2^2$ remains unbiased and the
variance upper-bounds remain valid. Although motivated by the
diagnostics from Section~\ref{robust}, this adjustment has the added
advantage of making the entire EigenPrism procedure completely independent of
the first 100 rows of $\bs{U}$. It has been shown that
the first rows of $\bs{U}$ are strongly related to the population
structure of the sample (for example, the first two principal
components correspond closely with subjects' geographic origin), so
constraining the first weights to be zero has the added effect of
controlling for population structure \citep{price2006}. As a final
note, by subtracting off the means of each column of $\bs{X}$, we
reduced $\bs{X}$'s rank by one, resulting in $\lambda_n=0$. As this is
not actually reflective of the distribution of $\bs{X}$, we also force
$w_n=0$ so that the last column of $\bs{V}$ and last row of $\bs{U}$
do not contribute to our estimate or inference.

Encouraged by the simulation results from Table~\ref{nfSim}, we proceeded to generate
EigenPrism CIs for the \textsc{SNR}s of the 9 traits
analyzed in \cite{sabatti2009}, as well as height (these 10 traits
were also analyzed in \cite{kang2010}). For each trait, transformation
and subject exclusion was performed
before computing \textsc{SNR}, following closely the procedures used
in \cite{sabatti2009, kang2010} (see Appendix~\ref{processing} for
details). Lastly, all non-height phenotype values were adjusted for
sex, pregnancy status, and oral contraceptive use, while height was
only adjusted for sex.
\begin{table}[ht]\centering
\begin{tabular}{|r|c|c|c|}
\hline
Phenotype Name & \# Samples & \textsc{SNR} 95\% CI (\%) & Point
Estimate (\%) \\
\hline
Triglycerides & 4644 & [3.1, 29.3] & 16.2 \\
\hline
HDL cholesterol & 4700 & [17.1, 42.9] & 30.0 \\
\hline
LDL cholesterol & 4682 & [27.7, 53.6] & 40.7 \\
\hline
C-reactive protein & 5290 & [5.6, 28.8] & 17.2 \\
\hline
Glucose & 4895 & [4.0, 28.9] & 16.5 \\
\hline
Insulin & 4867 & [0.0, 21.5] & 9.0 \\
\hline
BMI & 5122 & [8.9, 32.8] & 20.9 \\
\hline
Systolic blood pressure & 5280 & [7.8, 31.0] & 19.4 \\
\hline
Diastolic blood pressure & 5271 & [7.4, 30.7] & 19.0 \\
\hline
Height & 5306 & [46.0, 69.1] & 57.6 \\
\hline
\end{tabular}
\caption{CIs for heritability estimates 
  for each of the 10 continuous phenotypes considered, along with
  the number of samples used for each.}
\label{Htab}
\end{table}
Table~\ref{Htab} gives the point estimate and 95\% CI for the \textsc{SNR} of each
phenotype, as well as the number of subjects used. Recall that these are CIs for
the fraction of variance explained by the linear model consisting of
the given array of SNPs. Still, these CIs generally agree quite well
with heritability estimates in the literature
\citep{kang2010}. For instance, \cite[Supplementary Information]{kang2010} reports two
``pseudo-heritability'' estimates of 73.8\% and 62.5\% for height, and
27.9\% and 24.2\% for BMI, on the same data set. This is somewhat
remarkable given that they use a completely different statistical
procedure with different assumptions. In particular, while other works
in the heritability literature tend to treat $\bs{\beta}$ as random,
EigenPrism was motivated by a simple model with $\bs{\beta}$ fixed and
the rows of $\bs{X}$ random. We find this model more realistic, as
true genetic effects are not in fact random, but fixed. One could
argue the difference is not too important as long as the genetic
effects are approximately distributed as the random effects model
chosen, but such an assumption is impossible to verify in practice, as
the true effects are never observed. EigenPrism's assumptions, on the other
hand, are all on the design matrix, which is fully observed, leading
to checks and diagnostics that can be performed to ensure the
procedure will generate reasonable CIs.

\begin{comment}
A random set of 500 subjects
was used to estimate the SNP covariance matrix using the
methodology of \cite{wen2010}, which shrinks the off-diagonal entries
of the emipirical covariance matrix using recombination rate data from
the 1000 genomes project\footnote{Available at:
  \texttt{https://mathgen.stats.ox.ac.uk/impute/1000GP Phase 3
    haplotypes 6 October 2014.html}} \cite{Abecasis2012}. The remaining subjects were
used to estimate the \textsc{SNR} for each trait. The columns of the
design matrix were centered before the EigenPrism procedure was
applied.

The
restriction to linearity and a hopefully representative, but not
exhaustive set of SNPs makes this analysis likely to return results
that are lower than many extant heritability studies for the same
phenotypes.
\end{comment}

\section{Discussion}
\label{discussion}
We have presented a framework for performing inference
on the $\ell_2$-norm of the coefficient vector in a linear regression
model. Although the resulting confidence intervals are asymptotic, we
show in extensive simulations that they achieve nominal coverage in
finite samples, without making any assumption on the structure or
sparsity of the coefficient vector, or requiring knowledge of
$\sigma^2$. In simulations, we are able to relax the restrictive
assumptions on the distribution of the design matrix and gain an
understanding of when our procedure is not appropriate. Applying this
framework to performing inference on $\ell_2$ regression error, noise level,
and genetic signal-to-noise ratio, we develop new procedures in all three
that are able to construct accurate CIs in situations not previously
addressed in the literature.

This work leaves open numerous avenues for further study. We briefly
introduced a 2-step procedure that provided substantially shorter CIs
than the EigenPrism procedure,
but had less-consistent coverage. If we could better understand that
procedure or come up with diagnostics for when it would
undercover, we could improve on the EigenPrism procedure. We also
explored in simulation a number of model failures that our procedure was robust (or
not) to, but further study could provide theoretical guarantees
on the coverage of the EigenPrism procedure for a broader class of random design
models. Section~\ref{robust} also briefly alluded to improved
robustness in a random effects framework, which we have not explored
further here. Finally, although in this work we consider a statistic that is
linear in the $z_i^2$, the framework and ideas of this work are not
intimately tied to this restriction, and there may exist statistics
that are nonlinear functions of the $z_i^2$ that give improved performance.

\section*{Acknowledgements}

We owe a great deal of gratitude to Chiara Sabatti for her patience in
explaining to us key concepts in statistical genetics and for her
guidance. We also thank Art Owen for sharing his unpublished notes
with us and for his constructive feedback, and Matthew Stephens and
Xiang Zhu for their helpful discussions on covariance estimation of
SNP data.  L.~J.~was partially supported by NIH training grant
T32GM096982. E.~C.~is partially supported by a Math + X Award from the
Simons Foundation.  The NFBC1966 Study is conducted and supported by
the National Heart, Lung, and Blood Institute (NHLBI) in collaboration
with the Broad Institute, UCLA, University of Oulu, and the National
Institute for Health and Welfare in Finland. This manuscript was not
prepared in collaboration with investigators of the NFBC1966 Study and
does not necessarily reflect the opinions or views of the NFBC1966
Study Investigators, Broad Institute, UCLA, University of Oulu,
National Institute for Health and Welfare in Finland and the
NHLBI. %\rev{We would also like to thank the editors and reviewers for
%  their helpful comments which served to improve the paper.}

\bibliography{references}
\bibliographystyle{rss}

\appendix

\section{Inference for $\theta^2$ under non-Gaussian design with known
  variance}
\label{nonGaussKnownSigma}

The method of Section~\ref{para} also works asymptotically under more general
conditions than the Gaussianity assumptions of \eqref{normassump}. Let
$\bs{z} \sim (\bs{\mu}, \bs{\Sigma})$ denote the statement that
$\bs{z}$ has some distribution with mean $\bs{\mu}$ and covariance
matrix $\bs{\Sigma}$. Consider again the linear model \eqref{linmod}
but with relaxed assumptions,
\begin{equation*}
\bs{x}_i \iid (\bs{\mu}, \bs{I}_p - \bs{\mu}\bs{\mu}^{\top}), \qquad \varepsilon_i \iid (0, \sigma^2),
\end{equation*}
again with $\sigma^2$ known and $\bs{X}$ independent of
$\bs{\varepsilon}$. Under this model, we get that
\begin{equation*}
y_i^2 \iid
(\theta^2 + \sigma^2, v_1)
\end{equation*}
and the asymptotic distribution in \eqref{chi2distr} in turn becomes, by the CLT,
\begin{equation}
\label{normdistr}
\frac{1}{\sqrt{n}} \| \bs{y} \|_2^2 - \sqrt{n}\big(\theta^2 + \sigma^2\big)
\stackrel{\mathcal{D}}{\longrightarrow} N(0, v_1),
\end{equation}
as $n \rightarrow \infty$, where $v_1$ does not depend on $n$ but
does depend on the unknown $\bs{\beta}$, and is given by
\begin{equation*}
\begin{split}
v_1 & = \E{\varepsilon_i^4} + 4\sigma^2 \Big[\theta^2 - \big(\bs{\mu}^{\top}
\bs{\beta}\big)^2\Big] + 4\E{\varepsilon_i^3} \bs{\mu}^{\top} \bs{\beta} +
\mathbb{E}\left[\big(\bs{x}_i^{\top} \bs{\beta}\big)^4\right] \\
& \qquad - \sigma^4 - \theta^4
- \big(\bs{\mu}^{\top} \bs{\beta}\big)^4 + 2\theta^2 \big(\bs{\mu}^{\top} \bs{\beta}\big)^2 \\
\end{split}
\end{equation*}

In order to be less parametric, we can consider bootstrap
confidence intervals based on the above calculations. Corresponding to
\eqref{normdistr} we can get an unbiased statistic, 
\begin{equation}\label{eqn:T1}
\begin{split}
T_1 & := \frac{1}{n}\| \bs{y}\|_2^2 -
\sigma^2, \\
\E{T_1} & = \theta^2, \\
\sd(T_1) & = \sqrt{v_1/n}, \\
\end{split}
\end{equation}
whose distribution we may hope to be close to Gaussian. $T_1$ can be
bootstrapped (potentially with standard bias-correction and
acceleration) to obtain bootstrap CIs, nonparametrically dealing with
the unknown variance $v_1$. We ran simulations with $n=800$, $p=1500$,
$\bs{X}$ having i.i.d. Bernoulli(0.05) entries (the columns of $\bs{X}$ were
then standardized to have mean 0 and variance 1), $\theta^2 = \sigma^2
= 10$, and $\varepsilon_i$ i.i.d. $t_5$ (rescaled to have variance
10). We generated a single $\bs{\beta}$ uniformly on the
$\theta$-radius sphere and ran 1000 simulations (so that $\bs{\beta}$
did not change across simulations). Bias-corrected, accelerated
95\% bootstrap CIs achieved 93.8\% coverage (this is
within statistical uncertainty of the nominal 95\%, as a 95\% CI for
the CI coverage is $[0.923,0.953]$).

\section{Calculation of variance of EigenPrism estimator}
\label{minmaxvar}

In this section we calculate the variance of the statistic
$S=\sum_{i=1}^n w_iz_i^2$ when conditioning on $\bs{d}$.  Here we
treat $\bs{w}$ as fixed, but note that since we condition on $\bs{d}$,
this includes values of $\bs{w}$ that are calculated as a function of
$\bs{d}$, as in the EigenPrism method.

\begin{align*}
\var(S|\bs{d}) &=
\var\left(\left.\sum_{i=1}^nw_iz_i^2\right|\bs{d}\right)\\
&= \sum_{i=1}^nw_i^2\var\left(\left.z_i^2\right|\bs{d}\right) +
\sum_{\substack{i,j=1\\i\neq j}}^nw_iw_j\cov\left(\left.z_i^2,z_j^2\right|\bs{d}\right)\;.
\end{align*}

{We now calculate each term. Recall that $\lambda_i := d_i^2/p$ for $i=1,\dots,n$.} Then
\begin{align*}
\var\left(\left.z_i^2\right|\bs{d}\right)
&=\Ec{z_i^4}{\bs{d}}-\Ec{z_i^2}{\bs{d}}^2\\
&= \mathbb{E}\left[\left.(d_i\inner{\bs{V}_i}{\bs\beta}+\eps_i)^4\right|\bs{d}\right] - \left(\lambda_i \theta^2 + \sigma^2\right)^2\\
\intertext{Using $\eps_i\sim N(0,\sigma^2)$ {(and the fact that $\bs{\eps}\independent\bs{V}$),}}
&= \mathbb{E}\left[\left.(d_i\inner{\bs V_i}{\bs\beta})^4\right|\bs{d}\right] +
6\sigma^2\mathbb{E}\left[\left.(d_i\inner{\bs V_i}{\bs\beta})^2\right|\bs{d}\right] + 3\sigma^4 - \left(\lambda_i\theta^2 + \sigma^2\right)^2\\
\intertext{Using $\inner{\bs V_i}{\bs\beta}^2\sim \theta^2\cdot \mathsf{Beta}\left(\frac{1}{2},\frac{p-1}{2}\right)$,}
&= d_i^4\theta^4\cdot\frac{1\cdot 3}{p\cdot(p+2)} + 6\sigma^2 \lambda_i\theta^2 + 3\sigma^4 - \left(\lambda_i\theta^2 + \sigma^2\right)^2\\
&= 2\lambda_i^2\theta^4\frac{p-1}{p+2} + 4\sigma^2\lambda_i\theta^2 + 2\sigma^4
\end{align*}

Also, {for $i\neq j$},
\begin{align*}
\cov\left(\left.z_i^2,z_j^2\right|\bs{d}\right)
&= \Ec{z_i^2z_j^2}{\bs{d}} - \Ec{z_i^2}{\bs{d}}\Ec{z_j^2}{\bs{d}}\\
&= \mathbb{E}\left[\left.(d_i\inner{\bs V_i}{\bs\beta}+\eps_i)^2(d_j\inner{\bs V_j}{\bs\beta}+\eps_j)^2\right|\bs{d}\right] -  \left(\lambda_i\theta^2 + \sigma^2\right) \left(\lambda_j\theta^2 + \sigma^2\right)\\
\intertext{Using $\eps_i,\eps_j\iidsim N(0,\sigma^2)$ {(and the fact that $\bs{\eps}\independent\bs{V}$),}}
&= \Ec{d_i^2d_j^2\inner{\bs V_i}{\bs\beta}^2\inner{\bs
    V_j}{\bs\beta}^2}{\bs{d}} + \sigma^2\Ec{d_i^2\inner{\bs
    V_i}{\bs\beta}^2}{\bs{d}} + \sigma^2\Ec{d_j^2\inner{\bs
    V_j}{\bs\beta}^2}{\bs{d}}\\
&\qquad\qquad  + \sigma^4 - \left(\lambda_i\theta^2 + \sigma^2\right) \left(\lambda_j\theta^2 + \sigma^2\right)\\
\intertext{Using $\inner{\bs V_i}{\bs\beta}^2,\inner{\bs V_j}{\bs\beta}^2\sim \theta^2\cdot\mathsf{Beta}\left(\frac{1}{2},\frac{p-1}{2}\right)$,}
&= \Ec{d_i^2d_j^2\inner{\bs V_i}{\bs\beta}^2\inner{\bs
    V_j}{\bs\beta}^2}{\bs{d}} + \sigma^2\lambda_i\theta^2+
\sigma^2\lambda_j\theta^2 + \sigma^4\\
&\qquad\qquad - \left(\lambda_i\theta^2 + \sigma^2\right) \left(\lambda_j\theta^2 + \sigma^2\right)\\
&= d_i^2d_j^2\Ec{\inner{\bs V_i}{\bs\beta}^2\inner{\bs
    V_j}{\bs\beta}^2}{\bs{d}}  - \lambda_i\lambda_j\theta^4\\
&= d_i^2d_j^2\cov\left(\left.\inner{\bs V_i}{\bs\beta}^2,\inner{\bs V_j}{\bs\beta}^2\right|\bs{d}\right)\\
\intertext{Using $\left(\inner{\bs V_i}{\bs\beta}^2,\inner{\bs
    V_j}{\bs\beta}^2,\theta^2 - \inner{\bs V_i}{\bs\beta}^2 -
  \inner{\bs V_j}{\bs\beta}^2\right)\sim \theta^2\cdot\mathsf{Dirichlet}\left(\frac{1}{2},\frac{1}{2},\frac{p-2}{2}\right)$,}
&= \frac{-2}{p+2}\lambda_i\lambda_j\theta^4.
\end{align*}

Then,
\begin{align*}
\var\left(\left.S\right|\bs{d}\right) 
&=\sum_{i=1}^n w_i^2\left(2\lambda_i^2\theta^4\frac{p-1}{p+2} +
  4\sigma^2\lambda_i\theta^2 + 2\sigma^4\right) +
\sum_{\substack{i,j=1\\i\neq
    j}}^nw_iw_j\left(\frac{-2}{p+2}\lambda_i\lambda_j\theta^4\right)\\
&=\sum_{i=1}^n w_i^2\left(2\lambda_i^2\theta^4\frac{p}{p+2} +
  4\sigma^2\lambda_i\theta^2 + 2\sigma^4\right) +
\sum_{i=1}^n\sum_{j=1}^nw_iw_j\left(\frac{-2}{p+2}\lambda_i\lambda_j\theta^4\right)\\
&=2 \sigma^4 \sum_{i=1}^n w_i^2 + 4 \sigma^2
\theta^2 \sum_{i=1}^n w_i^2 \lambda_i + 2 \theta^4 \left[\frac{p}{p+2} \sum_{i=1}^n w_i^2 \lambda_i^2 - \frac{\left(\sum_{i=1}^n w_i \lambda_i\right)^2}{p+2}\right].\\
\end{align*}

\rev{
\section{Proof of Asymptotic Normality of $T_2$ and $T_3$}
\label{asymptoticnormality}
\begin{proof}
First consider $T_2(\bs{y},\bs{X})$ as a deterministic function of the
random $\bs{y}$ and $\bs{X}$. Then for any constant $c$,
\begin{equation}
\label{equivariance}
T_2(c\bs{y},\bs{X}) = c^2T_2(\bs{y},\bs{X}).
\end{equation}
Note that
$c\bs{y}$ also follows a linear model, only with $\theta^2$
replaced by $c^2\theta^2$ and $\sigma^2$ replaced by
$c^2\sigma^2$. Thus by taking $c=1/\max\{\theta,\sigma\}$ we may
treat $\theta^2$ and $\sigma^2$ as belonging to $[0,1]$ in order to
prove asymptotic normality of $T_2(c\bs{y},\bs{X})$, which by
Equation~\eqref{equivariance} implies asymptotic normality of
$T_2(\bs{y},\bs{X})$. The same argument holds for $T_3$, and so
without loss of generality, in the
remainder of the proof we assume $\theta^2$ and $\sigma^2$ are both
bounded. We have assumed $\max\{\theta^2,\sigma^2\}>0$,
as the case $\theta^2=\sigma^2=0$ is immediately identifiable because
$\bs{y}\equiv\bs{0}$, and trivial.

Recall that because $V$ is Haar-distributed,
\begin{equation*}
(\bs{V}_1^{\top}\bs{\beta}, \dots, \bs{V}_n^{\top}\bs{\beta})
\stackrel{d}{=} \theta/\|\bs{u}\| \cdot (u_1,\dots,u_n),
\end{equation*}
where $\bs{u} \sim N(\bs{0},\bs{I}_p)$. From this, we can rewrite
$T_2$ as:
\begin{align}
T_2 - \EE{T_2} &= \sum_{i=1}^n w_i
\left(\sqrt{\lambda_i}\theta\frac{u_i}{\|\bs{u}\|/\sqrt{p}}+\varepsilon_i\right)^2
- \sum_{i=1}^n w_i\left(\lambda_i\theta^2+\sigma^2\right)\nonumber\\
&= \frac{1}{\|\bs{u}\|^2/p} \cdot \sum_{i=1}^n
w_i\left(\sqrt{\lambda_i}\theta u_i +\varepsilon_i +
  \varepsilon_i\left(\|\bs{u}\|/\sqrt{p}-1\right)\right)^2-
\left(1+\frac{1}{\|\bs{u}\|^2/p}-\frac{1}{\|\bs{u}\|^2/p}\right)\sum_{i=1}^n w_i\left(\lambda_i\theta^2+\sigma^2\right)\nonumber\\
\label{eq:mainT2}&= \frac{1}{\|\bs{u}\|^2/p} \cdot \left(\sum_{i=1}^n
w_i\left(\sqrt{\lambda_i}\theta u_i +\varepsilon_i\right)^2 -
\sum_{i=1}^n w_i\left(\lambda_i\theta^2+\sigma^2\right)\right)\\ 
\label{eq:secondT2}&\qquad + \frac{2}{\|\bs{u}\|^2/p}
\left(\|\bs{u}\|/\sqrt{p}-1\right)\sum_{i=1}^n
w_i\left(\sqrt{\lambda_i}\theta u_i\varepsilon_i +
  \varepsilon_i^2-\sigma^2\right)\\ 
\label{eq:thirdT2}&\qquad + \frac{2\sigma^2}{\|\bs{u}\|^2/p}
\left(\|\bs{u}\|/\sqrt{p}-1\right)\sum_{i=1}^n w_i\\ 
\label{eq:fourthT2}&\qquad +
\frac{1}{\|\bs{u}\|^2/p}\left(\|\bs{u}\|/\sqrt{p}-1\right)^2\sum_{i=1}^n
w_i\left(\varepsilon_i^2-\sigma^2\right)\\ 
\label{eq:fifthT2}&\qquad +\frac{\sigma^2}{\|\bs{u}\|^2/p}
\left(\|\bs{u}\|/\sqrt{p}-1\right)^2\sum_{i=1}^n w_i\\ 
\label{eq:sixthT2}&\qquad + \left(1-\frac{1}{\|\bs{u}\|^2/p}\right)\sum_{i=1}^n w_i
\left(\lambda_i\theta^2+\sigma^2\right).
\end{align}
Our goal is to show that the right-hand side of \eqref{eq:mainT2} converges to
Gaussian, while \eqref{eq:secondT2}--\eqref{eq:sixthT2} each converge
to zero in probability. In particular, using certain probabilistic
properties of the $\lambda_i$'s and $w_i$'s (which are independent of the
other random variables), we will show convergence \emph{conditional}
on the $\lambda_i$ and $w_i$. We first prove the result for $T_2$ and then
explain the (minor) changes needed to prove the same for $T_3$ (for
which Equation~\eqref{eq:mainT2}--\eqref{eq:sixthT2} also holds).

Before either, however, we need a few tools, including the following Lemma:
\begin{lemma}\label{wbound}
For both $T_2$ and $T_3$, there exist constants $a$ and $b$ such that,
\[ \PP{\forall i, \,n |w_i| \le
  \frac{a+b\lambda_i}{\min\{\lambda_i^2,1\}}} \longrightarrow 1.\]
\proof We defer the proof to the end of this section.
\end{lemma}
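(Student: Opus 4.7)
My plan is to derive a closed form for $\bs{w}^*$ from the Karush--Kuhn--Tucker conditions of $\mathcal{P}_1$ and then bound the numerator and denominator separately; the proof for $\mathcal{P}_2$ is identical except for the right-hand side of a $2\times 2$ linear system appearing in the last step. Reformulating the minimax as the minimization of a scalar $t$ subject to $\|\bs{w}\|_2^2 \le t$, $\sum_i w_i^2 \lambda_i^2 \le t$, $\sum_i w_i = 0$, and $\sum_i w_i \lambda_i = 1$, introduce multipliers $\alpha,\beta\ge 0$ for the inequality constraints and $\mu,\nu$ for the equality constraints. Stationarity in $t$ forces $\alpha+\beta=1$, and stationarity in $w_i$ yields the closed form
\[
w_i^* \;=\; \frac{\mu + \nu \lambda_i}{2(\alpha + \beta \lambda_i^2)}.
\]

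The denominator is handled with essentially no work: since $\alpha + \beta = 1$ with $\alpha,\beta\ge 0$, the quantity $\alpha + \beta\lambda_i^2$ is a convex combination of $1$ and $\lambda_i^2$, so $\alpha + \beta \lambda_i^2 \ge \min(1, \lambda_i^2)\cdot(\alpha+\beta) = \min(1, \lambda_i^2)$ for every $i$. Consequently
\[
n\,|w_i^*| \;\le\; \frac{n(|\mu| + |\nu|\lambda_i)}{2\,\min(1, \lambda_i^2)},
\]
and it suffices to show $n|\mu|,n|\nu| = O(1)$ with probability tending to $1$.

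For that, substitute the closed form into the equality constraints to obtain the $2\times 2$ linear system $\mu S_0 + \nu S_1 = 0$ and $\mu S_1 + \nu S_2 = 2$ (for $\mathcal{P}_2$ the right-hand side becomes $(2,0)$), where $S_k := \sum_i \lambda_i^k/(\alpha + \beta \lambda_i^2)$. By Mar\v{c}enko--Pastur concentration, on an event of probability tending to $1$ all $\lambda_i$ lie in a bounded interval $[c_1,c_2]$ with $c_1>0$, and $\sum_{i<j}(\lambda_i - \lambda_j)^2 \ge c_3 n^2$ for some $c_3>0$ (this is $n^2$ times the empirical $\var(\lambda_i)$, which tends to $\gamma>0$). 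On that event, $1/(\alpha + \beta\lambda_i^2)$ is sandwiched between positive constants uniformly over $(\alpha,\beta)$ with $\alpha+\beta=1$ and $\alpha,\beta\ge 0$, so each $S_k = \Theta(n)$; and the identity
\[
S_0 S_2 - S_1^2 \;=\; \sum_{i<j}\frac{(\lambda_i - \lambda_j)^2}{(\alpha + \beta\lambda_i^2)(\alpha + \beta\lambda_j^2)}
\]
gives $S_0 S_2 - S_1^2 = \Theta(n^2)$, again uniformly. Cramer's rule then produces $|\mu|, |\nu| = O(1/n)$, yielding the claimed bound with explicit constants $a,b$ determined by $c_1,c_2,c_3$. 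The main obstacle I anticipate is the uniformity of these $\Theta$ bounds over the simplex $\alpha+\beta=1$, since the multipliers $(\alpha,\beta)$ themselves depend on the $\lambda_i$; however, because every estimate above only uses the range $\alpha,\beta\in[0,1]$ and the three Mar\v{c}enko--Pastur events on $c_1,c_2,c_3$, the uniformity reduces to a routine verification once those concentration events are set up.
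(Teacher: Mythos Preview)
Your proposal is correct and shares the paper's opening moves exactly: the same KKT reformulation, the same closed form $w_i=(\mu+\nu\lambda_i)/\bigl(2(\alpha+\beta\lambda_i^2)\bigr)$ with $\alpha+\beta=1$, and the same one-line denominator bound $\alpha+\beta\lambda_i^2\ge\min(1,\lambda_i^2)$. The two proofs diverge only in how they show $n|\mu|,n|\nu|=O_p(1)$.

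The paper does not solve a $2\times2$ system. Instead it multiplies the stationarity equation by $w_i$ and sums, obtaining $2\delta_1\sum_i w_i^2+2\delta_2\sum_i w_i^2\lambda_i^2+\kappa_2=0$ (using the equality constraints), and then invokes the \emph{already established} variance upper bound $\val(\mathcal{P}_1)=O_p(1/n)$ from Appendix~\ref{CFvarUB} to read off $|\kappa_2|\in O_p(1/n)$; a second sum (of the stationarity equation itself) then delivers $|\kappa_1|\in O_p(1/n)$ via Cauchy--Schwarz on $\sum_i w_i\lambda_i^2$. For $T_3$ the paper must first manufacture an analogue of the variance bound by exhibiting feasible weights $\breve w_i\propto\lambda_i^{-1}$. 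Your route, by contrast, plugs the closed form back into the equality constraints, obtains $\mu,\nu$ from Cramer's rule, and controls the determinant $S_0S_2-S_1^2$ through the Lagrange identity and Mar\v{c}enko--Pastur concentration (Bai--Yin for the lower edge plus convergence of the empirical variance). This is self-contained---it never appeals to the optimal-value bound---and handles $T_2$ and $T_3$ simultaneously by just changing the right-hand side of the linear system. The paper's route is shorter once $\val(\mathcal{P}_1)=O_p(1/n)$ is in hand and needs only empirical moment convergence for $T_2$, but for $T_3$ it requires the extra construction of $\breve{\bs w}$. Your worry about uniformity in $(\alpha,\beta)$ is indeed harmless: on the event $\lambda_i\in[c_1,c_2]$ the weights $1/(\alpha+\beta\lambda_i^2)$ are pinched between fixed positive constants for every $(\alpha,\beta)$ in the simplex, so all the $\Theta$-bounds are automatic.
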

Note that by convergence of the moments of the $\lambda_i$ to those of the
Mar{\v c}enko--Pastur (MP) distribution, Lemma~\ref{wbound} implies that 
\begin{equation}\label{w3bound}
\sum_{i=1}^n
|w_i|^3\lambda_i^r \in O_p(n^{-2})
\end{equation}
for any $r\in \mathbb{R}$. Note also that by the Cauchy-Schwarz
inequality,
\begin{equation}\label{w2bound}
\sum_{i=1}^nw_i^2 \lambda_i^r \ge
\frac{\left(\sum_{i=1}^n w_i\lambda_i\right)^2}{\sum_{i=1}^n\lambda_i^{2(1-r)}} =
\frac{1}{\sum_{i=1}^n\lambda_i^{2(1-r)}} \in \Omega_p(n^{-1})
\end{equation}
for any $r\in \mathbb{R}$. Finally, note that
$\|\bs{u}\|/\sqrt{p}\stackrel{p}{\longrightarrow} 1$.

Starting from the bottom, \eqref{eq:sixthT2} converges in
probability to zero because $\left(1-\frac{1}{\|\bs{u}\|^2/p}\right)
\stackrel{p}{\longrightarrow} 0$ and $\sum_{i=1}^n w_i
\left(\lambda_i\theta^2+\sigma^2\right) = \theta^2$, a
constant. \eqref{eq:fifthT2} and \eqref{eq:thirdT2}
equal zero because $\sum_{i=1}^n w_i = 0$. 

In \eqref{eq:fourthT2}, we seek to show that
$\sum_{i=1}^nw_i\left(\varepsilon_i^2-\sigma^2\right)$
converges in distribution, so that by Slutsky's Theorem,
\eqref{eq:fourthT2} converges to zero in probability. The summands
$w_i\left(\varepsilon_i^2-\sigma^2\right)$ are independent and
mean-zero with variance $2\sigma^4 w_i^2$. By Lyapunov's central
limit theorem \citep[p. 362]{billingsley1995}, we just need to establish the Lyapunov condition:
\begin{equation*}
\frac{\sum_{i=1}^n\EE{|w_i\left(\varepsilon_i^2-\sigma^2\right)|^3}}{\left(\sum_{i=1}^n\var\left(w_i\left(\varepsilon_i^2-\sigma^2\right)\right)\right)^{3/2}}
\propto\frac{\sum_{i=1}^n|w_i|^3}{\left(\sum_{i=1}^nw_i^2\right)^{3/2}}
\in O_p(n^{-1/2}),
\end{equation*}
where the $\in$ follows from \eqref{w3bound} and \eqref{w2bound}.

In \eqref{eq:secondT2}, we similarly seek to show that the
sum converges in distribution, allowing us to again use Slutsky's
Theorem to show \eqref{eq:secondT2} converges to zero in
probability. The argument is nearly the same as that for
\eqref{eq:fourthT2}, using various different values of $r$ in
\eqref{w3bound} and \eqref{w2bound} to establish the Lyapunov
condition. 

Lastly for \eqref{eq:mainT2}, by Slutsky's Theorem \citep[p. 433]{lehmann2005b}, it
suffices to show that $\sum_{i=1}^n
w_i\left(\sqrt{\lambda_i}\theta u_i +\varepsilon_i\right)^2 -
\sum_{i=1}^n w_i\left(\lambda_i\theta^2+\sigma^2\right)$ converges in
distribution to a Gaussian random variable, which can again be established
using Lyapunov's central limit theorem in nearly the same way as in
the argument for \eqref{eq:fourthT2}. Note that the resulting
variance expression $\sum_{i=1}^n
\var\left(w_i\left(\sqrt{\lambda_i}\theta u_i
    +\varepsilon_i\right)^2\right) = \sum_{i=1}^n
2\left(\lambda_i\theta^2+\sigma^2\right)^2$ is not identical to the
variance of $S$ in \eqref{varExact}, but the quotient of the
two expressions converges to 1 as $n,p \longrightarrow \infty$.

Only a few changes to the above proof are needed for establishing
asymptotic normality of $T_3$. First, an analogue to
Equation~\eqref{w2bound} can be shown:
\begin{equation}\label{w2boundT3}
\sum_{i=1}^nw_i^2 \lambda_i^r \ge
\frac{\left(\sum_{i=1}^n w_i\right)^2}{\sum_{i=1}^n\lambda_i^{-2r}} =
\frac{1}{\sum_{i=1}^n\lambda_i^{-2r}} \in \Omega_p(n^{-1}).
\end{equation}

Next, in each of \eqref{eq:thirdT2},\eqref{eq:fifthT2}, and
\eqref{eq:sixthT2}, the sum equals a constant while the coefficient in
front of the sum converges in probability to zero. The arguments for
\eqref{eq:mainT2}, \eqref{eq:secondT2}, and \eqref{eq:fourthT2} take
the same form as for $T_2$ except using \eqref{w3bound} and
\eqref{w2boundT3} instead of \eqref{w2bound} to establish the Lyapunov condition.
\end{proof}
\begin{proof}[Proof of Lemma~\ref{wbound}]
We start by slightly rewriting the optimization program
$\mathcal{P}_1$:
\begin{equation}\label{P1rewrite}
\argmin_{\bs{w}\in\mathbb{R}^n} \; t \quad \text{such that}  \,
\sum_{i=1}^n w_i^2 \le t, \;\sum_{i=1}^n w_i^2 \lambda_i^2 \le t, \;
\sum_{i=1}^n w_i = 0, \; \sum_{i=1}^n w_i \lambda_i = 1.
\end{equation}
By the Karush-Kuhn-Tucker conditions for \eqref{P1rewrite}, the
gradient of the Lagrangian with respect to $(t,w_1,\dots,w_n)$ vanishes, i.e.,
\begin{equation}\label{dualvars}
1-\delta_1-\delta_2=0,
\end{equation}
\begin{equation}\label{lagrange}
w_i\left(2\delta_1 + 2\delta_2\lambda_i^2\right) + \kappa_1 + \kappa_2\lambda_i=0,
\end{equation}
where $\delta_1\ge 0$ and $\delta_2\ge 0$ are the
dual variables corresponding to the inequalities and $\kappa_1$ and
$\kappa_2$ are the dual variables corresponding to the
equalities. Rearranging Equation~\eqref{lagrange},
\begin{equation}\label{lagrangew}
w_i = \frac{-\kappa_1 -\kappa_2\lambda_i}{2\delta_1+2\delta_2\lambda_i^2}.
\end{equation}
By Equation~\eqref{dualvars} and dual positivity constraints, we have
$\delta_1,\delta_2\in [0,1]$. Observe that
\begin{equation*}
\min_{\delta_1\in [0,1],\delta_2=1-\delta_1} \delta_1+\delta_2\lambda_i^2 = \min\{\lambda_i^2,1\},
\end{equation*}
establishing a lower-bound on the denominator. Now it suffices to show that
$|\kappa_1|,|\kappa_2|\in O_p(1/n)$. 

Multiplying
Equation~\eqref{lagrange} by $w_i$ and summing over $i$,
\begin{equation}\label{sumwi1}
2\delta_1\sum_{i=1}^nw_i^2 + 2\delta_2\sum_{i=1}^nw_i^2\lambda_i^2
+ \kappa_1\sum_{i=1}^nw_i + \kappa_2\sum_{i=1}^nw_i\lambda_i = 0.
\end{equation}
By recalling that Equation~\eqref{varbound}
established that
$\max\{\sum_{i=1}^nw_i^2,\sum_{i=1}^nw_i^2\lambda_i^2\}\in O_p(1/n)$
and the constraints $\sum_{i=1}^nw_i=0$ and $\sum_{i=1}^nw_i\lambda_i=1$,
we have that $|\kappa_2| \in O_p(1/n)$. Next, by just summing Equation~\eqref{lagrange} over $i$,
\begin{equation}\label{sumwi2}
2\delta_1\sum_{i=1}^nw_i + 2\delta_2\sum_{i=1}^nw_i\lambda_i^2
+ n\kappa_1 + \kappa_2\sum_{i=1}^n\lambda_i = 0.
\end{equation}
By Cauchy-Schwarz,
$\sum_{i=1}^nw_i\lambda_i^2 \le
\sqrt{\sum_{i=1}^nw_i^2\sum_{i=1}^n\lambda_i^4}\in O_p(1)$, and
using that $|\kappa_2| \in O_p(1/n)$ and $\sum_{i=1}^n\lambda_i\in
O_p(n)$, we find that $|\kappa_1| \in O_p(1/n)$ and the Lemma is proved for $T_2$.

To see the same result for $T_3$, first note that rewriting
$\mathcal{P}_2$ analogously to \eqref{P1rewrite} gives the same gradient for the Lagrangian,
so that Equations~\eqref{dualvars} and \eqref{lagrange} still hold
with the same implications for the denominator of $w_i$ in
Equation~\eqref{lagrangew}, so all that remains is again showing that
$|\kappa_1|,|\kappa_2|\in O_p(1/n)$.

We will need an analogue to Equation~\eqref{varbound} for $T_3$ to
show that $\max\{\sum_{i=1}^nw_i^2,\sum_{i=1}^nw_i^2\lambda_i^2\}\in
O_p(1/n)$. The proof of Equation~\eqref{varbound} can be found in
Appendix~\ref{CFvarUB}, and follows from the construction of a simple
set of weights $\tilde{w_i}$ satisfying the constraints of
$\mathcal{P}_1$. By considering instead the set of weights 
\[\breve{w}_i :=\frac{\lambda_i^{-1}}{\sum_{j=1}^{n/2}\lambda_j^{-1} -
  \sum_{j=n/2+1}^n \lambda_j^{-1}}\cdot  \begin{cases}+1,&\text{ for
  }i\leq n/2,\\-1,&\text{ for }i>n/2,\end{cases}\] satisfying the
constraints of $\mathcal{P}_2$, one can follow the
same steps to establish $\max\{\sum_{i=1}^nw_i^2,\sum_{i=1}^nw_i^2\lambda_i^2\}\in
O_p(1/n)$ for $T_3$. Using this and the constraints of
$\mathcal{P}_2$, Equation~\eqref{sumwi1} establishes $|\kappa_1|\in
O_p(1/n)$. Using this result and the same methods as for $T_2$,
Equation~\eqref{sumwi2} establishes $|\kappa_2|\in O_p(1/n)$, and the
Lemma is proved.
\end{proof}
}

\section{Variance upper-bound for $T_2$}
\label{CFvarUB}

{In this section we derive the upper bound~\eqref{varbound} on the
variance of the statistic $T_2$. For simplicity we assume that $n$ is even.}

{We begin by constructing a vector of weights $\tilde{\bs{w}}$:
\[\tilde{w}_i :=\frac{1}{\sum_{j=1}^{n/2}\lambda_j - \sum_{j=n/2+1}^n \lambda_j}\cdot  \begin{cases}+1,&\text{ for }i\leq n/2,\\-1,&\text{ for }i>n/2.\end{cases}\]
Note that $\tilde{\bs{w}}$} satisfies the constraints of the
optimization problem~\eqref{wopt}, and thus $\var(T_2)$ is
upper-bounded by Equation~\eqref{varUB} with $\tilde{\bs{w}}$ plugged
in. A second key observation is that we know from random matrix theory
that for $n,p \rightarrow \infty$ and $n/p \rightarrow \gamma \in
(0,1)$, the distribution of rescaled eigenvalues,
$\lambda_i$, converges to the MP distribution with
parameter $\gamma$.

Recalling the definitions of
  $A_{\gamma},B_{\gamma}$ given in~\eqref{eqn:ABgamma}, this implies
  that
\[\frac{1}{n}\sum_{i=1}^{n}\lambda_i\cdot(\one{i\leq n/2}-\one{i>n/2}) \rightarrow A_{\gamma}\]
and
\[\frac{1}{n}\sum_{i=1}^n\lambda_i^2 \rightarrow B_{\gamma}\;.\]

Together with the definition of $\tilde{\bs{w}}$, these imply that as $n,p \rightarrow \infty$ with $n/p
\rightarrow \gamma \in (0,1)$,
\begin{equation*}
\begin{split}
n\sum_{i=1}^n\tilde{w}_i^2 =&\, \frac{n\cdot n}{\left\{n \cdot
    \left[\frac{1}{n}\sum_{i=1}^{n}\lambda_i\cdot(\one{i\leq
        n/2}-\one{i>n/2})\right]\right\}^2} \rightarrow 
  \frac{n\cdot n}{\left(nA_{\gamma}\right)^2}= \frac{1}{A_{\gamma}^2}, \\
n\sum_{i=1}^n\tilde{w}_i^2\lambda_i^2 =&\,
\frac{n\cdot n\cdot\left[\frac{1}{n}\sum_{i=1}^n\lambda_i^2\right]}{\left\{n\cdot \left[\frac{1}{n}\sum_{i=1}^{n}\lambda_i\cdot(\one{i\leq n/2}-\one{i>n/2})\right]\right\}^2}
\rightarrow \frac{n\cdot n\cdot B_{\gamma}}{\left(nA_{\gamma}\right)^2} = \frac{B_{\gamma}}{A_{\gamma}^2}, \\
\end{split}
\end{equation*}
which in turn implies
\begin{equation}
\sqrt{n} \cdot \frac{\sqrt{\var(T_2)}}{\theta^2+\sigma^2} \le \sqrt{2 \cdot
\max\left(n\sum_{i=1}^n\tilde{w}_i^2,
  n\sum_{i=1}^n\tilde{w}_i^2\lambda_i^2\right)} \rightarrow \sqrt{2}
\cdot \max\left(\frac{1}{A_{\gamma}},\frac{\sqrt{B_{\gamma}}}{A_{\gamma}}\right).
\end{equation}

\section{Proof of Theorem~\ref{thm:lambdas}}
\label{lambdasone}
\begin{proof}
For this proof, we use Le Cam's method (see e.g.~\citet[Lemma 1]{yu1997assouad}), which states that 
\[\mathbb{P}_{\bs{Z}\sim P_0}\left[\psi(\bs{Z})=1\right] + \mathbb{P}_{\bs{Z}\sim P_1}\left[\psi(\bs{Z})=0\right] \geq 1 - \norm{P_0-P_1}_{\mathsf{TV}}\;,\]
where $\norm{\cdot}_{\mathsf{TV}}$ is the total variation norm:
\[\norm{P_0-P_1}_{\mathsf{TV}} = \sup_{\Aset\subseteq \R^n}\left|\Pp{\bs{Z}\sim P_0}{\bs{Z}\in \Aset}-\Pp{\bs{Z}\sim P_1}{\bs{Z}\in \Aset}\right|\;,\]
where the supremum is taken over Lebesgue-measurable sets.

We begin by constructing a related distribution $Q_1$:
\begin{equation}\label{eqn:gaussian_distr} \bs{W} = \theta \cdot \bs{D V}^{\top} \bs{a} \cdot r + \sigma \cdot \bs{\varepsilon}\;,\end{equation}
where $\theta=\sigma=\frac{1}{\sqrt{2}}$, and where $r\sim \chi_p/\sqrt{p}$ is independent from $\bs{V},\bs{\varepsilon}$. We will bound
\[ \norm{P_0-P_1}_{\mathsf{TV}} \leq \norm{P_0-Q_1}_{\mathsf{TV}}  + \norm{P_1-Q_1}_{\mathsf{TV}} \;.\]

First, we use the fact that $r$ concentrates tightly near $1$ for the
following bound:
\begin{align*}
\EE{\norm{\theta \cdot \bs{D V}^{\top} \bs{a} \cdot r - \theta\cdot \bs{DV}^{\top} \bs{a}}_2}
&= \mathbb{E}\left[\EEst{\norm{\theta \cdot \bs{D V}^{\top} \bs{a} \cdot r - \theta\cdot \bs{DV}^{\top} \bs{a}}_2}{r,\bs{V}}\right]\\
&= \EE{\norm{\theta \cdot \bs{DV}^{\top} \bs{a}}_2\cdot \left|r-1\right|}\\
&\leq \theta\cdot \sqrt{\EE{\bs{a}^{\top} \bs{V} \bs{D}^2 \bs{V}^{\top} \bs{a}}}\cdot \sqrt{\mathbb{E}\left[(r-1)^2\right]}\\
&= \theta\cdot  \sqrt{\EE{\bs{a}^{\top} \bs{V} \bs{D}^2 \bs{V}^{\top} \bs{a}}}\cdot \frac{1}{\sqrt{p}}\sqrt{\rev{\EE{\chi^2_p}} - 2\sqrt{p}\cdot \EE{\chi_p}+p}
\end{align*} 
Using the fact that $\EE{\chi^2_p}=p$ and $\EE{\chi_p}\geq\sqrt{p} -
\frac{1}{4\sqrt{p}}$, and that $\EE{(\bs{V}_i^{\top}
  \bs{a})^2}=\frac{1}{p}$ for each $i=1,\dots,n$,
\begin{align*}
\EE{\norm{\theta \cdot \bs{D V}^{\top} \bs{a} \cdot r - \theta\cdot \bs{DV}^{\top} \bs{a}}_2}
&\leq \theta\cdot \sqrt{\sum_i \frac{D_{ii}^2}{p}}\cdot \frac{1}{\sqrt{2p}}\\
&= \frac{\theta\sqrt{n}}{\sqrt{2p}}\;,
\end{align*}
since $\frac{1}{p}\sum_i D_{ii}^2=n$.
Next, for any measurable set $\Aset\subseteq\R^n$, we have
\begin{align*}
&\left|\Pp{\bs{W}\sim Q_1}{\bs{W}\in \Aset} - \Pp{\bs{Z}\sim P_1}{\bs{Z}\in \Aset}\right|\\
&=\left|\mathbb{E}\left[\PPst{\bs{W}\in \Aset}{r,\bs{V}}-\PPst{\bs{Z}\in \Aset}{r,\bs{V}}\right]\right|\\
&\leq\mathbb{E}\left[\left|\PPst{\bs{W}\in \Aset}{r,\bs{V}}-\PPst{\bs{Z}\in \Aset}{r,\bs{V}}\right|\right]\\
&=\mathbb{E}\left[\left|\PPst{\theta \cdot \bs{D} \bs{V}^{\top} \bs{a} \cdot r + \sigma \cdot \bs{\varepsilon}\in \Aset}{r,\bs{V}}-\PPst{\theta \cdot \bs{D} \bs{V}^{\top} \bs{a} + \sigma \cdot \bs{\varepsilon}\in \Aset}{r,\bs{V}}\right|\right]\\
&\leq \mathbb{E}\left\{\mathbb{E}\left[\left.\norm{N\left(\theta \cdot \bs{D} \bs{V}^{\top} \bs{a}\cdot r,\sigma^2\ident_n\right)-N\left(\theta \cdot \bs{D} \bs{V}^{\top} \bs{a},\sigma^2\ident_n\right)}_{\mathsf{TV}}\right|r,\bs{V}\right]\right\}\\
\intertext{Using the fact that $\norm{N(\bs{\mu},\sigma^2\ident_n)-N(\bs{\mu}',\sigma^2\ident_n)}_{\mathsf{TV}}\leq \frac{\norm{\bs{\mu}-\bs{\mu}'}_2}{\sqrt{2\pi\sigma^2}}$ for any fixed $\bs{\mu},\bs{\mu}',\sigma^2$,}
&\leq \mathbb{E}\left[\EEst{\frac{\norm{\theta \cdot \bs{D} \bs{V}^{\top} \bs{a} \cdot r - \theta\cdot \bs{D}\bs{V}^{\top} \bs{a}}_2}{\sqrt{2\pi\sigma^2}}}{r,\bs{V}}\right]\\
&\leq \frac{1}{\sqrt{2\pi\sigma^2}}\cdot \frac{\theta\sqrt{n}}{\sqrt{2p}}\;,
\end{align*}
where the last step uses our calculations above.  Since this is true
for any $\Aset\subset\R^n$, and since
$\theta=\sigma=\frac{1}{\sqrt{2}}$ by assumption under the
distribution $P_1$, we have
\[\norm{P_1-Q_1}_{\mathsf{TV}}\leq \sqrt{\frac{n/p}{4\pi}}\;.\]

Next, we bound $\norm{P_0-Q_1}_{\mathsf{TV}}$. By Pinsker's inequality,
\[\norm{P_0-Q_1}_{\mathsf{TV}} \leq \sqrt{\frac{1}{2} \mathsf{KL}\left(Q_1 \| P_0\right)}\;,\]
where $\mathsf{KL}\left(\cdot \| \cdot\right)$ is the Kullback-Leibler divergence.
Note that the distributions $P_0$ and $Q_1$ can be reformulated as
\[P_0: Z_i\indsim N(0,1)\]
and 
\[Q_1: Z_i\indsim N\left(0, \frac{\lambda_i + 1}{2}\right)\;.\]
 Writing $p_0(\cdot)$ and $q_1(\cdot)$
to be the densities of the distributions $P_0$ and $Q_1$, respectively, we have
\begin{align*}
 \mathsf{KL}\left(Q_1 \| P_0\right)
 &=\mathbb{E}_{\bs{Z}\sim Q_1}\left\{\log\left[\frac{q_1(\bs{Z})}{p_0(\bs{Z})}\right]\right\}\\
 &=\mathbb{E}_{\bs{Z}\sim Q_1}\left[\log\left(\frac{\prod_{i=1}^n \frac{1}{\sqrt{2\pi\left(\frac{\lambda_i+1}{2}\right)}}e^{-\frac{Z_i^2}{\lambda_i+1}}}{\prod_{i=1}^n \frac{1}{\sqrt{2\pi}}e^{-\frac{Z_i^2}{2}}}\right)\right]\\
 &=\mathbb{E}_{\bs{Z}\sim Q_1}\left\{- \frac{1}{2}\sum_i \left[\log\left(\frac{\lambda_i+1}{2}\right)+Z_i^2\cdot\left(\frac{2}{\lambda_i+1} -1\right)\right]\right\}\\
\intertext{Since $\Ep{\bs{Z}\sim Q_1}{Z_i^2}=\frac{\lambda_i+1}{2}$,}
 &=-\frac{1}{2}\sum_i \left[\log\left(\frac{\lambda_i+1}{2}\right)+\left(1 -\frac{\lambda_i+1}{2}\right)\right]\\
 \intertext{Using the fact that
$\log(x) \geq (x-1) - 2(x-1)^2$ for all $x\geq \frac{1}{2}$,}
 &\leq -\frac{1}{2}\sum_i \left[\frac{\lambda_i-1}{2} - 2\left(\frac{\lambda_i-1}{2} \right)^2+\left(1 -\frac{\lambda_i+1}{2}\right)\right]\\
 &=\frac{1}{4}\sum_i (\lambda_i-1)^2\;.
\end{align*}
Combining everything, we have
\[\norm{P_0-Q_1}_{\mathsf{TV}} \leq \sqrt{\frac{1}{2} \mathsf{KL}\left(Q_1 \| P_0\right)}\leq \sqrt{\frac{1}{2}\cdot \frac{1}{4}\sum_i (\lambda_i-1)^2}\;,\]
and so
\[ \norm{P_0-P_1}_{\mathsf{TV}} \leq \sqrt{\frac{1}{8}\sum_i (\lambda_i-1)^2} +  \sqrt{\frac{n/p}{4\pi}}\;.\]
\end{proof}

\section{CVX code for computing the weight vector}
\label{cvx}
The following snippet of code was used with MATLAB Version 8.1
(R2013a) and CVX Version 2.1, Build 1085 on a 64-bit Linux
OS. The eigenvalues $\lambda_i$ are represented by the column vector
\texttt{lambda}, \texttt{t} corresponds to $2\val(\mathcal{P}_1)$, and the resulting
vector \texttt{w} corresponds to $\bs{w}^*$.

\begin{verbatim}
cvx_begin
variable t
variable w(n)
minimize t
subject to
    sum(w) == 0;
    sum(w .* lambda) == 1;
    norm([w; (t/2-1)/2])  <= (t/2+1)/2;
    norm([w .* lambda; (t/2-1)/2]) <= (t/2+1)/2;
cvx_end
\end{verbatim}

\section{Bayesian model}
\label{appbayes}
The Bayesian model is given explicitly as follows ($M$, $\bs{Z}$,
$\sigma^2$, and $\bs{\varepsilon}$ are all independent of one another):
\begin{equation}
\label{bayes}
\begin{split}
M \sim & \,\text{Exponential}(\lambda), \\
\bs{Z} \sim & \,N(0, I_p), \\
\bs{\beta} = & \,\sqrt{M} \bs{Z}, \\
\frac{1}{\sigma^2} \sim & \,\text{Gamma}(A,B), \\
\bs{\varepsilon} \sim & \,N(0, \sigma^2I_p), \\
\bs{y} = & \,\bs{X}\bs{\beta} + \bs{\varepsilon}, \\
\end{split}
\end{equation}
where the values for the parameters used were $\lambda =
\frac{50,000}{p}$ (so $\theta^2 \approx \text{Exponential}(1/2000)$),
$A = 14$, $B = \frac{1}{20,000}$, and we have used the shape/scale
parameterization of the Gamma distribution, as opposed to the
shape/rate parameterization. Figure~\ref{fig:priors} shows the
resulting priors for $\theta^2$, $\sigma^2$, and $\rho =
\frac{\theta^2}{\theta^2 + \sigma^2}$.
\begin{figure}[t!]
\centering
\includegraphics[trim=0cm 0cm 0cm 0cm, clip=true, width=16cm]{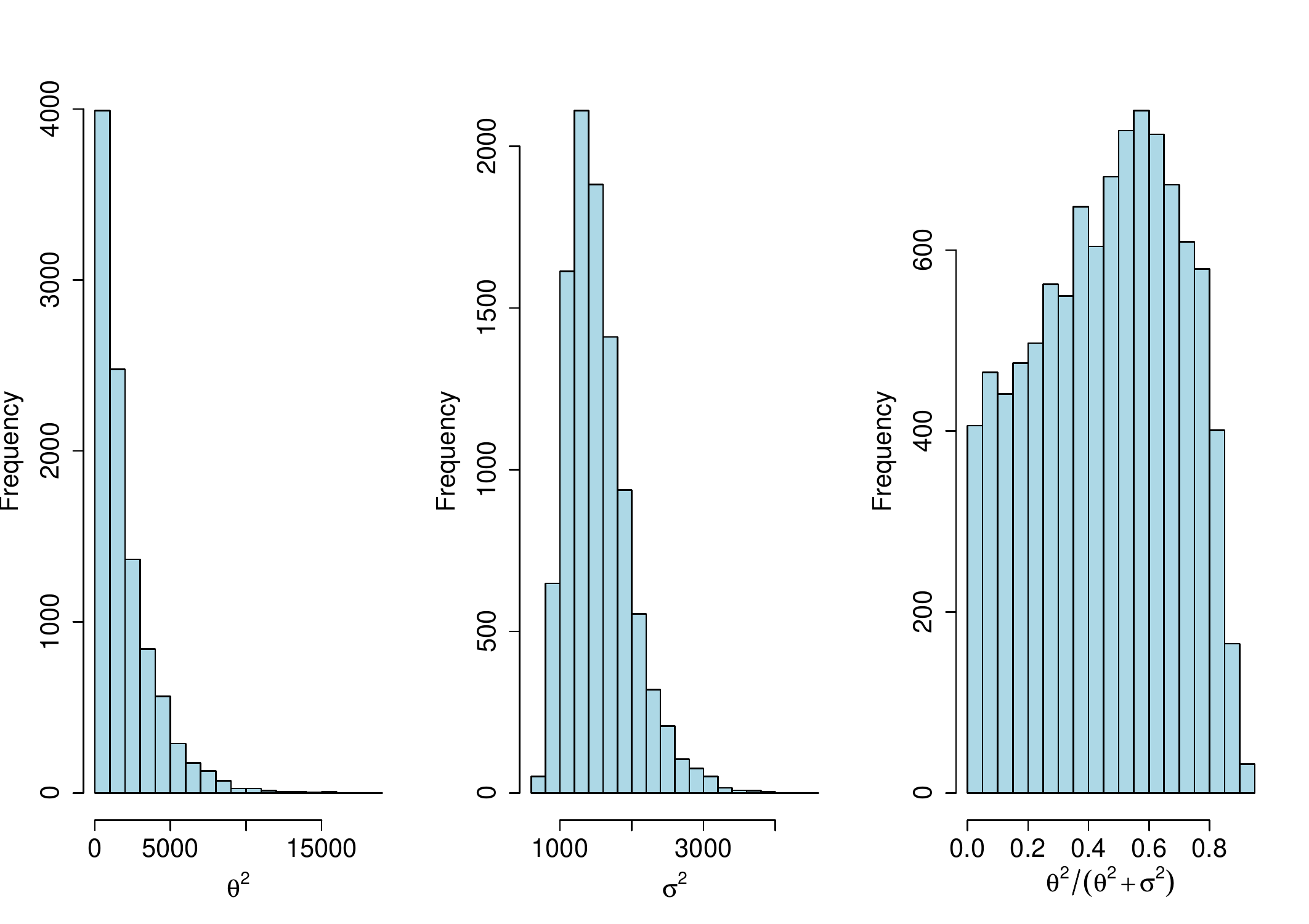}
\caption{Priors from model~\eqref{bayes}.}
\label{fig:priors}
\end{figure}

Note also that, although not shown, the posteriors achieved under this
setup were all unimodal, so that the equal-tailed credible intervals
were very close to the minimum-length credible intervals. We used
equal-tailed credible intervals to give fair comparison with the
EigenPrism CIs, which are also equal-tailed. The interval widths plotted
all have nominal coverage of 80\%. BCI endpoints were estimated by
empirical quantiles of posterior draws from a Gibbs sampler, and thus
we were able to much more accurately estimate the 10th and 90th
percentiles than, say, the 2.5th and 97.5th percentiles.

\section{Construction of correlated-column covariance matrices}
\label{corrMat}
Dense 10\% Correlations used a covariance matrix with ones on the
diagonal and 0.1's as all the other entries. The Sparse $100\cdot P\%$
Correlations used alternating $P$ and $-P$ as off-diagonal entries in
a correlation matrix, then
projected that matrix into the positive semidefinite cone and
reset the diagonal entries to 1. The resulting matrix has
approximately 1/4 of its entries equal to $P$, 1/2 of its entries
equal to 0, and 1/4 of its entries equal to $-2\times 10^{-4}\cdot (1-P)$. 

\section{Processing of NFBC1966 dataset}
\label{processing}
Genotype features from the original data set were removed if they met any of the
following conditions:
\begin{itemize}
\item Not a SNP (some were, e.g., copy number variations)
\item Greater than 5\% of values were missing
\item All nonmissing values belonged to the same nucleotide
\item SNP location could not be aligned to the genome
\item A $\chi^2$ test rejected Hardy-Weinberg equilibrium at the 0.01\%
  level
\item On chromosome 23 (sex chromosome)
\begin{comment}
\item Recombination rate data from 1000 Genomes Project did not
  contain SNP location
\item The random subset of samples for covariance estimation contained
  no minor alleles
\end{comment}
\end{itemize}
The remaining missing values were assumed to take the major allele
value (thus were coded as 0's in the pre-centered design
matrix). 

For each trait, further processing was performed on the
subjects. Triglycerides, BMI, insulin, and glucose were all
log-transformed. C-reactive protein was also log-transformed after
adding 0.002 mg/l (half the detection limit) to 0 values. Subjects
were excluded from the triglycerides, HDL and LDL cholesterol,
glucose, and insulin analyses if they were on diabetic medication or
had not fasted before blood collection (or if either value was
missing). Further subjects were excluded from the triglycerides, HDL
and LDL cholesterol analyses if they were pregnant or if their
respective phenotype measurement was more than three standard
deviations from the mean, after correcting for sex, oral contraceptive
use, and pregnancy. Subjects whose weight was not directly measured
were excluded from BMI analysis. Of course any missing values in each
phenotype were also excluded.

\end{document}